\begin{document}
\title{Randomized Graph Cluster Randomization\thanks{Authors are listed in alphabetical order. We thank Guillaume Basse, Dean Eckles, and Aaron Sidford for valuable discussions, as well as seminar participants at 2019 MIT Conference on Digital Experimentation. This work was supported in part by NSF grant IIS-1657104.}}
\author{
Johan Ugander\thanks{Department of
Management Science and Engineering, Stanford University, Stanford,
CA 94305. E--mail: \texttt{jugander@stanford.edu}.} 
\and Hao Yin\thanks{Institute for Computational and Mathematical Engineering, Stanford University, Stanford, CA 94305. E--mail: \texttt{yinh@stanford.edu}.}
}
\date{\today}
\maketitle


\abstract{
The global average treatment effect (GATE) is a primary quantity 
of interest in the study of causal inference under network interference.
With a correctly specified exposure model of the interference, 
the Horvitz-Thompson (HT) and H\'ajek estimators of the GATE 
are unbiased and consistent, respectively, yet known to 
exhibit extreme variance under many designs and in many settings of interest. 
With a fixed clustering of the interference graph,
graph cluster randomization (GCR) designs have been shown to greatly reduce variance compared 
to node-level random assignment, but even so the variance is still often prohibitively large.
 
In this work we propose a randomized version of the GCR design, descriptively named
randomized graph cluster randomization (RGCR),
which uses a random clustering rather than a single fixed clustering.
By considering an ensemble of many different cluster assignments, this design 
avoids a key problem with GCR where a given node is sometimes ``lucky'' or ``unlucky'' 
in a given clustering.
We propose two inherently randomized graph decomposition algorithms
for use with RGCR designs, \emph{randomized $3$-net} and \emph{1-hop-max}, 
adapted from prior work on multiway graph cut problems and 
the probabilistic approximation of (graph) metrics.
We also propose weighted extensions of these two algorithms with 
slight additional advantages.

When integrating over their own randomness, all these algorithms furnish network exposure probabilities that can be estimated efficiently. 
We develop upper bounds on the variance of the HT estimator of the GATE under assumptions on the metric structure of the graph driving the interference. Where the best known variance upper bound for the HT estimator under a GCR design is exponential in the parameters of the metric structure, we give a comparable variance upper bound under RGCR that is instead polynomial in the same parameters. 
We provide extensive simulations comparing RGCR and GCR designs, observing substantial reductions in the mean squared error for both HT and H\'ajek estimators of the GATE in a variety of settings.
}

\clearpage
\tableofcontents
\clearpage

\section{Introduction}

Interest in the design and analysis of randomized experiments under interference has accelerated in recent years~\cite{hudgens2008toward,fienberg2012brief,aronow2017estimating,savje2017average,chin2018central,jagadeesan2020designs}, motivating work on efficient estimators of the global average treatment effect (GATE) \cite{eckles2017design,saint2019using,chin2019regression}. GATE estimation seeks to understand the difference between placing {\it all} units in treatment vs.~placing {\it all} units in control, a natural estimand capturing the full average treatment effect net of all ``network effects.'' 
A major motivation for studying the GATE comes from experiments run on online social networking platforms~\cite{saveski2017detecting,pouget2018optimizing,pouget2019testing} and online marketplaces~\cite{johari2020experimental,hathuc2020counterfactual}, where the interactions are either between social relations or between marketplace competitors. In these settings
a platform designer typically has full control over treatment assignments and is specifically
interested in understanding which condition, when assigned to all units, has the best average outcome.
  

In the case of a binary intervention, a so-called A/B test of treatment versus control, 
the GATE is defined as the difference between the average of outcomes when all individuals 
are exposed to the treatment condition vs.~when all individuals are exposed to control.
Formally, let $\bZ \in \{0,1\}^n$ be a length-$n$ vector representing the
treatment assignment of a population of $n$ individuals, where the value of 1 and 0 
corresponds to treatment and control, respectively. Let $Y_i(\bZ = \bz)$ be 
the $i$-th individual's outcome or response; the mean outcome of all units to $\bZ$ is
$$
\mu(\bz) \triangleq \frac{1}{n} \sum_{i=1}^n Y_i(\bZ = \bz),
$$
and the GATE is then $\tau \triangleq \mu(\bOne) - \mu(\bZero)$.

Exact measurement of the GATE is not possible because the scenarios $\bz=\bOne$ and $\bz=\bZero$ are strongly counterfactual: it is not possible to simultaneously observe the entire population in treatment and the entire population in control. The GATE is typically estimated through randomized experiments, but to connect the outcome of a randomized experiment with the GATE,
assumptions are required to make $\mu(\bz)$ identifiable, e.g.~the \emph{no interference}~\cite{cox1958planning} or \emph{stable unit treatment value assumption} (SUTVA)~\cite{rubin1974estimating}.
However, in many situations there is unavoidable interference between individuals, in the sense that their outcome depends directly on the treatment or outcome of others. 
In the presence of interference,  estimators derived under the SUTVA assumptions are generally biased~\cite{sobel2006randomized, aronow2017estimating}. 
A variety of alternative assumptions have been made in attempts to bring reasonable power to potential outcome inferences under interference, including monotonicity assumptions on the individual treatment effect~\cite{manski2013identification, choi2017estimation, eckles2017design, pouget2018optimizing}. In this work, we don't require a monotonicity assumption for our results to hold, but instead commit to an exposure model framework~\cite{manski2013identification,sussman2017elements,forastiere2020identification}. 

In prior efforts to estimate the GATE, a promising approach has been to replace the SUTVA assumption with a less restrictive {\it exposure model}~\cite{manski2013identification,aronow2017estimating,ugander2013graph}. 
An exposure model identifies, for each unit $i$, the condition when the unit has the same response as
if all units are assigned to treatment or control. We use $E_i^\bz$ to denote the events---defined by subsets of the space of global assignment vectors, to be formally specified later on---where node $i$ responds as if exposed to global treatment ($\bz=\bOne$) or global control ($\bz=\bZero$). For network experiments, $E_i^\bOne$ and $E_i^\bZero$ then capture conditions under which we consider $i$ to be ``network exposed to treatment'' vs.~``network exposed to control''. Throughout this work we will focus our attention on the \emph{full-neighborhood exposure model}, discussed further in Section~\ref{subsec:exposure}.

The Horvitz-Thompson (HT) estimator~\cite{horvitz1952generalization} of the mean outcomes $\mu(\bz)$ is
\begin{equation}   \label{Eq:Est-MO-GCRfirst}
\hat \mu(\bz)   = \frac 1 n \sum_{i=1}^n \frac{\indic{E_i ^\bz}\cdot Y_i(\bz)}{\prob{E_i^\bz}},
\end{equation}
and consequently the HT estimator for the GATE is $\hat \tau = \hat \mu(\bOne) - \hat \mu(\bZero)$. 
Arronow and Samii have shown that, assuming the exposure model 
is properly specified, a standard consistency assumption on the potential outcomes~\cite{vanderweele2009concerning}, 
and that the probability of every node being network exposed to treatment and control is positive, 
then the estimators $\hat \mu(\bOne)$, $\hat \mu(\bZero)$, and $\hat \tau$  are unbiased~\cite{aronow2017estimating}. 

While we focus our analysis of GATE estimation on HT estimators, some of our results extend to the related H\'ajek estimator~\cite{hajek1971comment}, also called the self-normalized estimator~\cite{tukey1956conditional,swaminathan2015self}, of the mean outcome
\begin{equation}   \label{Eq:hajek1}
\tilde{\mu} (\bz) = 
\left(\sum_{i=1}^n \frac{\indic{E_i ^\bz}}{\prob{E_i^\bz}}\right)^{-1}
\sum_{i=1}^n \frac{\indic{E_i ^\bz}\cdot Y_i(\bz)}{\prob{E_i^\bz}},
\end{equation}
with the H\'ajek GATE estimator taking the form $\tilde \tau = \tilde \mu(\bOne) - \tilde \mu(\bZero)$. 
Notice that the \hajek\ and HT estimators utilize the same exposure probabilities for a given design.
The H\'ajek estimator is typically biased but often preferable to the HT estimator under a strong bias--variance trade-off.  

Under independent node-level Bernoulli($p$) randomization---where units are assigned tor treatment with probability $p$ and control with probability $(1-p)$---the variance of the HT GATE estimator quickly blows up if there are units $i$ for which the exposure conditions $E_i^\bOne$ or $E_i^\bZero$ require many independent assignments to all come up heads or all come up tails. For exposure models such as full-neighborhood exposure, where a unit and all of its network neighbors must be assigned to treatment together, $\prob{E_i^\bOne}$ and/or $\prob{E_i^\bZero}$ then quickly become very, very small.

The \emph{graph cluster randomization} (GCR)~\cite{ugander2013graph} experimental design scheme  
was proposed to combat this issue. 
Given a fixed clustering of the graph, i.e., the set of nodes has been partitioned into disjoint \emph{clusters},
GCR jointly assigns all nodes within each cluster into either treatment or control together. This randomization design can be viewed as a correlation imposed on the way in which assignment vectors $\bZ \in \{0,1\}^n$ are drawn, correlating neighbors in the graph  with the goal of broadly increasing the collections of probabilities $\prob{E_i^\bOne}$ and $\prob{E_i^\bZero}$, for all nodes $i$, for a given exposure model.   
GCR can be shown to achieve a considerable variance reduction under certain settings compared to independent assignment.
Eckles \etal \cite{eckles2017design} evaluated GCR for GATE estimation and showed that it reduces bias, variance, and mean squared error (MSE) 
in scenarios where there is a strong direct treatment effect and network spillover. 
However, they found that it often still exhibits considerable MSE, which can then exceed the MSE of independent assignment when spillover effects are small.

The GCR scheme operates using a pre-specified fixed clustering assignment,
and a known problem with GCR is that, informally, a node can get ``unlucky'' in the fixed cluster assignment,
adjacent to many clusters.
For such unlucky nodes, the probability of network  
exposure to treatment or control is then very low under GCR with that clustering, which greatly inflates the variance of the HT GATE estimator $\hat \tau$.
Therefore, even though GCR has been shown to theoretically give considerable variance reductions compared with node-level randomization, the variance can still be very, very large.
Another disadvantage of GCR is the incompatibility with {\it complete
randomization} at the cluster level due to a violation of the positivity
assumption required by both the HT and \hajek\ GATE estimators.

We propose an extension of the GCR scheme whereby the graph cluster randomization is itself based on a randomized clustering. We descriptively call this scheme \emph{randomized graph cluster randomization (RGCR)}. We find that RGCR can greatly reduce the variance of the HT GATE estimator both in theory and in extensive simulations, compare to ordinary GCR. 
Further simulations using the \hajek\ GATE estimator, while lacking theoretical support, show that it too
benefits from RGCR (vs.\ GCR) and is often preferable to the HT estimator for a given design.
Most importantly,
we find that these variance reductions are considerable enough to bring RCGR into the realm of being ``useful'' in 
many situations where GCR would fail to deliver a GATE estimate with actionable MSE.

\begin{figure}[t] 
   \centering
\begin{minipage}{0.35\linewidth}\centering
   \includegraphics[width=1.3in]{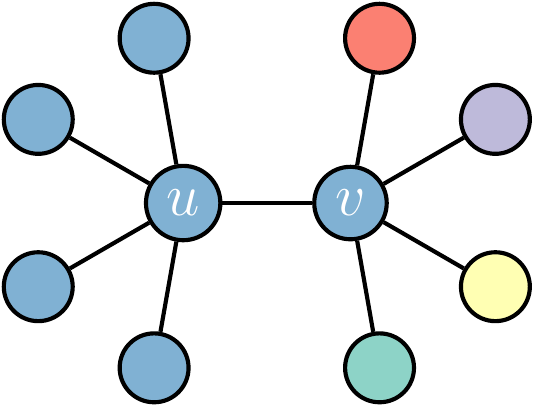} 
   \\{\small clustering $\bc_1$   }
\end{minipage}
\begin{minipage}{0.35\linewidth}\centering
   \includegraphics[width=1.3in]{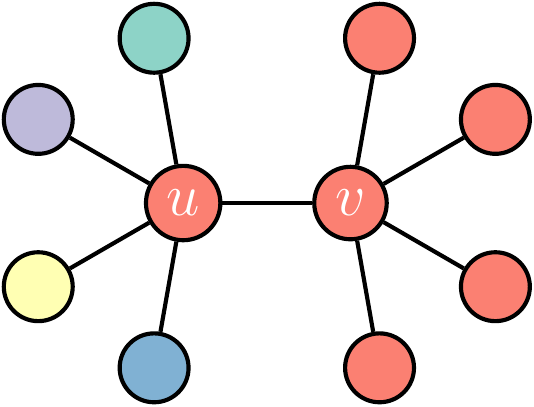} 
   \\{\small clustering $\bc_2$   }
\end{minipage}
\caption{An illustration of variance reduction with randomized graph cluster randomization (RGCR), considering two different clusterings $\bc_1$ and $\bc_2$, where colors denote clusters. For the given network, consider GCR with a fixed clustering and clusters assigned to treatment or control with probability $p=1/2$. 
The full-neighborhood exposure probabilities of nodes $u$ and $v$ 
are either $2^{-1}$ and $2^{-5}$ (under $\bc_1$) or $2^{-5}$ and $2^{-1}$ (under $\bc_2$) respectively, contributing $2 + 2^5 = 34$ to the variance of the 
HT estimator of the GATE. In contrast, when randomizing evenly between $\bc_1$ and $\bc_2$, 
the exposure probabilities of $u$ and $v$ both become $(2^{-1} + 2^{-5})/2$,
contributing $2 / (2^{-1} + 2^{-5}) * 2 \approx 7.5$ to the variance.
}
\label{Fig:mix_two_partitions}
\end{figure}

The intuition that motivates using a random cluster partition is illustrated in~\Cref{Fig:mix_two_partitions}. Essentially, when averaging across different cluster assignments, 
the distribution of individual network exposure probabilities $\prob{E_i^\bz}$ will be less skewed because different nodes will be ``unlucky'' in different clusterings.
Averaging across many clusterings washes out extremely small probabilities, greatly reducing GATE  estimator variance.

One can consider two approaches to randomized graph clustering. First, consider employing a uniform mixture of $K$ graph clusterings, each obtained via a (potentially different) black box clustering algorithm. In this setting, we can compute the exposure probabilities (needed for the HT and H\'ajek estimators of the GATE) simply by averaging the exposure probabilities across clusterings. That said, computing many clusterings of a large graph can be very computationally expensive. As a more appealing approach, we consider employing inherently randomized graph clustering algorithms where it is potentially tractable to consider the exposure probabilities when integrating over the full randomness of the algorithm. For at least one of the algorithms we consider in this work, \emph{randomized 3-net}, we show that the exact computation of the full-neighborhood exposure probabilities is NP-hard. Even so, we are able to construct Monte Carlo estimators of the probabilities with relative errors that can be bounded at a reasonable computational cost. The Monte Carlo estimation procedure we employed is practically equivalent to generating $K$ clusterings from the randomized algorithms and then averaging, but we do not need to store all $K$ clusterings at any point.

\xhdr{Mulit-way cuts and randomized partitioning} 
The randomized clustering algorithms we analyze in this work stem from the literature on probabilistic approximations of graph metrics. Randomized graph decompositions have a rich history \cite{linial1993low} originally driven by interests in distributed graph computations~\cite{alon1986fast,miller2013parallel}. The algorithm we call \emph{1-hop-max} is closely related to the CKR partitioning algorithm~\cite{calinescu2005approximation}, developed as an approach to the 0-extension problem~\cite{karzanov1998minimum}, a metric generalization of the multi-way cut problem on graphs~\cite{dahlhaus1992complexity}. Our 1-hop-max algorithm runs the CKR algorithm with centers (or ``terminals'') selected at random, as is also done in the closely related FRT algorithm for metric approximation~\cite{fakcharoenphol2004tight}, and with a fixed radius of one. The other algorithm we consider, \emph{randomized $3$-net clustering}, comes from the related literature on metric approximation in bounded geometries~\cite{gupta2003bounded} with applications to nearest neighbor search~\cite{karger2002finding}. Graph cluster randomization with a fixed $3$-net clustering was previously analyzed in the original work on GCR~\cite{ugander2013graph}. In the randomized setting of RGCR, we find 1-hop-max more amenable to theoretical analysis, while simulations indicate that RGCR with 1-hop-max and randomized $3$-net do comparably well in diverse settings.

\xhdr{Restricted growth conditions} 
The conceptual notion of a (graph) metric with bounded geometry is very useful for considering the design of good clustering algorithms for social networks, 
as social networks arguably exhibit a version of bounded growth.
Let $G=(V,E)$ be a graph,  
$\dmax = \max_{i \in V}\{\degree{i}\}$ denote the maximum degree, 
$\dist(i,j)$ the shortest path distance on $G$, 
and let $B_r(i) = \{j \in V \mid \dist(i,j) \leq r\}$ for $r>0$
denote the $r$-hop neighborhood of node $i$, also sometimes called the $r$-ball at node $i$.

As a motivating empirical observation, 
due to apparent tendencies towards clustering,
the size of social network neighborhoods $\lvert B_r(i)\rvert$ 
tend to grow slower than $(\dmax)^r$ in $r$ \cite{ugander2011anatomy}.
There are two ways to operationalize this empirical tendency.
First, borrowing a definition from the literature on metric approximation~\cite{karger2002finding}, 
one could consider experimental designs that perform well under
a condition of \emph{bounded growth}, whereby
there is a constant $\eta > 0$ such that
$$\lvert B_{2r}(i)\rvert \leq \eta \lvert B_r(i)\rvert, \forall r \ge1,$$ 
for all nodes $i$. 
Second, the original GCR work identified and developed results under 
a less restrictive metric property of 
\emph{restricted growth}~\cite{ugander2013graph},
which assumes there is a constant $\kappa > 0$ such that
$$\lvert B_{r+1}(i)\rvert \leq \kappa \lvert B_r(i)\rvert, \forall r\geq 1,$$
for all nodes $i$.
Notice that bounded growth implies restrictive growth. 
The constants $\eta$ and $\kappa$ here are called the
\emph{bounded growth} and \emph{restrictive growth coefficients}, respectively.
We emphasize that both of these definitions start at a radius of $r \ge 1$, 
and thus we do not require any relationships to hold between $B_0$ and $B_1$ 
(otherwise we would have $\kappa = 1 + \dmax$), 
and it can be easily verified that $\kappa \leq \dmax$.
Our goal, building on the initial analysis of GCR, is to exploit degree bounds and/or restricted growth structure to design algorithms that work provably well when $\dmax$ and/or $\kappa$ are modest. 

We note that a separate approach to causal inference under network interference has recently assumed metric growth conditions of a slightly different variety~\cite{leung2019causal}. That work follows recent work on limit theorems for network-dependent random variables where growth conditions appear as part of sufficient conditions~\cite{kojevnikov2019limit}.

\xhdr{Bounded geometry of empirical social networks}
While bounded geometry assumptions play a central role in the previous theoretical analysis of GCR~\cite{ugander2013graph}
and other recent work~\cite{kojevnikov2019limit,leung2019causal}, 
the empirical growth rates of $r$-balls in social networks has not been well-documented. 
The average degree, degree distribution, and path-length distribution of 
large-scale social networks have all been the subject of extensive empirical 
investigations~\cite{leskovec2008planetary,backstrom2012four,ugander2011anatomy},
with the path length distribution being the central object of study in the large literature
on ``degrees of separation'' inspired by Milgram \cite{travers1969experimental}. 
Less attention has been given to the empirical structure of neighborhood sizes in at different distances, though 
some intuition for the relationship between friend counts and friend-of-friend
counts can be derived from prior work~\cite{ugander2011anatomy,radaelli2018quantifying,su2016effect}.

Our empirical analysis, given in Appendix~\ref{app:growth}, documents that for Facebook college social networks
$\kappa$ is typically on the order of $25-50\%$ of $\dmax$. 
As an aside, recall that the coefficient $\kappa$ describes a worst-case coefficient. We observe that $\kappa$ is typically pushed up by a few bad nodes where, 
e.g., a degree-1 node $u$ is connected to a high degree node, making $|B_2(u)| /|B_1(u)|$ very high and thus $\kappa$ high for the graph as a whole. 
As part of Appendix~\ref{app:growth} we investigate the empirical growth of $r$-balls in fine-grained detail. It's possible that new
paths forward for studying estimators (and limit theorems~\cite{kojevnikov2019limit}) on social networks may be more suited to an alternative formulation of restricted growth, not yet formulated.

\xhdr{Bounds on the HT variance for the GATE} 
Our main theoretical result is to show that under a restricted growth condition (``$\kappa$''), RGCR delivers qualitatively better bounds on the HT variance compared to GCR (which is already known to be qualitatively better than independent randomization). More specifically, in a graph on $n$ nodes with restricted growth coefficient $\kappa$ and cluster assignment probability $p$, previous results~\cite{ugander2013graph} have shown that  the variance of the HT estimator of $\hat \mu(\bOne)$ under GCR with a fixed $3$-net clustering is upper bounded by 
$$
\var{\hat \mu(\bOne)} \le \frac{1}{n} \dmax \kappa^5 p^{- \kappa^6} \Theta(1),
$$ 
polynomial in the maximum degree but exponential in $\kappa$. In the absence of a restricted growth condition but in the presence of a max degree bound, a variance upper bound of
$\var{\hat \mu(\bOne)} \le \frac{1}{n} \dmax^{6}  p^{- \dmax} \Theta(1)$, 
can be obtained (via a more direct argument than one that sets $\kappa=\dmax$ above).
Returning to the setting of restricted growth, in this work we show that under RGCR with a randomized 1-hop-max clustering we can upper bound the variance by 
$$
\var{\hat \mu(\bOne)} \le \frac{1}{n} \dmax^2 \kappa^4 p^{-1} \Theta(1),
$$ 
polynomial in both $\kappa$ and $\dmax$. 
In the absence of a restricted growth condition but in the presence of a max degree bound, we obtain
$\var{\hat \mu(\bOne)} \le \frac{1}{n} \dmax^{6}  p^{- 1} \Theta(1)$.
We do not derive any theoretical bounds for the variance of the H\'ajek estimator, but our simulations (\Cref{sec:simulation}) explore the empirical behavior of the H\'ajek estimator extensively. 
 
The bounds on the variance of the HT GATE estimator $\hat \tau$ are analogous to these bounds for the mean outcome $\hat \mu(\bOne)$. The difference between these two variance bounds, under GCR vs.\ under RGCR, is striking both in the setting of a fixed modest $\kappa$ and in settings where $\kappa$ is on the order of $\dmax$. 
Recall that the latter setting is empirically quite common per analysis in \Cref{app:growth},
and our analysis furnishes an upper bound on the HT GATE variance under RGCR that is exponentially lower than the comparable bound under vanilla GCR.

The switch to 1-hop-max instead of $3$-net is for analytical convenience: the two algorithms are very similar, but once randomized, the distribution of clusterings produced by the randomized $3$-net algorithm are not as amenable to analysis. For comparison, non-randomized GCR with a single fixed 1-hop-max clustering has a HT variance upper bound of $\frac{1}{n} \dmax \kappa^3 p^{-\dmax} \Theta(1)$, exponential in the max degree (and thus worse than $3$-net when $\kappa$ is modest). A summary of our variance bounds for HT estimators is given in~\Cref{tab:summary_theory} in \Cref{sec:rgcr}, which also shows slightly improved bounds based on weighted variations of both 1-hop-max and randomized 3-net. In our work we do not perform any analysis under the weaker bounded growth condition (``$\eta$''), owing to the well-known fact that social networks have a very limited effective diameter~\cite{leskovec2007graph}, with the vast majority of node pairs appearing within a hop distance of six~\cite{backstrom2012four}, limiting the utility of a bounded relationship between $B_{2r}$ and $B_r$.

The connection between existing techniques for optimizing randomized graph decompositions and designing low-variance network experiments is intuitive---both problems aim to cut a graph into many small, well-separated parts---but we emphasize that at present the connection we make here is only intuitive. Minimizing the variance of the HT estimator of the GATE, as an objective, is not merely a matter of finding a good graph cut in any traditional sense. To this point, our successful theoretical analysis not of randomized $3$-net but of $1$-hop-max (which is related to CKR partitioning~\cite{calinescu2005approximation}) under a restricted growth condition stands in contrast to the metric approximation literature, where $r$-net algorithms are those that yield a powerful analysis under growth restrictions~\cite{gupta2003bounded}.

\xhdr{Curse of large clusters}
The GCR scheme suffers from large variance 
when nodes are connected to many clusters. A naive solution 
to this specific problem would be 
to partition the network into only a few, say $K$, clusters
where $K = O(1)$ is pre-specified and independent of the size of the network.
However, such an approach fails when the nodes' outcome exhibits
homophily or some other global drift pattern such that nodes at a short distance 
have similar response outcomes. 
If there is significant difference in the response of nodes in different clusters,
and only a few ($K=O(1)$) clusters,
then the observed difference $\hat \tau = \hat \mu(\bOne) - \mu(\bZero)$ 
will be sensitive to this cluster-level variation, with additional variance incurred
that does not then decay with the size of the network.

For the RGCR scheme, in \Cref{sec:ring} we show that this issue persists, and using a random clustering
with large clusters (of size $\Theta(n)$, so $K=O(1)$) prohibits the HT estimator variance from
converging to zero even as $n \rightarrow \infty$. Specifically, analyzing a ring network
where the optimal balanced $K$-partitions are obvious,
when selecting one of the optimal balanced $K$-partition uniformly at random in RGCR,
we show that
\[
\var{\hat \tau} \rightarrow \Omega\left( \frac{b^2}{K} \right)
\]
as $n \rightarrow \infty$, where $b$ is a homophily-like measure of the magnitude 
in the cluster-level average different in nodes' response.
Therefore, if $b > 0$ and $K = O(1)$, then we have $\var{\hat \tau} = \Omega(1)$
even with $n \rightarrow \infty$.
This result provides an important insight on the choice of random clustering
used in RGCR scheme: the number of clusters in the output random clustering
should increase with the number of individuals to let $\var{\hat \tau} \rightarrow 0$ 
as $n \rightarrow \infty$,
a necessary condition on the random clustering strategy.
Consequently, various graph clustering algorithms
such as spectral partitioning~\cite{spielman1996spectral,shi2000normalized,lee2014multiway}, 
balanced label propagation~\cite{ugander2013balanced},
or reLDG~\cite{nishimura2013restreaming,saveski2017detecting} are 
not good clustering strategies for RGCR if the number of clusters in the output
is small.

\xhdr{Simulations} 
Extending our analysis beyond theoretical results on variance bounds under bounded geometries, 
we provide a extensive simulation-based analysis of various RGCR schemes. 
We vary many aspects of the simulation to understand the efficacy of RGCR-based
experiments for GATE estimation. We observe dramatic variance reduction
for the HT GATE estimator used RGCR compared to GCR, bringing a useless variance ($~10^{50}$) down to 
a potentially useful variance ($~10^0$). We vary the structure of the underlying network,
the randomized clustering algorithm,
the possible weighting used in the algorithm, whether randomization is independent or complete, 
and whether the estimator is HT or H\'ajek. 

A specific innovation in our simulations is a rich graph-aware response model,
exhibiting both degree-correlated responses and homophily in responses. 
Specifically, if two nodes have short graph distance, their responses
tend to be close, resembling responses in many real-world settings \cite{mcpherson2001birds}
not captured in typical response models used in beyond-SUTVA simulations. 
Note that a failure to capture homophily in the response model can result in
preferring a random clustering algorithms that generates few large clusters, 
concealing the issue of large clusters as developed in the previous discussion
and presented more fully in \Cref{sec:ring}.
In our response model, homophily is added to the model using techniques from 
spectral graph theory~\cite{von2007tutorial}, constructing a (non-constant) function on the 
node set where responses of graph neighbors are similar.

We find that for both HT and H\'ajek estimator
of the GATE, RGCR tends to dramatically improve on GCR in our rich simulations, 
while varied adjustments to
the specific RGCR scheme can have additional gains.  
We find a RGCR scheme using degree-weighted randomized $3$-nets with complete randomization 
to generally be the lowest variance.

\xhdr{Paper roadmap} 
The remainder of this paper is organized as follows. 
After a detailed introduction to preliminary definitions in \Cref{sec:preliminary}, 
we formally propose the RGCR scheme in \Cref{sec:rgcr}.
In \Cref{sec:properties} we develop key theoretical 
properties of RGCR (e.g., variance reduction) under HT estimation, 
 with a focus on the two families of random clustering algorithms we consider in this work, 
 the 3-net and 1-hop-max algorithms, as well as their weighted variants.
We also discuss the bias of the related \hajek~estimator under RGCR.
In \Cref{sec:ring}, we formalize a theory for the curse of large clusters,
which provides a necessary condition on the random clustering
algorithm for the variance to converge to zero as a network grows large.
In \Cref{sec:simulation} we provide extensive simulation results comparing
different RGCR and GCR schemes.
\Cref{sec:conclusion} concludes.

\section{Preliminaries}
\label{sec:preliminary}

\subsection{Networks and growth rates}

Throughout this work we will consider interference in 
network settings as modeled by an undirected, unweighted network $G = (V, E)$, dubbed the \emph{interference graph},
where the node set $V = \{1, 2, \dots, n\}$ represents the units/individuals and 
$E$ is the collection of edges that represent pairwise response
dependencies that underly the interference. 
For each individual $i$, let $\nbr{i}$ be the set of its neighbors
on the network, and $\degree{i} \triangleq \lvert \nbr{i} \rvert$ be its \emph{degree}.
We use $\dmax = \max_{i \in V}\{\degree{i}\}$ to denote the maximum degree
of all nodes in the network. 
A natural distance between a pair of nodes $i$ and $j$ on network $G$ is the 
\emph{shortest path distance} denoted as $\dist(i,j)$, \ie, the length of the shortest path connecting them.
With a positive integer radius $r > 0$, we use $B_r(i) = \{j \in V \mid \dist(i,j) \leq r\}$ 
to denote the $r$-hop neighborhood of node $i$.
For example, with $r=1$, $B_1(r)$ contains node $i$ itself and all its neighbors, and thus $\lvert B_1(i)\rvert = 1 + \degree{i}$.

Throughout this work we make broad use of the idea of a decomposition of a graph into  
\emph{clusters}.
A \emph{clustering} is a partition of all nodes in the network into
some non-overlapping clusters, which is also referred as a \emph{partition}.
We denote a clustering as a vector $\bc = [c_1, \dots, c_n] \in \real{n}$
such that nodes $i$ and $j$ belongs to the same cluster if and only if $c_i = c_j$.
Ideally clusters are internally densely connected while relatively separated from 
the rest of the network, though our definitions require no such thing.

\subsection{GATE estimation under exposure models}
\label{subsec:exposure}
In many online and social settings, the presence of interference introduces bias
in the estimation of global average treatment effects if a no-interference assumption,
\eg~SUTVA, is incorrectly specified. 
More relaxed assumptions than SUTVA can be made that, if correct, can enable reasonable inference.
As a first example, the class of
\emph{constant treatment response} (CTR)~\cite{manski2013identification} 
assumptions identify, for each individual $i$, an \emph{effective treatment mapping}
$g_i$ that captures equivalence classes of the global assignment vectors $\bz$: if $g_i(\bz_1) = g_i(\bz_2)$
for two global assignments $\bz_1$ and $\bz_2$, then $Y_i(\bz_1) = Y_i(\bz_2)$.
SUTVA is a special case of CTR with $g_i(\bz) = z_i$, \ie, where each individual's
response depends only on the treatment assignment of itself.

The \emph{neighborhood treatment response} (NTR)~\cite{aronow2017estimating}
assumption is another case of a CTR assumption, which allows some treatment-based
spill-over effect: for any two global assignments $\bz_1$ and $\bz_2$, 
$g_i(\bz_1) = g_i(\bz_2)$ if $\bz_1[B_1(i)] = \bz_1[B_1(i)]$, \ie, an individual's
response depends only on the treatment assignment of itself and its neighbors.
Consequently, individuals generate the same response as under the global treatment 
($\bz = \bOne$) assignment (a condition termed \emph{network exposed to treatment})
if they and all their neighbors are assigned to the treatment group;
similarly, they generate the same response as under the global control
($\bz = \bZero$) assignment (a condition termed \emph{network exposed to control}) 
if they and all their neighbors are assigned to the control group.
Ugander \etal termed this pair of network exposure conditions as the
\emph{full-neighborhood exposure model}, and other more relaxed neighborhood 
exposure models have also been 
discussed~\cite{manski2013identification,ugander2013graph}. 

In this work we focus on
the full-neighborhood exposure model due to it being the most restrictive 
neighbor exposure model. It greatly simplifies our theoretical analysis, relative to
other more complicated exposure models, while still providing conclusions
that generalize, at least at the level of intuition, to more relaxed neighborhood
exposure models .
Throughout this work we use the events $E_i^\bz$ specifically for full-neighborhood exposure, letting 
$E_i^\bz$ denote the event (a subset of the global assignment vectors in $\{0,1\}^n$)
where node $i$ is network-exposed to treatment ($\bz=\bOne$) or control ($\bz=\bZero$).

Both the Horvitz-Thompson (HT) and H\'ajek estimators require the following positivity
assumption on the network exposure probabilities in order to be well-defined.
\begin{assumption}\label{assm:positive}
At every node $i$ and for both $\bz \in \{\bOne, \bZero\}$,
the network exposure probability is positive: $\prob{E_i ^\bz} > 0$.
\end{assumption}
Aronow and Samii have shown that 
assuming the exposure model is properly specified and a standard consistency assumption on the potential outcomes applies,
the estimators are unbiased.
They derive the variance of the HT estimators under these assumptions~\cite{aronow2017estimating}. 
Specifically, the variance of the HT estimator of the mean outcome, $\hat \mu(\bz)$, is
\begin{equation}   \label{Eq:var-mean_outcome}
\begin{array}{rcl}
\var{\hat \mu(\bz)} &=& \textstyle \frac{1}{n^2} \left[ \sum_{i=1}^n \left( \frac{1}{\prob{E_i ^\bz}} - 1 \right) Y_i(\bz)^2  \right.
\\ & & \qquad \left. +  \sum_{i=1}^n  \sum_{j = 1, j \neq i} ^n \left(\frac{\prob{E_i ^\bz \cap E_j ^\bz}}{\prob{E_i ^\bz} \prob{E_j ^\bz}}- 1 \right)Y_i(\bz)Y_j(\bz) \right],
\end{array}
\end{equation}
for $\bz = \bOne, \bZero$, and the variance of GATE estimator is then
\begin{equation}   \label{Eq:var-GATE}
\var{\hat \tau} = \var{\hat \mu(\bOne )} + \var{\hat \mu(\bZero)} - 2 \cdot \cov{\hat \mu(\bOne), \hat \mu(\bZero)},
\end{equation}
where the covariance is
\begin{equation}   \label{Eq:covar-mean_outcome}
 \cov{\hat \mu(\bOne), \hat \mu(\bZero)} =  \frac{1}{n^2} \left[ \sum_{i=1}^n  \sum_{j = 1, j \neq i} ^n \left(\frac{\prob{E_i ^\bOne \cap E_j ^\bZero}}{\prob{E_i ^\bOne} \prob{E_j ^\bZero}}- 1 \right)Y_i(\bOne)Y_j(\bZero) - \sum_{i=1}^n  Y_i(\bOne) Y_i(\bZero) \right].
\end{equation}

The variance of the H\'ajek estimator can be approximated via a standard Taylor series linearization~\cite{sarndal2003model}.
In this work we do not derive any theoretical results for the variance of the H\'ajek estimator. When the variance
of the H\'ajek estimator is studied in \Cref{sec:simulation}, it is estimated from extensive simulations.

\subsection{Graph Cluster Randomization (GCR)}
The network exposure probabilities $\prob{E_i ^\bz}$, as well as
the joint exposure probabilities $\prob{E_i ^{\bz_1} \cap E_j ^{\bz_2}}$,
are properties of the experimental design. 
With node-level independent randomization, where we assign each node into the treatment
or control group independently, the exposure probability of each node
is exponential to the node degree, and thus it can be extremely small in a large network
with high-degree nodes. The variance of HT estimator is a monotone decreasing function 
in any single exposure probability, meaning that small probabilities beget large variances.  
As a result, the HT estimator variance can be exponentially
large in the largest degree $\dmax$ and not practical~\cite{ugander2013graph}.

To overcome the issue of exponential variance, Ugander \etal proposed to randomize
at the cluster level, the \emph{Graph Cluster Randomization} scheme~\cite{ugander2013graph}:
with a clustering $\bc$ of the network, one can jointly assign all nodes in each cluster
into the treatment or control group. 
A definition of HT estimator for $\mu(\bz)$ was given in the introduction, but restating it more formally in the context of GCR,
\begin{equation}   \label{Eq:Est-MO-GCR}
\hat \mu _{\bc}(\bz)   =  \frac 1 n \sum_i \frac{\indic{E_i ^\bz}\cdot Y_i(\bz)}{\prob{E_i ^\bz \mid \bc}},
\end{equation}
where the subscript indicates that this estimator is based on the design associated
with clustering $\bc$. Under this design, the exposure probability
of each node is exponential not in its degree, but in the number of clusters intersecting
with its 1-hop neighborhood, and thus should reduce the variance in the HT estimators if a reasonable clustering is in use.
Specifically, Ugander \etal show that, if the clustering is generated from the 3-net clustering algorithm,
and the graph satisfies the restricted growth condition with coefficient $\kappa$, 
then the variance is upper bounded by a linear function of the maximum degree of the graph:
\begin{equation}   \label{Eq:GCRVarUB}
\var{\hat \mu_{\bc}(\bOne)} \leq \frac{1}{n} \cdot  \dmax \kappa^5 p^{-\kappa^6} \cdot \Theta(1).
\end{equation}

Despite significant variance reduction compared with node-level
independent randomization, the GCR scheme has one main disadvantage: 
the variance of estimation is still potentially enormous, due to the existence 
of extremely small exposure probabilities. With a single fixed clustering of the network,
a node may be ``unlucky" and directly connect to many clusters. 
For such node to be network exposed to treatment or control,
all the adjacent clusters have to be assigned into the treatment or control
group respectively, making the exposure probability exponentially small.

A naive solution to this issue would be to partition the network into only a few clusters,
so each node can be adjacent to at most the number of clusters in the clustering.
However, this solution is prohibited due to two concerns.
First, partitioning the network into few but large clusters makes the estimated
result very sensitive to network homophily, as discussed in \Cref{sec:ring}, introducing
an additional source of variance that does not decay with the network size.
Second, with just a few clusters, independent randomization at the cluster level
may cause significant imbalance in treatment/control assignment.
For example, with a bisection of the network, if each cluster is assigned independently
into the treatment group with probability $1/2$, then there is a 25\% chance
that both clusters (and consequently all nodes in the network) are assigned into the treatment group,
and we collect no information about the control condition.
To maintain balance with two clusters, one would need to assign the clusters
to opposite conditions (treatment, control), the method of complete randomization.

However, a secondary disadvantage of the GCR scheme is that it is incompatible
with complete randomization at the cluster level,
due to potential violation of the positivity assumption (Assumption~\ref{assm:positive}).
For example, with GCR with few clusters and complete randomization, a node 
connected to all the clusters will always have some neighbors in treatment and some in control,
making it impossible for that node to be full-neighborhood exposure to either
treatment or control.

\section{Randomized Graph Cluster Randomization}
\label{sec:rgcr}

In this section, we present the Randomized
Graph Cluster Randomization (RGCR) scheme of experimental design
and analysis. Different from the 
original Graph Cluster Randomization (GCR) approach~\cite{ugander2013graph} 
that is associated with a single fixed clustering $\bc$,
the RGCR scheme is based on random clusterings. 

Formally, let $\mathcal P$ be a \emph{random clustering generator}, \ie,
an algorithm whose output $\bC$ is a clustering of the input graph, and
the output is random. Without ambiguity of notation, we also use 
$\mathcal P$ to denote the distribution of the randomly generated clustering,
\ie, $\mathcal P(\bc)$ is the probability of the clustering $\bc$ being generated.
The design and analysis of the RGCR scheme are both tailored to the
random clustering generator $\mathcal P(\cdot)$, or equivalently,
the resulting distribution of random clusterings.

\xhdr{Design}
With a random clustering generator $\mathcal P$, the experimental design 
is based on a two-step process.
First, we realize a clustering $\bc$ from the random clustering $\bC$. 
Second, like in the GCR scheme, we perform treatment/control assignment at the cluster level, 
jointly assigning all nodes within each cluster of $\bc$ into the treatment group 
with probability $p$, or into control otherwise.

In the second step of the above cluster-level randomization, 
GCR assign each cluster using \emph{independent randomization}.
For RGCR, besides independent randomization, we also consider 
\emph{complete randomization}, where we further introduce stratification.
In the case of $p = 1/2$, we first stratify the clusters of $\bc$ into pairs, by size (measured by the number of nodes):
the two largest clusters are a pair, the third and fourth largest cluster are a pair,
and so on. We then assign each pair of clusters together, with one into the treatment
and the other into the control group. Complete randomization with other values of $p$
is implemented analogously.
Complete randomization guarantees an equal number of clusters in treatment and control, 
thereby balancing the number of individuals as well. Stratification further tightens this balance.

Balance guarantees are especially important when the clustering contains only few clusters.
For example, in the case of a clustering formed by a graph bisection, 
under independent randomization
the probability that both clusters are assigned into the treatment group or both assigned
into the control group is 0.5, 
an unpleasant scenario where we collect information about only the treatment group 
or only the control group. 
In contrast, with complete randomization we always have one cluster assigned 
to the treatment group and the other to the control group.
Moreover, complete randomization may increase $\prob{E_i ^\bOne \cap E_j ^\bZero}$
for distant nodes, which increases the covariance of $\hat \mu(\bOne)$ and $\hat \mu(\bZero)$
and thus further reduces variance according to \Cref{Eq:covar-mean_outcome,Eq:var-GATE}.
Such variance reduction is consistent with our observation
in our simulation in \Cref{sec:simulation}.

Under GCR, complete randomization can violate the positivity assumption.
For example,
if a node $i$ is adjacent to a pair of clusters that are determined to
be oppositely assigned into the treatment and control group,
then it is impossible for node $i$ to be full-neighborhood exposed 
to treatment or control, \ie, $\prob{E_i ^\bOne} = \prob{E_i ^\bZero} = 0$.
Without positivity, the HT estimators (\Cref{Eq:Est-MO-GCR})
are ill-defined. 
For RGCR, we highlight in \Cref{sec:expo_prob_unweighted} that as a consequence of 
Theorem~\ref{Thm:prob_LB}, RGCR using our randomized $3$-net and 
$1$-hop max clustering algorithms always satisfies node-level positivity for the
full-neighborhood exposure condition  (and related fractional conditions).

\xhdr{Analysis}
With both independent or complete randomization, the exposure probability 
of each node $i$ conditioned on the generated clustering $\bC=\bc$, \ie, 
$\prob{E_i ^\bz \mid \bC=\bc}$, can be computed as in the GCR scheme. 
While we focus on full-neighborhood exposure throughout this work, we 
note that this observation applies to, e.g., partial neighborhood exposure 
conditions~\cite{ugander2013graph} as well.
In the analysis phase of an RGCR experiment,
we use the exposure probabilities unconditional on the clustering in use,
which only depends on the clustering distribution $\mathcal P$.
Formally, since the random clustering in use is generated from the distribution $\mathcal P$,
the network exposure probability of each node $i$, due to the Law of Total Expectation, is
\begin{equation}   \label{Eq:ExpoProb_mix}
\prob{E_i ^\bz \mid \mathcal P} = \sum_{\bc} \mathcal P(\bc) \prob{E_i ^\bz \mid \bc} 
= \expect[\bc \sim \mathcal P]{\prob{E_i ^\bz \mid \bc}}.
\end{equation}
Consequently, the HT estimators are
\begin{equation}   \label{Eq:Est-MO-RGCR}
\hat \mu _{\mathcal P}(\bz)   =  \frac 1 n \sum_i \frac{\indic{E_i ^\bz}\cdot Y_i(\bz)}{\prob{E_i ^\bz \mid \mathcal P}},
\end{equation}
where $\bz = \bZero$ or $\bz=\bOne$, and 
$\hat \tau_{\mathcal P} = \hat \mu _{\mathcal P}(\bOne) - \hat \mu _{\mathcal P}(\bZero)$ is the 
HT estimator of the GATE $\tau$.
Here the subscript $\mathcal P$ emphasizes that the estimator is based on a distribution of clusterings.
The H\'ajek estimators for $\mu(\bZero)$, $\mu(\bOne)$, and $\tau$ are analogous, using the unconditional
exposure probabilities in place of the conditional probabilities.

\xhdr{Putting design and analysis together}
There are a number of important challenges in going from using a single fixed clustering to 
using a random clustering in the graph cluster randomization scheme. 
Not all randomized clustering algorithms are suitable for RGCR. In the next section we discuss key properties that make an algorithm suitable for RGCR, and show that randomized $3$-net and $1$-hop-max are both good algorithms in these regards. Most concretely, in the design phrase one needs to be able to efficiently generate a single random clustering to launch an experiment. As a complementary challenge in the analysis phrase, HT and H\'ajek estimators require per-node unconditional exposure probabilities, which may be more or less difficult to compute, depending on the randomized clustering algorithm used.  We discuss and compare properties of different random clustering strategies in the following section.

\section{Theoretical properties of RGCR}
\label{sec:properties}


\begin{table}[t]
\centering
\begin{tabular}{c  @{\hskip 3ex} c @{\hskip 5ex}  c @{\hskip 5ex}  c 
}
\toprule
clustering   & \multirow{2}{*}{scheme} &  $\prob{E_i ^\bOne}$  & $\var{\hat \mu(\bOne)}$  
\\ algorithm & & (lower bound) & (upper bound) 
\\ \midrule
--   & i.i.d. &  $p^{\dmax+1}$ & $\frac{1}{n} \dmax \kappa p^{-\dmax}$ 
\\ \midrule
\multirow{2}{*}{3-net}  
   & GCR    &  $p^{\kappa^6}$ & $ \frac{1}{n} \dmax \kappa^5 p^{- \kappa^6}$   
\\ \rule{0pt}{3ex}
   & RGCR &  $\frac{p}{(\dmax+1) \kappa }$ & -- 
\\ \rule{0pt}{3ex}
$\bw^*$-weighted   & RGCR &  $\frac{p}{\lambda^*}$ & -- 
   \vspace{1ex}
\\ \midrule
\multirow{2}{*}{1-hop-max}
   & GCR    &   $ p^{\dmax+1}$ & $  \frac{1}{n} \dmax \kappa^3 p^{-\dmax}$   
\\ \rule{0pt}{3ex}
   & RGCR  & $\frac{p}{(\dmax+1) \kappa }$ & $\frac{1}{n} \dmax^2 \kappa^4 p^{-1}$ 
\\ \rule{0pt}{3ex}
$\bw^*$-weighted   & RGCR & $\frac{p}{\lambda^*}$ & $\frac{1}{n} \lambda^* \dmax \kappa^3 p^{-1} $ 
   \vspace{1ex}
\\ \bottomrule
\end{tabular}
\caption{
A summary of bounds pertaining to the HT estimator of the GATE under various
randomization designs. 
The RGCR results apply for \emph{both} independent and complete randomization,
while the GCR bounds do \emph{not} support complete randomization because they violate
the positivity assumption. 
Each variance upper bound is up to a $\Theta(1)$ multiplicative constant.
Details are given in the corresponding subsections of \Cref{sec:properties}.
}
\label{tab:summary_theory}
\end{table}

In this section, we analyze the properties of the RGCR scheme. 
We focus on the Horvitz--Thompson (HT) estimator due to its theoretical amenability,
while some important insights on the \hajek\ estimator are discussed at the end.

Since the RGCR scheme requires a randomized clustering strategy,
we first consider two initial algorithms:
randomized $3$-net, a randomized version of the $3$-net algorithm considered
in the original analysis of the GCR scheme,
and $1$-hop-max, a new randomized clustering algorithm similar to $3$-net
but more easily amenable to a rigorous analysis.
We then also consider weighted versions of these two algorithms, 
which introduces node-level flexibility and can effectively 
balance the exposure probabilities of high- and low-degree nodes, 
addressing an imbalance found in the first two algorithms. 
The goal of this section is to provide an analysis of how
RGCR can lead to considerable variance reduction when 
compared with the vanilla GCR scheme based on a single clustering.
All but the simplest proofs are removed to Appendix~\ref{app:proofs}.

We summarize the results of this section in \Cref{tab:summary_theory}
and highlight some important observations. First, for each clustering algorithm,
by using  GCR with a single fixed clustering, the variance of the HT estimator is
upper bounded by an exponential function of either $\dmax$ or $\kappa$. 
Note that both quantities can be large in real-world networks, resulting in the
huge variance in the original GCR scheme. In contrast, with RGCR,
the variance is upper bounded by a polynomial function of $\dmax$ and $\kappa$.
Recall that if the graph has bounded degree $\dmax$ but the growth is not ``further''
restricted then we still have that $\kappa < \dmax$.
Therefore, the RGCR scheme can significantly reduce the 
estimator variance compared with GCR,
both with and without restricted growth.

Second, we highlight that variance reduction is achieved primarily by 
obtaining a much larger exposure probabilities, 
which are the inverse weights in the HT estimator 
and play a similar role in the H\'ajek estimator. 
With a fixed clustering, a node can be at the boundary of a cluster,
making it adjacent to many clusters and thus the exposure probability becomes 
exponentially small. However, with RGCR, such exponentially small
probabilities are ``washed out" by averaging with the clusterings where a node
is at the center of a cluster, and even have a tidy lower bound.

Finally, for each random clustering algorithm considered, completed randomization
is valid for RGCR, \ie, positivity (Assumption~\ref{assm:positive}) is satisfied. 
In contrast, the positivity assumption is generally 
violated in GCR with complete randomization. The results for RGCR summarized in
\Cref{tab:summary_theory} apply for both independent and complete randomization,
 while those for GCR apply only for independent randomization.

Beside extensive analysis on the HT estimator, we also present some key properties
of the \hajek~estimator under the GCR and RGCR schemes. Compared with the HT
estimator, \hajek~estimator enjoys much lower variance due to the self normalization,
while a potential drawback, widely known in the literature, is the potential issue of bias.
As a highlight of our discussion, we show that the \hajek~estimator is unbiased
under GCR and RGCR if the individual treatment effect
$\tau_i = Y_i(\bOne) - Y_i(\bZero)$ is constant across all nodes.
However, in practice the treatment effects are reasonably non-constant,
making the \hajek~estimator potentially biased. 
This result motivates us to use a non-constant individual treatment effect
to study the bias of \hajek~estimator in simulation experiments in \Cref{sec:simulation}.


\subsection{Randomized $3$-net and $1$-hop-max clusterings}
We now study our two random clustering algorithms and 
establish properties of a RGCR design
when each clustering algorithm is used. 
For notation brevity, our analysis is always conditioned on the distribution
of random clusterings in focus, unless stated otherwise.

\subsubsection{Algorithms}

\begin{figure}[t]
  \centering
\begin{minipage}{.75\linewidth}
\begin{algorithm}[H]
\DontPrintSemicolon
    \KwIn{Graph $G = (V, E)$}
    \KwOut{Graph clustering $\bc \in \real n$}
    $\pi \gets$ generate a uniformly random total ordering of all nodes\;
    $S \gets \emptyset$, unmark all nodes\;
    \For{$i \in \pi$} {
      \If{$i$ is unmarked} {
        $S \gets S \cup \{i\}$\;
        \For{$j \in B_2(i)$} {
          mark node $j$ if it is unmarked yet\;
        }
      }
    }
    \For{$i \in V$} {
      $c_i \gets \arg \min\{j \in S, j \rightarrow\dist(i, j) \}$, \ie, the id of the node in $S$
         with shortest graph distance to $i$ (arbitrary tie breaking)\;
    }    
    \Return $\bc$\;
\caption{3-net clustering.}
\label{Alg:3_net}
\end{algorithm}
\end{minipage}
\end{figure}

The first algorithm in consideration is the $3$-net clustering which is used in the
original analysis of the graph cluster randomization scheme~\cite{ugander2013graph}. 
Here we assume that a $3$-net clustering is generated from a random ordering of all nodes
and thus its output is random, while such randomness was not exploited 
in any part of the analysis of vanilla GCR, which was conditional on a single clustering
outputted by the algorithm.

Formally the randomized $3$-net clustering algorithm is given in \Cref{Alg:3_net}, 
which consists of three major steps. First, we generate a total ordering of all nodes 
sampled uniformly over all permutations. 
Second, construct a maximal distance-3 independent set of the network
(line 2--7) using a greedy algorithm proceeding according to the total ordering generated in line 1.
We call each node in the independent set a \emph{seed} node. 
Next we assign every node in the network to the seed node with smallest graph distance,
with ties broken by some arbitrary rule. These steps return a clustering partition.

In the returned clustering, since the seed nodes form a distance-3 independent set,
any 1-hop neighbors of a seed node will be assigned to the seed.
Therefore, the seeds nodes are guaranteed to be in the interior of a cluster, 
not connecting to any nodes in a different cluster. 
Consequently, the returned clustering consists of node-neighborhood clusters
known to form relatively good clusters (in terms of edges cut) 
in real-world networks~\cite{gleich2012vertex,yin2019local}.

A potential disadvantage of 3-net clustering algorithm is the runtime.
Even though parallel algorithms have been developed for the random 
maximal independent set problem~\cite{alon1986fast,blelloch2012greedy}, 
the runtime still increases with the size of the network, and thus
it is generally slow to sample a random 3-net clustering on a very large network, 
even by more complicated means. 

\begin{figure}[t]
  \centering
\begin{minipage}{.75\linewidth}
\begin{algorithm}[H]
\DontPrintSemicolon
    \KwIn{Graph $G = (V, E)$}
    \KwOut{Graph clustering $\bc \in \real n$}
    \For{$i \in V$} {
      $X_i \gets \mathcal U(0, 1)$\;
    }
    \For{$i \in V$} {
      $c_i \gets \max([X_j \text{ for } j \in B_1(i)])$\;
    }    
    \Return $\bc$\;
\caption{1-hop-max clustering.}
\label{Alg:1_hop_max}
\end{algorithm}
\end{minipage}
\end{figure}

As a second algorithm for RGCR, we propose  \emph{1-hop-max}, 
given in \Cref{Alg:1_hop_max}. This algorithm consists of two steps. 
First, every node $i$ independently generates a random number from the 
uniform distribution on $(0, 1)$.
Second, for every node $i$, find the maximum
of the generated numbers within node $i$'s 1-hop neighborhood. The unique
numbers define the clustering: 
nodes with the same 1-hop-maximum form a single cluster.

Similar to the $3$-net algorithm, the clustering returned by the 1-hop-max algorithm contains
neighborhood-like clusters: every cluster is associated with a center node.
On the other hand, the 1-hop-max algorithm has a much faster parallel runtime.
Formally, we have the following result in terms of the \emph{work} 
(\ie, total number of operations) and \emph{depth} (\ie, length of longest chain 
in the computation dependency graph)~\cite{blelloch1996programming}, 
key constraints in parallel computing.

\begin{theorem}   \label{Prp:Dep_u_max}
\Cref{Alg:1_hop_max} has $O(\log(\dmax))$ depth and $O(m)$ work.
\end{theorem}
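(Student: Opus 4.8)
The plan is to analyze Algorithm~\ref{Alg:1_hop_max} in the standard work--depth (PRAM) model, treating its two loops as two parallel phases and bounding, for each phase, both the total operation count (\emph{work}) and the longest chain in the dependency graph (\emph{depth}); the two phases run sequentially, so work adds while depth adds.

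First I would handle the initialization loop, in which each node independently draws $X_i \sim \mathcal U(0,1)$. Because these draws are mutually independent, all $n$ of them can be issued simultaneously, so this phase contributes $O(n)$ work and $O(1)$ depth.

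Next I would analyze the assignment loop. For a fixed node $i$, computing $c_i = \max_{j \in B_1(i)} X_j$ is a reduction over the $\lvert B_1(i)\rvert = 1 + \degree{i}$ already-computed values. Implementing each reduction with a balanced binary tree rather than a linear scan gives $O(\lvert B_1(i)\rvert)$ work and $O(\log \lvert B_1(i)\rvert)$ depth. Summing the work across all nodes and invoking the handshake identity $\sum_{i} \degree{i} = 2m$ yields total work $\sum_i (1 + \degree{i}) = n + 2m$; since each reduction depends only on the $X_j$'s from the first phase and not on any other reduction, all $n$ of them run in parallel, so the depth of this phase is $\max_i O(\log \lvert B_1(i)\rvert) = O(\log(1 + \dmax)) = O(\log \dmax)$.

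Combining the two phases gives total work $O(n) + O(n + 2m) = O(n + m)$ and total depth $O(1) + O(\log \dmax) = O(\log \dmax)$, establishing the depth bound immediately. The analysis is otherwise routine, so there is no deep obstacle; the only points requiring care are (i) justifying the $O(\log \dmax)$ reduction depth via a balanced tree, and (ii) the bookkeeping that sharpens $O(n + m)$ to the stated $O(m)$, which implicitly assumes $n = O(m)$ (e.g.\ a graph with no isolated vertices, or more generally $m = \Omega(n)$); under that mild assumption $n = O(m)$ and the work is $O(m)$.
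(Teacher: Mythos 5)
Your proof is correct and follows essentially the same route as the paper's: both split the algorithm into the two loops, charge $O(n)$ work and $O(1)$ depth to the independent draws, and implement each per-node maximum as a parallel reduction with $O(\degree{i})$ work and $O(\log \degree{i})$ depth, summing work via $\sum_i \degree{i} = 2m$ and taking the maximum depth $O(\log \dmax)$ over nodes running in parallel. Your explicit remark that stating the work as $O(m)$ rather than $O(n+m)$ tacitly assumes $m = \Omega(n)$ is a fair point of bookkeeping that the paper glosses over in exactly the same way, so there is no substantive difference between the two arguments.
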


\subsubsection{Network exposure probabilities}
\label{sec:expo_prob_unweighted}

The network exposure probabilities $\prob{E_i ^\bz \mid \mathcal P}$ of these algorithms
are key parts of the HT and H\'ajek GATE estimators under RGCR. 

Before discussing how to compute or estimate these probabilities, 
we first show a simple but useful lower bound of the full neighborhood exposure 
probabilities when using 3-net or 1-hop-max random clustering generator.
This result is crucial in 
both the analysis of a Monte Carlo method for estimating the probabilities 
in \Cref{sec:expo_prob_estimate}
and the variance analysis in \Cref{sec:var_unweighted}.

\begin{theorem}   \label{Thm:prob_LB}
Using either 3-net or 1-hop-max random clustering on a graph
with restricted growth coefficient $\kappa$, using either independent
or complete randomization at the cluster level,
the full-neighborhood exposure probabilities for any node $i$ satisfy
\[
\prob{E_i ^{\bOne} \mid \mathcal P} \geq \frac{p}{\lvert B_2(i)\rvert} \geq \frac{p}{(1+\dmax)\kappa}, 
\quad
\prob{E_i ^{\bZero} \mid \mathcal P} \geq \frac{1-p}{\lvert B_2(i)\rvert} \geq \frac{1-p}{(1+\dmax)\kappa}.
\]
\end{theorem}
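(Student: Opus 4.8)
The plan is to reduce the lower bound to a single favorable clustering event. For full-neighborhood exposure to treatment, node $i$ is exposed exactly when every node of $B_1(i)$ is assigned to treatment; a clean sufficient condition is that all of $B_1(i)$ falls into one cluster and that cluster is then treated. Let $A_i$ be the clustering-only event that the nodes of $B_1(i)$ all share a single cluster. Since the cluster-level assignment is drawn independently of the realized clustering, and since the marginal probability that any fixed cluster is assigned to treatment equals $p$ under \emph{both} independent and complete randomization, I can write
\[
\prob{E_i^\bOne \mid \mathcal P} = \expect[\bc \sim \mathcal P]{\prob{E_i^\bOne \mid \bc}} \geq \expect[\bc \sim \mathcal P]{\indic{A_i}\cdot \prob{E_i^\bOne \mid \bc}} = p \cdot \prob{A_i}.
\]
The control case is identical with $p$ replaced by $1-p$, so everything reduces to showing $\prob{A_i} \geq 1/\lvert B_2(i)\rvert$.

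The combinatorial core is to exhibit one clustering event of probability $1/\lvert B_2(i)\rvert$ that forces $A_i$. In both algorithms every cluster carries a distinguished center, and I claim that whenever $i$ has the \emph{highest priority} among all nodes of $B_2(i)$ --- the largest draw $X_i$ for $1$-hop-max, or the earliest position in the random ordering $\pi$ for $3$-net --- node $i$ becomes the common center of all of $B_1(i)$. For $1$-hop-max this is immediate: for any $u \in B_1(i)$ we have $i \in B_1(u)$, and $B_1(u) \subseteq B_2(i)$ by the triangle inequality $\dist(i,\cdot)\le \dist(i,u)+\dist(u,\cdot)\le 2$; hence $i = \arg\max_{j \in B_1(u)} X_j$ and $u$ is labeled by $i$. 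For $3$-net I would first argue that $i$ is selected as a seed (a node is marked only by a previously processed seed lying in $B_2(i)$, and $i$ precedes every such node), and then that $i$ is the unique nearest seed to each $u \in B_1(i)$: any rival seed $s \neq i$ with $\dist(s,u)\le 1$ would give $\dist(s,i)\le 2$, i.e.\ $s \in B_2(i)$, which is impossible because selecting $i$ marks all of $B_2(i)$ while $i$ precedes every node there.

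By symmetry of the i.i.d.\ uniform draws (respectively of the uniformly random ordering), each node of $B_2(i)$ is equally likely to be the highest-priority one, so this favorable event has probability exactly $1/\lvert B_2(i)\rvert$; since it implies $A_i$, we get $\prob{A_i}\ge 1/\lvert B_2(i)\rvert$ and hence $\prob{E_i^\bOne \mid \mathcal P}\ge p/\lvert B_2(i)\rvert$. The second inequality is then a one-line application of restricted growth: $\lvert B_2(i)\rvert \le \kappa \lvert B_1(i)\rvert = \kappa(1+\degree{i}) \le (1+\dmax)\kappa$.

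I expect the main obstacle to be the $3$-net nearest-seed argument: carefully combining the distance-$3$ independent-set structure with the bookkeeping that ``$i$ marks all of $B_2(i)$'' to rule out any competing seed within distance $1$ of a node of $B_1(i)$. A secondary point deserving care is confirming that the marginal probability of a fixed cluster being treated is exactly $p$ under complete randomization, where assignments are coupled across stratified pairs rather than independent.
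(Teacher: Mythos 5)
Your proof is correct and takes essentially the same route as the paper's: both reduce the bound to the single favorable event that node $i$ has highest priority (first in the random ordering for $3$-net, largest draw for $1$-hop-max) among all of $B_2(i)$, which forces all of $B_1(i)$ into $i$'s cluster, occurs with probability exactly $1/\lvert B_2(i)\rvert$, and is then multiplied by the marginal cluster-treatment probability $p$ (resp.\ $1-p$), with the final step $\lvert B_2(i)\rvert \le \kappa(1+\dmax)$ from restricted growth. If anything, your handling of the $3$-net seed-selection and nearest-seed details, and your flagging of the complete-randomization marginal, is more careful than the paper's proof, which asserts these points without elaboration.
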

A detailed proof is given in the \Cref{app:proofs}, while the high-level idea is as follows.
If a node $i$ is ranked first within $B_2(i)$ in a 3-net clustering algorithm  
(or generated the largest number in the 1-hop-max algorithm),
which happens with probability $1/\lvert B_2(i) \rvert$,
then all its 1-hop neighbors are guaranteed to be in the same cluster as node $i$,
and thus it is definitely network exposed to either treatment or control.

Several remarks are in order on the above result. First, this lower bound 
is much higher than an analogous lower bound for the GCR scheme.
With GCR, 3-net clustering, and independent randomization 
(but not complete randomization), we have
$\prob{E_i ^\bOne \mid \mathcal G} \geq p^{\dmax}$ in general and
$\prob{E_i ^\bOne \mid \mathcal G} \geq p^{\kappa^6}$ under a
restricted growth condition~\cite{ugander2013graph}.
That lower bound is exponentially small in the restrictive growth parameter
$\kappa$. In real-world networks, $\kappa$ can be of magnitude of 100,
making the exposure probabilities impossibly small. In contrast, with RGCR,
the exposure probability is lowered bounded by a polynomial function
of $\dmax$ and $\kappa$.

As a second remark, these lower bounds also hold
when we consider a partial neighborhood exposure model. 
If a node is full-neighborhood exposed, it must also be partial-neighborhood
exposed, and thus the partial-neighborhood exposure probability of each node
is no lower than that for full-neighborhood exposure.

As a third remark,
another significant implication of \Cref{Thm:prob_LB} is that it provides
a positive lower bound on the node-level exposure probabilities, making
complete randomization feasible. Note that complete randomization
is not feasible for the GCR scheme due to violation of the positivity assumption.
 However, for RGCR scheme, according to \Cref{Thm:prob_LB}, even under
complete randomization, the exposure probability of each node is guaranteed 
to be positive. 

The exposure probability lower bound in \Cref{Thm:prob_LB}
is obtained by solely considering scenario when a node generates
the largest number in its 2-hop neighborhood. Actually, one can obtain
an improved lower bound from more careful consideration on
node's ranking among its 2-hop neighborhood.

\begin{theorem}   \label{Thm:prob_LB_improved}
With 1-hop-max random clustering algorithm and independent randomization at the cluster level,
if $\lvert B_2(i) \rvert - \degree{i} \geq 1/(1-p)$, then
the full-neighborhood exposure probabilities for any node $i$ satisfy
\begin{eqnarray*}
\prob{E_i ^{\bOne} \mid \mathcal P} 
\geq \frac{1}{\lvert B_2(i)\rvert} \cdot \frac{p}{1-p}
,\qquad
\prob{E_i ^{\bZero} \mid \mathcal P} 
\geq \frac{1}{\lvert B_2(i)\rvert} \cdot \frac{1-p}{p}
.
\end{eqnarray*}
\end{theorem}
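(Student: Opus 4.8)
The plan is to sharpen Theorem~\ref{Thm:prob_LB} by conditioning on the \emph{entire} rank of node $i$ within its $2$-hop neighborhood, rather than only on the single event that $i$ attains the largest value in $B_2(i)$. Throughout I work under $1$-hop-max with independent randomization, and let $m_i$ denote the number of distinct clusters intersecting $B_1(i)$, so that $\prob{E_i^\bOne \mid \bc} = p^{m_i}$ and therefore $\prob{E_i^\bOne \mid \mathcal P} = \expect{p^{m_i}}$, where the expectation is over the draws of the $X_j$.

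The crux is a structural bound: if $i$ has rank $r$ within $B_2(i)$ (exactly $r-1$ nodes of $B_2(i)$ carry a larger value), then $m_i \le \min(r,\,\degree{i}+1)$. The bound $m_i \le \degree{i}+1$ is immediate because $B_1(i)$ has only $\degree{i}+1$ nodes. For $m_i \le r$, observe that each $j \in B_1(i)$ lies in the cluster centered at $\arg\max_{k \in B_1(j)} X_k$; since $i \in B_1(j)$ and $B_1(j) \subseteq B_2(i)$, this center is a node of $B_2(i)$ whose value is at least $X_i$. Hence every cluster meeting $B_1(i)$ is centered at one of the (exactly $r$) nodes of $B_2(i)$ with value $\ge X_i$, giving $m_i \le r$.

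Because the $X_j$ are i.i.d.\ continuous, the rank $R$ of $i$ within $B_2(i)$ is uniform on $\{1,\dots,|B_2(i)|\}$. Combining this with $p^{m_i} \ge p^{\min(R,\,\degree{i}+1)}$ (valid since $p<1$) yields
\[
\prob{E_i^\bOne \mid \mathcal P} \;\ge\; \frac{1}{|B_2(i)|}\sum_{r=1}^{|B_2(i)|} p^{\min(r,\,\degree{i}+1)}.
\]
I would then split the sum at $r=\degree{i}+1$, evaluate the geometric head $\sum_{r\le\degree{i}+1}p^r = \frac{p(1-p^{\degree{i}+1})}{1-p}$ and the flat tail $(|B_2(i)|-\degree{i}-1)\,p^{\degree{i}+1}$, and verify that their sum is at least $\frac{1}{|B_2(i)|}\cdot\frac{p}{1-p}$ precisely when $|B_2(i)|-\degree{i}-1 \ge \frac{p}{1-p}$, i.e.\ when $|B_2(i)|-\degree{i} \ge \frac{1}{1-p}$---the stated hypothesis. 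The control bound follows from the identical argument with $p$ replaced by $1-p$ (so that $\prob{E_i^\bZero \mid \bc} = (1-p)^{m_i}$), which asks for the $p\leftrightarrow 1-p$ analogue of the degree condition.

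The main obstacle is the structural claim $m_i \le r$; once the observation ``every cluster touching $B_1(i)$ is centered at a node of value $\ge X_i$'' is in hand, the rest is a one-line geometric-series computation, and the threshold $\tfrac{1}{1-p}$ emerges exactly as the condition needed to absorb the truncation term $p^{\degree{i}+1}$ arising from the finiteness of $B_1(i)$. The only remaining points of care are the uniformity of $R$ (using continuity and exchangeability of the $X_j$) and confirming that the tail of the sum---where $m_i$ is capped by $\degree{i}+1$ rather than by $r$---supplies precisely the extra mass that the cruder $r=1$ argument behind Theorem~\ref{Thm:prob_LB} discarded.
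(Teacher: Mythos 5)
Your proposal is correct and follows essentially the same route as the paper's own proof: condition on the (uniformly distributed) rank of $i$ within $B_2(i)$, bound the number of clusters meeting $B_1(i)$ by $\min(r,\degree{i}+1)$, take expectations to get the sum $\frac{1}{\lvert B_2(i)\rvert}\sum_r p^{\min(r,\degree{i}+1)}$, and split it into a geometric head plus a flat tail, with the hypothesis $\lvert B_2(i)\rvert - \degree{i} \geq 1/(1-p)$ emerging as exactly the condition needed to absorb the truncation term. Your explicit cluster-center justification of the structural claim and your remark that the control bound requires the $p\leftrightarrow 1-p$ analogue of the degree condition are both points the paper treats tersely (``adjacent to at most $k$ clusters,'' ``control is analogous''), but the underlying argument is the same.
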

The proof of this result involves a more carefuly analysis and for $p=1/2$ the difference between the lower bounds in \Cref{Thm:prob_LB_improved} and \Cref{Thm:prob_LB} is merely a factor of 2.

\subsubsection{Estimating the exposure probabilities}
\label{sec:expo_prob_estimate}

Computing the exact network exposure probabilities can be challenging
as it potentially requires considering an exponential number of different clusterings 
in \Cref{Eq:ExpoProb_mix}. More formally, \Cref{Thm:Hard-epsnet} show that
with 3-net clustering, computation of the exact 
exposure probability for a single node is
NP-hard.
\begin{theorem}   \label{Thm:Hard-epsnet}
For the $3$-net random clustering algorithm, using either independent or complete randomization at the cluster level, 
exact computation of the full-neighborhood exposure probability for a node in 
an arbitrary graph is NP-hard.
\end{theorem}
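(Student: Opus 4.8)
The plan is to show that an efficient algorithm for $\prob{E_i^\bOne \mid \mathcal P}$ would solve a known NP-hard (indeed $\#\mathsf{P}$-hard) counting problem. The first step is to strip away the treatment-assignment randomness and isolate the genuinely hard object, namely the distribution of clusterings produced by the greedy random $3$-net. Writing $D(\bc)$ for the number of distinct clusters of $\bc$ that intersect $B_1(i)$, under independent cluster-level randomization we have $\prob{E_i^\bOne \mid \bc} = p^{D(\bc)}$, so by the law of total expectation
\[
\prob{E_i^\bOne \mid \mathcal P} = \expect[\bc \sim \mathcal P]{p^{D(\bc)}} = \sum_{k=1}^{1+\degree{i}} \prob{D = k}\, p^{k}.
\]
Since the clustering distribution $\mathcal P$ does not depend on $p$, this is a polynomial in $p$ of degree at most $1+\degree{i}$ whose coefficients are exactly the probabilities $\prob{D=k}$. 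Hence an oracle returning the exposure probability for arbitrary $p$, queried at $2+\degree{i}$ distinct values of $p$ and combined with Lagrange interpolation, recovers the full distribution of $D$ in polynomial time. Computing the exposure probability is therefore at least as hard as computing any fixed statistic of $D$.

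The second step is to exhibit instances $(G,i)$ for which reading off such a statistic solves a hard problem. I would build a gadget graph in which the greedy distance-$3$ independent set induced by the uniformly random ordering $\pi$, followed by nearest-seed (Voronoi) assignment, encodes the structure of a source instance --- most conveniently a counting problem already phrased over orderings/permutations, so that it meshes directly with $\pi$. The design attaches pendant/blocker structures to the neighbors of $i$ so that each candidate external seed either does or does not ``capture'' a neighbor of $i$ according to the relative order of a few controlling vertices, arranging that $D(\bc)$ takes only one of two adjacent values $d_0$ and $d_0+1$, with the larger value occurring precisely on the orderings corresponding to solutions of the encoded instance. In this two-value regime, $\prob{E_i^\bOne \mid \mathcal P}$ is an \emph{affine} function of the single unknown $\prob{D = d_0+1}$, so one oracle call suffices (no interpolation needed) to recover $\prob{D = d_0+1} = N/n!$, where $N$ counts the solution orderings; this yields the hard count and decides the corresponding NP-hard language.

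The main obstacle is the gadget construction itself: faithfully simulating the logical constraints of the source problem through the greedy-MIS-on-a-random-ordering process is delicate, because whether a vertex is selected as a seed depends \emph{recursively} on which of its distance-$\le 2$ neighbors were selected earlier in $\pi$, and the nearest-seed assignment then couples this to the distance geometry around $i$. I would therefore engineer the blocker gadgets so that the seed status of each controlling vertex is governed by an independent, easily analyzed order statistic, decoupling the constraints and making $\prob{D=d_0+1}$ a transparent multiple of $N$. Finally, the same two-value construction covers complete randomization: there $\prob{E_i^\bOne \mid \bc}$ is not $p^{D(\bc)}$ but a function of the stratified pairing and the cluster sizes, yet by fixing the gadget's sizes and guaranteeing the relevant clusters are never paired with one another, this conditional probability still depends on $\bc$ only through $D(\bc) \in \{d_0, d_0+1\}$, so $\prob{E_i^\bOne \mid \mathcal P}$ remains affine in the one unknown and the identical argument applies. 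I expect verifying the gadget to be the hard part of the write-up, the reduction step being routine once $D$ is confined to two values.
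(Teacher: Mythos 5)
Your first step is sound: under independent randomization $\prob{E_i^\bOne \mid \bc} = p^{D(\bc)}$, so $\prob{E_i^\bOne \mid \mathcal P} = \sum_k \prob{D=k}\,p^k$ is a polynomial in $p$, and an exact-computation oracle queried at polynomially many rational values of $p$ recovers the distribution of $D$ by interpolation. But this step only \emph{transfers} hardness; it establishes none. The entire burden of the theorem then rests on your second step---exhibiting concrete instances $(G,i)$ on which some statistic of $D$ is NP-hard to compute---and that step is left as a plan. You name no concrete source problem, construct no gadget, and do not verify the crucial two-value property ($D \in \{d_0, d_0+1\}$, with the larger value occurring exactly on solution orderings). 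The difficulty you yourself flag---that seed status under greedy selection depends recursively on earlier choices in $\pi$, coupled through the nearest-seed assignment to the geometry around $i$---is precisely the content of the theorem; saying you would ``engineer blocker gadgets'' to decouple it states the goal rather than achieving it. As written, this is a proof strategy, not a proof.

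For contrast, the paper fills exactly this hole by reducing from the Minimum Maximal Distance-3 Independent Set (MD3IS) problem, exploiting the fact that the seed set of the $3$-net algorithm is precisely a maximal distance-3 independent set. Its gadget is explicit: take the MD3IS instance $G=(V,E)$, add a clique copy $V'$ matched vertex-by-vertex to $V$, and a new node $i_0$ adjacent to all of $V'$. One then shows every maximal distance-3 independent set of the augmented graph is either a singleton from $V' \cup \{i_0\}$ (whose contribution to the exposure probability is exactly computable) or a maximal distance-3 independent set of $G$, in which case $i_0$ is adjacent to exactly $\lvert V_s \rvert$ clusters. Choosing $p = 1/(2\lvert V \rvert + 1)!$ factorially small then separates yes-instances from no-instances by a margin of order $p^{K+1}$, because the greedy algorithm over a uniform ordering realizes any particular maximal distance-3 independent set with probability at least $1/(2\lvert V\rvert+1)!$. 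Note the paper needs neither your delicate two-value control of $D$ nor exact coefficient recovery: a single oracle call compared against a threshold suffices. If you want to rescue your approach, the natural move is the same one the paper makes---choose a source problem already phrased in terms of distance-3 independent sets (the seed sets themselves), rather than trying to encode an arbitrary permutation-counting problem through blocker gadgets whose correctness you would still have to prove.
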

Note that even though we don't have an analogous rigorous proof for the 1-hop-max clustering
strategy, we expect the analogous exposure probability computations to also be NP-hard.

Despite this negative result, 
the network exposure probabilities can be estimated 
using a relatively straight-forward Monte Carlo method with theoretical guarantees. 
The procedure begins by generating $K$ clusterings $\{\bc^{(k)}\}_{k=1}^K$ from our randomized 
clustering algorithm and compute the exact exposure probability of each node under each clustering.
The estimator of the exposure probability is then
\begin{equation}   \label{Eq:prob_estimator_MC}
\hat{\mathbb{P}}[E_i ^\bz \mid \mathcal P] = \frac{1}{K} \sum_{k = 1}^K \prob{E_i ^\bz \mid \bc^{(k)}}.
\end{equation}
We then have the following result 
on the mean square error (MSE) of relative error in this Monte Carlo estimator.
\begin{theorem}   \label{Thm:prob_MC_var}
For either 3-net or 1-hop-max random clustering algorithm, and with 
$K$ Monte-Carlo trials and any node $i$, the relative error of the
Monte-Carlo estimator is upper bounded in MSE as
\[
\expect{\frac{\hat{\mathbb{P}}[E_i ^\bOne \mid \mathcal P] - \prob{E_i ^\bOne \mid \mathcal P}}{\prob{E_i ^\bOne \mid \mathcal P}} ~\middle |~ \mathcal P}^2
\leq \frac{\lvert B_2(i)\rvert}{Kp}.
\]
\end{theorem}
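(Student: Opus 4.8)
The plan is to exploit the fact that the Monte-Carlo estimator in \Cref{Eq:prob_estimator_MC} is \emph{unbiased}, which collapses the relative-error MSE into a normalized variance that can be controlled directly. Writing $X_k \triangleq \prob{E_i^\bOne \mid \bc^{(k)}}$, I would first note that the clusterings $\bc^{(1)}, \dots, \bc^{(K)}$ are i.i.d.\ draws from $\mathcal P$, so the $X_k$ are i.i.d.\ copies of the single random variable $X \triangleq \prob{E_i^\bOne \mid \bc}$ with $\bc \sim \mathcal P$. By the Law of Total Expectation---precisely \Cref{Eq:ExpoProb_mix}---we have $\expect{X} = \prob{E_i^\bOne \mid \mathcal P} =: \mu$, and therefore $\expect{\hat{\mathbb{P}}[E_i^\bOne \mid \mathcal P] \mid \mathcal P} = \mu$. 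Unbiasedness means the numerator of the relative error is mean-zero, so the quantity to be bounded equals $\var{\hat{\mathbb{P}}[E_i^\bOne \mid \mathcal P] \mid \mathcal P}/\mu^2$, and because $\hat{\mathbb{P}}$ is an average of $K$ i.i.d.\ terms, this variance is $\var{X}/K$. Thus the target reduces to $\var{X}/(K\mu^2)$.

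The crux is then bounding $\var{X}$. The key observation is that $X$ is itself a probability, hence $X \in [0,1]$ almost surely, so $X^2 \le X$ pointwise and $\expect{X^2} \le \expect{X} = \mu$. This gives $\var{X} = \expect{X^2} - \mu^2 \le \mu$, so the relative-error MSE is at most $\mu/(K\mu^2) = 1/(K\mu)$.

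Finally, I would invoke \Cref{Thm:prob_LB}, which supplies the lower bound $\mu = \prob{E_i^\bOne \mid \mathcal P} \ge p/\lvert B_2(i)\rvert$ for both the $3$-net and $1$-hop-max generators and for either independent or complete randomization, to convert $1/(K\mu)$ into the claimed bound $\lvert B_2(i)\rvert/(Kp)$.

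The argument has no genuinely hard step: its entire conceptual content is recognizing unbiasedness (so that MSE equals a normalized variance) together with the elementary $X^2 \le X$ inequality that exploits $X$ being a probability. The only place that actually uses the specific clustering algorithms is the last step, where the already-established lower bound of \Cref{Thm:prob_LB} closes the loop; everything preceding it is algorithm-agnostic, which is why the same bound holds uniformly for $3$-net and $1$-hop-max.
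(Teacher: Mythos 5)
Your proposal is correct and takes essentially the same approach as the paper: unbiasedness collapses the relative-error MSE to $\var{X}/(K\mu^2)$ with $X = \prob{E_i ^\bOne \mid \bC}$ and $\mu = \prob{E_i ^\bOne \mid \mathcal P}$, the i.i.d.\ structure of the $K$ samples supplies the $1/K$ factor, and \Cref{Thm:prob_LB} closes the bound. The only difference is cosmetic: where you bound $\var{X} \leq \mu$ via the pointwise inequality $X^2 \leq X$ for $X \in [0,1]$, the paper routes the same bound through the Bernoulli indicator $\indic{E_i ^\bOne}$ and the law of total variance; these are equivalent, since $\expect{X} - \expect{X^2} = \expect{X(1-X)}$ is exactly the expected conditional variance that the total-variance argument discards.
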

The proof is given in the \Cref{app:proofs}, which is obtained from
the fact that the ground-truth exposure probability is bounded away from 0
as is shown in \Cref{Thm:prob_LB}.

Given this MSE guarantee, it is natural to use the 
estimated exposure probabilities as the inverse weights in (e.g.) an HT estimator. 
A potential issue is the possible violation of the positivity assumption for 
complete randomization: it is possible that for some node $i$, 
$\prob{E_i ^\bOne \mid \bc^{(k)}} = 0$ for all the generated clusterings,
and thus $\hat{\mathbb{P}}[E_i ^\bOne \mid \mathcal P] = 0$ (which
would make the HT estimate ill-defined).
A fix to this positivity issue is to use stratified sampling in generating the clustering samples.

To stratify our Monte Carlo estimator, we 
generate $Kn$ samples $\{\bc^{(k, i)}\}$ with $k \in \{1, 2, \dots, K\}$
and $i \in \{1, 2, \dots, n\}$ such that, 
if the 3-net clustering is in use, then the clustering $\bc^{(k, i)}$ is based on
a random node ordering conditional on node $i$ being ranked first among all nodes.
Analogously, if the 1-hop-max clustering is in use, then in the generation
of clustering $\bc^{(k, i)}$, node $i$ generates the largest $X_i$ among all nodes.
Consequently, under clustering $\bc^{(k, i)}$, node $i$ is guaranteed to be
the center of a cluster and thus $\prob{E_i ^\bOne \mid \bc^{(k, i)}} = p$.
Now the network exposure probability of node $i$ is estimated as
\[
\hat{\mathbb{P}}[E_i ^\bz \mid \mathcal P] 
= \frac{1}{nK} \sum_{k = 1}^K \sum_{j=1}^n  \prob{E_i ^\bz \mid \bc^{(k, j)}}.
\]
In total, each node $i$ is ``favored" exactly $K$ times among the $Kn$ samples,
and we have 
\[
\hat{\mathbb{P}}[E_i ^\bOne \mid \mathcal P]  \geq \frac pn > 0.
\]
Besides a guarantee of positivity in the estimated
exposure probabilities, this stratified sampling technique is also 
effective at reducing variance in the estimation. 
Therefore, when computationally feasible to sample at least $n$ clustering samples,
this stratified sampling method should be strictly preferred over independent sampling.

As a final but important note on probability computation and estimation, 
we point out that the potential
computational bottleneck of generating $K$ clusterings when using RGCR
should not pose practical concerns.
First, we highlight that the exposure probabilities are needed 
only in the analysis phase but not the design phase. To launch an experiment,
it suffices to generate a single clustering from a randomized algorithm 
and use it in assigning individuals to treatment or control; 
after the experiment has been launched, we can later sample other random clusterings 
to estimate the exposure probabilities. 
Second, we note that the estimated exposure probabilities can be shared 
across experiments as long as the interference network remains unchanged.
In practice, with hundreds of A/B testings running at the same time, practitioners
only need to estimate the exposure probabilities once.

\subsubsection{Variance of estimators}
\label{sec:var_unweighted}

We now analyze the variance of the Horvitz--Thompson (HT) estimator with RGCR.
We show that, with 1-hop-max clustering, the variance is 
upper bounded by a polynomial function in both the maximum degree
$\dmax$ and the restricted growth parameter $\kappa$,
which also decays as $n \rightarrow \infty$. 

We first present a useful property of the randomized $1$-hop-max clustering
algorithm, the local dependence, which distinguished it from $3$-net clustering.
\begin{lemma}   \label{Lem:1_hop_max-local_dep}
With 1-hop-max random clustering algorithm, for any node $i$, the joint distribution
of $\bC_{B_1(i)}$, \ie, the clusterings of all nodes in $B_1(i)$,
depends only on the structure of the graph induced 
on the node set $B_2(i)$.
\end{lemma}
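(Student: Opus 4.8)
The plan is to show that the cluster label $c_u$ assigned to each node $u \in B_1(i)$ is a deterministic function of data that lives entirely inside $B_2(i)$, namely the i.i.d.\ uniform draws $\{X_v : v \in B_2(i)\}$ together with the edges of the induced subgraph $G[B_2(i)]$. Once this is established, the joint distribution of $\bC_{B_1(i)} = (c_u)_{u \in B_1(i)}$ follows from pushing the (fixed, graph-independent) product distribution of $\{X_v\}_{v \in B_2(i)}$ through this function, and hence depends on the ambient graph only through $G[B_2(i)]$.

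First I would recall that the $1$-hop-max algorithm sets $c_u = \max\{X_v : v \in B_1(u)\}$, where each $X_v \sim \mathcal U(0,1)$ independently. The crucial containment step is that for every $u \in B_1(i)$ we have $B_1(u) \subseteq B_2(i)$: indeed, if $\dist(i,u) \le 1$ and $v \in B_1(u)$ so that $\dist(u,v) \le 1$, then the triangle inequality for the shortest-path metric gives $\dist(i,v) \le \dist(i,u) + \dist(u,v) \le 2$, i.e.\ $v \in B_2(i)$. Consequently the set $B_1(u)$, and thus the value $c_u$, is determined by the $X$-values indexed by $B_2(i)$ and by the adjacency relation among the nodes of $B_2(i)$.

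To turn ``$B_1(u)$ is determined by $B_2(i)$'' into a precise statement I would verify that the induced subgraph $G[B_2(i)]$ records enough edge information: for $u \in B_1(i)$, every neighbor of $u$ lies in $B_2(i)$ by the same triangle-inequality bound, so the neighborhood $B_1(u)$ read off from $G[B_2(i)]$ coincides with the one read off from $G$. This is the one place where care is needed---one must confirm that no neighbor of a node $u \in B_1(i)$ ``escapes'' to distance $3$ from $i$, which is exactly what the containment $B_1(u) \subseteq B_2(i)$ rules out. Collecting these facts, the map $(\{X_v\}_{v \in B_2(i)}, G[B_2(i)]) \mapsto (c_u)_{u \in B_1(i)}$ is well-defined, and since the law of $\{X_v\}_{v \in B_2(i)}$ is the product of uniforms no matter what the rest of $G$ looks like, the induced law of $\bC_{B_1(i)}$ depends only on $G[B_2(i)]$, as claimed.

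The main obstacle is conceptual rather than computational: I must be sure that the quantity being localized is the correct one. The cluster label $c_u$ depends on $B_1(u)$, and it is the union $\bigcup_{u \in B_1(i)} B_1(u) = B_2(i)$ that fixes the locality radius; had the algorithm instead assigned each node to a seed chosen by a global greedy ordering---as in $3$-net---the assignment of $u \in B_1(i)$ could depend on seeds and orderings arbitrarily far away, and no such finite-radius localization would hold. This contrast is precisely why $1$-hop-max, and not $3$-net, admits the local-dependence structure exploited in the subsequent variance analysis.
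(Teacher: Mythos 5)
Your proof is correct and follows essentially the same route as the paper's (much terser) argument: both rest on the observation that for $u \in B_1(i)$ the label $c_u = \max\{X_v : v \in B_1(u)\}$ involves only draws $X_v$ with $v \in B_1(u) \subseteq B_2(i)$, so the joint law of the labels is determined by the i.i.d.\ uniforms on $B_2(i)$ and the induced subgraph there. Your version simply makes explicit the triangle-inequality containment and the check that no neighbor of a node in $B_1(i)$ escapes $B_2(i)$, which the paper leaves implicit.
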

\begin{proof}
Since the clustering of every node is $C_j = \max\{X_{j^\prime}: 
j^\prime \in B_1(j)\}$ with $X_{j^\prime} \sim \mathcal U(0, 1)$, the joint distribution of
$[C_j]_{j \in B_1(i)}$ depends only on the structure of the graph induced 
on the node set $B_2(i)$ and is independent of the rest of network.
\end{proof}

With this local dependence property, now we present the following result on the variance of mean-outcome HT estimator.
\begin{theorem}   \label{Thm:var_restricted_growth_general}
For RGCR with a 1-hop-max clustering, if every node's responses are within $[0, \bar Y]$ then
\[
\var{\hat \mu _{\mathcal P}(\bOne) }
\leq \frac{\bar Y^2}{n^2} \sum_{i=1}^n \frac{\lvert B_4(i) \rvert}{\prob{E_i ^\bOne \mid \mathcal P}}.
\]
for both independent and complete cluster-level randomization.
\end{theorem}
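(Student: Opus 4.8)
The plan is to start from the exact HT variance decomposition in \Cref{Eq:var-mean_outcome} (all probabilities understood as conditional on $\mathcal P$) and argue that, thanks to the locality of the $1$-hop-max construction, almost every cross term in the double sum vanishes or is non-positive, leaving only a local contribution that I can bound crudely. First I would pin down the ``support'' of the exposure event $E_i^\bOne$: since the cluster label satisfies $C_k = \max\{X_{k'} : k' \in B_1(k)\}$, the cluster of any $k \in B_1(i)$ is centered at a node of $B_1(k) \subseteq B_2(i)$, so every cluster meeting $B_1(i)$ has its center in $B_2(i)$ (and is itself contained in $B_3(i)$). Under independent randomization this lets me couple the design so that each node $v$ carries an i.i.d.\ pair $(X_v, T_v)$ with $T_v \sim \mathrm{Bernoulli}(p)$, a cluster inherits its center's $T$, and $E_i^\bOne$ becomes measurable with respect to $\{(X_v, T_v) : v \in B_2(i)\}$. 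This is the quantitative content of \Cref{Lem:1_hop_max-local_dep} pushed through the assignment step.

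The key step is a locality-of-correlation claim: if $j \notin B_4(i)$, i.e.\ $\dist(i,j) \ge 5$, then $B_2(i) \cap B_2(j) = \emptyset$, so $E_i^\bOne$ and $E_j^\bOne$ are driven by disjoint node sets. Under independent randomization these events are then exactly independent, so the factor $\prob{E_i^\bOne \cap E_j^\bOne}/(\prob{E_i^\bOne}\prob{E_j^\bOne}) - 1$ is identically zero. Under complete randomization I would instead argue the same factor is non-positive: conditioning on a clustering $\bc$, the events $E_i^\bOne$ and $E_j^\bOne$ are increasing functions of the treatment indicators of two \emph{disjoint} sets of clusters, and the stratified complete-randomization assignment is negatively associated, giving $\prob{E_i^\bOne \cap E_j^\bOne \mid \bc} \le \prob{E_i^\bOne \mid \bc}\prob{E_j^\bOne \mid \bc}$ for every $\bc$; it then remains to remove the conditioning to conclude $\prob{E_i^\bOne \cap E_j^\bOne} \le \prob{E_i^\bOne}\prob{E_j^\bOne}$.

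With all distant pairs discarded, only the pairs with $j \in B_4(i)$ survive, and here I bound crudely. Using $Y_i, Y_j \in [0,\bar Y]$ (so every product is non-negative and the $-1$ terms can be dropped upward) together with the trivial inequality $\prob{E_i^\bOne \cap E_j^\bOne} \le \prob{E_j^\bOne}$, each surviving cross term is at most $Y_iY_j/\prob{E_i^\bOne} \le \bar Y^2/\prob{E_i^\bOne}$, and the diagonal term is at most $\bar Y^2/\prob{E_i^\bOne}$. Summing the one diagonal term plus the $|B_4(i)|-1$ local cross terms for each $i$ collapses the whole expression to $\frac{\bar Y^2}{n^2}\sum_i |B_4(i)|/\prob{E_i^\bOne}$, which is exactly the claimed bound.

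I expect the main obstacle to be the complete-randomization half of the locality-of-correlation step. Unlike independent randomization, where disjoint supports yield literal independence and the distant terms are outright zero, complete randomization imposes global negative dependence through the size-based stratified pairing, and the conditional probability $\prob{E_i^\bOne \mid \bc}$ can depend on \emph{global} cluster sizes (via which clusters get paired together). The delicate point is removing the conditioning: after the conditional factorization I have $\prob{E_i^\bOne \cap E_j^\bOne} \le \expect[\bc]{\prob{E_i^\bOne \mid \bc}\prob{E_j^\bOne \mid \bc}}$, and to reach $\prob{E_i^\bOne}\prob{E_j^\bOne}$ I must show that $\prob{E_i^\bOne\mid\bc}$ and $\prob{E_j^\bOne\mid\bc}$ are non-positively correlated as functions of the random clustering. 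Confirming that the global stratification coupling can only help---so that the negative association survives both the within-clustering assignment and the integration over the random clustering---is where I would concentrate the most care.
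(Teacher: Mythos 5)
Your proposal follows essentially the same route as the paper's own proof: your locality-of-correlation claim is exactly the paper's \Cref{Lem:1_hop_max-joint_prob} (for $\dist(i,j) > 4$ the cross factor is zero under independent randomization and non-positive under complete randomization), and the remainder---dropping the distant terms, bounding the local terms via $\prob{E_i^\bOne \cap E_j^\bOne} \le \prob{E_j^\bOne}$ and $Y_i Y_j \le \bar Y^2$, and counting $\lvert B_4(i) \rvert$ surviving terms per node---matches the paper's argument line by line. Your treatment of independent randomization is complete and correct.

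The one step you leave open, deconditioning under complete randomization, is closed in the paper differently from what you propose. Rather than establishing negative correlation of $\prob{E_i^\bOne \mid \bC}$ and $\prob{E_j^\bOne \mid \bC}$ over the random clustering, the paper invokes local dependence (\Cref{Lem:1_hop_max-local_dep}): it identifies $\prob{E_i^\bOne \mid \bC}$ with $\prob{E_i^\bOne \mid \bC_{B_1(i)}}$, notes that $\bC_{B_1(i)}$ and $\bC_{B_1(j)}$ are independent once $B_2(i) \cap B_2(j) = \emptyset$, and factors the expectation exactly, giving $\expect{\prob{E_i^\bOne \mid \bC}\,\prob{E_j^\bOne \mid \bC}} = \prob{E_i^\bOne \mid \mathcal P}\,\prob{E_j^\bOne \mid \mathcal P}$; combined with the conditional inequality $\prob{E_i^\bOne \cap E_j^\bOne \mid \bC} \le \prob{E_i^\bOne \mid \bC}\,\prob{E_j^\bOne \mid \bC}$ (justified, as you do, by the possibility that a cluster adjacent to $i$ is paired with one adjacent to $j$), this yields the lemma. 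Your hesitation is, however, well placed: under size-stratified complete randomization $\prob{E_i^\bOne \mid \bc}$ is not strictly a function of $\bc_{B_1(i)}$ alone, since whether two clusters adjacent to $i$ end up paired with each other, or with clusters adjacent to $j$, depends on global cluster sizes; the paper's identification of this conditional probability as a purely local quantity is precisely the point you flagged as delicate. So your proposal and the paper's proof share the same load-bearing step here: the paper asserts it via locality, while you would prove it via negative association, and carrying the latter through would make the complete-randomization half fully rigorous.
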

As an intuition for this result, by local dependence we have that 
the full-neighborhood exposure events of two nodes become independent 
(or negatively correlated) events if their graph distance is sufficiently big. 
This observation limits many cross-terms of the variance formula (\Cref{Eq:var-mean_outcome}),
yielding an upper bound. A formal proof is given in \Cref{app:proofs}.

A corollary of \Cref{Thm:var_restricted_growth_general} is the following.

\begin{theorem}   \label{Thm:var_restricted_growth_unweighted}
For RGCR with 1-hop-max clustering on a graph with maximum degree $\dmax$ and restricted growth coefficient $\kappa$, If every node's responses are within $[0, \bar Y]$ then
\[
\var{\hat \mu _{\mathcal P}(\bOne) }
\leq \frac{1}{n} \cdot  \bar Y^2 (d_{\max}+1) ^2 \kappa^4 p^{- 1},
\]
for both independent and complete cluster-level randomization.
\end{theorem}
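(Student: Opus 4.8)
The plan is to obtain this as a direct corollary of \Cref{Thm:var_restricted_growth_general} by bounding, uniformly over all nodes $i$, the two factors that appear in the summand of that theorem: the ball size $\lvert B_4(i)\rvert$ in the numerator and the reciprocal exposure probability $1/\prob{E_i^\bOne \mid \mathcal P}$. Since \Cref{Thm:var_restricted_growth_general} already reduces the variance to a sum of per-node terms, all that remains is to feed in the restricted growth condition and the probability lower bound from \Cref{Thm:prob_LB}, both of which hold for independent and complete randomization alike, matching the hypothesis here.

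First I would control the denominator. By \Cref{Thm:prob_LB}, every node satisfies $\prob{E_i^\bOne \mid \mathcal P} \geq p / \lvert B_2(i)\rvert$, so $1/\prob{E_i^\bOne \mid \mathcal P} \leq \lvert B_2(i)\rvert / p$. Next I would control the ball sizes via restricted growth. Starting from $\lvert B_1(i)\rvert = 1 + \degree{i} \leq 1 + \dmax$ and iterating the restricted growth inequality $\lvert B_{r+1}(i)\rvert \leq \kappa \lvert B_r(i)\rvert$ (valid for $r \geq 1$), one gets $\lvert B_2(i)\rvert \leq \kappa(1+\dmax)$ and, continuing, $\lvert B_4(i)\rvert \leq \kappa^3 (1+\dmax)$. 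Substituting both estimates into \Cref{Thm:var_restricted_growth_general} makes each summand at most $\lvert B_4(i)\rvert \cdot \lvert B_2(i)\rvert/p \leq \kappa^3(1+\dmax)\cdot \kappa(1+\dmax)/p = \kappa^4(1+\dmax)^2 p^{-1}$, uniformly in $i$. Summing the $n$ identical bounds against the $\bar Y^2/n^2$ prefactor then yields $\var{\hat\mu_{\mathcal P}(\bOne)} \leq \frac{1}{n}\bar Y^2 (\dmax+1)^2 \kappa^4 p^{-1}$, exactly as stated.

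Because the whole derivation is a short chain of substitutions, there is essentially no conceptual obstacle; the only thing requiring care is the bookkeeping of the $\kappa$ exponents. In particular, the restricted growth condition is assumed only for $r \geq 1$, so the iteration must be seeded at $B_1$ using the degree bound $1+\dmax$, rather than attempting to relate $B_0$ and $B_1$ (which would instead cost a factor of $1+\dmax$). The genuinely nontrivial work lives upstream in \Cref{Thm:var_restricted_growth_general}, whose proof exploits the local-dependence property of $1$-hop-max (\Cref{Lem:1_hop_max-local_dep}) to decorrelate distant exposure events and collapse the double sum in \Cref{Eq:var-mean_outcome} into the single sum over $\lvert B_4(i)\rvert$ that this corollary merely bounds.
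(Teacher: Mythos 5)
Your proposal is correct and matches the paper's own proof essentially line for line: both derive the result from \Cref{Thm:var_restricted_growth_general}, bound the denominator via \Cref{Thm:prob_LB} as $1/\prob{E_i^\bOne \mid \mathcal P} \leq \lvert B_2(i)\rvert/p$, and bound the ball sizes via $\lvert B_r(i)\rvert \leq (1+\dmax)\kappa^{r-1}$, yielding the identical per-node bound $\kappa^4(1+\dmax)^2 p^{-1}$. Your remark about seeding the growth iteration at $B_1$ (rather than $B_0$) is exactly the bookkeeping the paper's $\kappa^{r-1}$ exponent encodes.
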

\begin{proof}
From \Cref{Thm:var_restricted_growth_general} we have
\[ 
\var{\hat \mu _{\mathcal P}(\bOne)}
\leq \frac{\bar Y ^2}{n^2}  \sum_{i=1}^n \left[ \frac{\lvert B_2(i) \rvert}{p} \cdot \lvert B_4(i) \rvert  \right] \leq \frac 1 n \cdot \bar Y^2 (1 + d_{\max})^2 \kappa^4 p^{-1},
\]
where the first inequality is due to $\prob{E_i ^\bOne \mid \mathcal P}\geq \frac{p}{B_2(i)}$
and the second inequality is due to $\lvert B_r(i) \rvert \leq (1 + d_{\max}) \kappa^{r-1}$.
\end{proof}

This upper bound is to be compared with \Cref{Eq:GCRVarUB}, the variance upper bound 
when using a single fixed clustering, which is exponential to the restrictive growth coefficient $\kappa$. 
In contrast, if a random graph clustering is used, the upper bound is a polynomial function of $\kappa$. 
This result provides a strong theoretical justification of variance reduction from using random graph partitioning in GCR. 

From the variance of the mean outcome estimator we can obtain the following
variance upper bound on the GATE estimator.

\begin{theorem}   \label{Thm:var_GATE_restricted_growth}
For RGCR with 1-hop-max clustering on a graph with maximum degree $\dmax$ and restricted growth coefficient $\kappa$, If every node's responses are within $[0, \bar Y]$ then
\[
\var{\hat \tau _{\mathcal P} } \leq \frac{2}{n} \cdot  \bar Y^2 (d_{\max}+1) ^2 \kappa^4 (p^{- 1} + (1-p)^{-1}),
\]
for both independent and complete cluster-level randomization.
\end{theorem}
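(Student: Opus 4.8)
The plan is to reduce the variance of the GATE estimator $\hat\tau_{\mathcal P}$ to the variances of the two mean-outcome estimators $\hat\mu_{\mathcal P}(\bOne)$ and $\hat\mu_{\mathcal P}(\bZero)$, for which \Cref{Thm:var_restricted_growth_unweighted} already supplies a polynomial bound. Starting from the decomposition in \Cref{Eq:var-GATE}, namely
\[
\var{\hat\tau_{\mathcal P}} = \var{\hat\mu_{\mathcal P}(\bOne)} + \var{\hat\mu_{\mathcal P}(\bZero)} - 2\cdot\cov{\hat\mu_{\mathcal P}(\bOne),\hat\mu_{\mathcal P}(\bZero)},
\]
the first instinct would be to bound each of the three terms separately. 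The difficulty with that route is the covariance term: controlling $\cov{\hat\mu_{\mathcal P}(\bOne),\hat\mu_{\mathcal P}(\bZero)}$ directly via \Cref{Eq:covar-mean_outcome} would require understanding the joint exposure probabilities $\prob{E_i^\bOne \cap E_j^\bZero \mid \mathcal P}$, which is exactly the kind of cross-condition quantity the paper has not developed clean bounds for.

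First I would sidestep the covariance entirely using the elementary inequality $-2\,\cov{A,B} \le \var{A} + \var{B}$, which follows from $\var{A-B} \le 2\var{A} + 2\var{B}$ (equivalently, expanding $\var{A-B}$ and applying $2\cov{A,B}\ge -\var{A}-\var{B}$). Applying this to $A = \hat\mu_{\mathcal P}(\bOne)$ and $B = \hat\mu_{\mathcal P}(\bZero)$ gives
\[
\var{\hat\tau_{\mathcal P}} \le 2\,\var{\hat\mu_{\mathcal P}(\bOne)} + 2\,\var{\hat\mu_{\mathcal P}(\bZero)}.
\]
This reduces the problem to bounding the two mean-outcome variances, each of which is of the form handled by \Cref{Thm:var_restricted_growth_unweighted}.

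Next I would invoke \Cref{Thm:var_restricted_growth_unweighted} for the treatment arm, giving $\var{\hat\mu_{\mathcal P}(\bOne)} \le \frac1n \bar Y^2 (\dmax+1)^2 \kappa^4 p^{-1}$. The control arm is fully symmetric: by the symmetric lower bound $\prob{E_i^\bZero \mid \mathcal P}\ge \frac{1-p}{|B_2(i)|}$ from \Cref{Thm:prob_LB} and the same argument underlying \Cref{Thm:var_restricted_growth_general,Thm:var_restricted_growth_unweighted} (with the roles of $p$ and $1-p$ interchanged), one obtains $\var{\hat\mu_{\mathcal P}(\bZero)} \le \frac1n \bar Y^2 (\dmax+1)^2 \kappa^4 (1-p)^{-1}$. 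I would note that the local-dependence property (\Cref{Lem:1_hop_max-local_dep}) and the $|B_4(i)|$ geometry argument are identical for the control condition, since 1-hop-max clustering is indifferent to the subsequent treatment/control labeling. Summing the two bounds and pulling out the common factor $\frac2n \bar Y^2 (\dmax+1)^2 \kappa^4$ yields exactly the claimed $(p^{-1} + (1-p)^{-1})$ dependence.

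The only genuine obstacle is making sure the control-arm bound is stated rather than merely asserted; since the paper proves the treatment version explicitly, the cleanest exposition is to remark that \Cref{Thm:var_restricted_growth_general,Thm:var_restricted_growth_unweighted} hold verbatim for $\bZero$ with $p$ replaced by $1-p$ throughout, then combine. Everything else is the one-line variance-of-difference inequality plus a factor-of-two bookkeeping step, so no heavy computation is required.
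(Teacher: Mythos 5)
Your proposal is correct and follows essentially the same route as the paper: both start from the decomposition in \Cref{Eq:var-GATE}, bound the cross term via $-2\,\cov{\hat\mu_{\mathcal P}(\bOne),\hat\mu_{\mathcal P}(\bZero)} \leq \var{\hat\mu_{\mathcal P}(\bOne)} + \var{\hat\mu_{\mathcal P}(\bZero)}$ (the paper derives this via Cauchy--Schwarz plus the mean inequality, which is the same elementary fact), and then invoke \Cref{Thm:var_restricted_growth_unweighted} for the treatment arm together with its symmetric $(1-p)^{-1}$ analogue for the control arm. No substantive difference.
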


All of our analysis thus far has been non-asymptotic (finite-$n$) results. As such, we have not assumed that $\dmax$ or $\kappa$ are fixed in $n$. As a corollary of \Cref{Thm:var_GATE_restricted_growth} then, we have the following sufficient condition for convergence of the HT GATE estimator, which extends beyond the regime of bounded-degree graphs.

\begin{theorem}   \label{Thm:convergence_GATE}
Let $G_n$ be a sequence of graphs on $n$ nodes with maximum degree $\dmax$ and restricted growth coefficient $\kappa$ both possibly dependent on $n$.  Let all responses be within $[0, \bar Y]$. Then for RGCR with 1-hop-max, a fixed cluster-level randomization probability $p$, and either:
\begin{itemize}
\item $\kappa$ fixed, $\dmax$ = $o(n^{1/2})$ or
\item $\kappa, \dmax$ = $o(n^{1/6})$,
\end{itemize}
we have
$\var{\hat \tau _{\mathcal P} } \rightarrow 0
$
as $n \rightarrow \infty$, for both independent and complete cluster-level randomization.
\end{theorem}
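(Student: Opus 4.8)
The plan is to derive this convergence result as a direct corollary of the non-asymptotic bound in \Cref{Thm:var_GATE_restricted_growth}. That theorem gives, for every fixed $n$,
\[
\var{\hat \tau _{\mathcal P}} \leq \frac{2}{n} \cdot \bar Y^2 (\dmax+1)^2 \kappa^4 \left(p^{-1} + (1-p)^{-1}\right),
\]
valid for both independent and complete cluster-level randomization. Since $\bar Y$ is fixed and $p$ is a fixed constant bounded away from $0$ and $1$, the factor $\bar Y^2 \left(p^{-1} + (1-p)^{-1}\right)$ is an absolute constant independent of $n$. Thus it suffices to show that the remaining factor $(\dmax+1)^2 \kappa^4 / n \to 0$ as $n \to \infty$ under each of the two stated hypotheses; the conclusion $\var{\hat \tau_{\mathcal P}} \to 0$ then follows because the variance is nonnegative and dominated by a quantity tending to $0$.

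For the first regime, $\kappa$ is fixed in $n$, so $\kappa^4$ contributes only a constant factor, and the hypothesis $\dmax = o(n^{1/2})$ gives $(\dmax+1)^2 = o(n)$, whence $(\dmax+1)^2 \kappa^4 / n \to 0$ immediately. For the second regime, I would use the elementary fact that $a_n = o(f(n))$ and $b_n = o(g(n))$ imply $a_n b_n = o(f(n) g(n))$: from $\dmax = o(n^{1/6})$ we get $(\dmax+1)^2 = o(n^{1/3})$, and from $\kappa = o(n^{1/6})$ we get $\kappa^4 = o(n^{2/3})$, so their product is $o(n^{1/3+2/3}) = o(n)$, and again dividing by $n$ sends the bound to $0$. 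One may note that since $\kappa \leq \dmax$ always holds, the condition $\dmax = o(n^{1/6})$ already forces $\kappa = o(n^{1/6})$, so the second regime is really a single hypothesis on $\dmax$.

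There is essentially no hard step here: the theorem is a transparent asymptotic reading of \Cref{Thm:var_GATE_restricted_growth}, and the only point requiring any care is the little-$o$ arithmetic used to multiply the two growth bounds in the second regime, in particular keeping the exponents $1/3$ and $2/3$ straight so that the product lands at exactly $o(n)$ rather than something larger. The ``$+1$'' inside $(\dmax+1)^2$ is harmless, since $\dmax + 1 = o(n^{1/2})$ whenever $\dmax = o(n^{1/2})$, and likewise $\dmax + 1 = o(n^{1/6})$ whenever $\dmax = o(n^{1/6})$.
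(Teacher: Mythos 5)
Your proof is correct and matches the paper's approach exactly: the paper states \Cref{Thm:convergence_GATE} as a direct corollary of \Cref{Thm:var_GATE_restricted_growth}, obtained by precisely the little-$o$ arithmetic you carry out (and your side remark that $\kappa \leq \dmax$ makes the second regime's hypothesis on $\kappa$ redundant is consistent with the paper's own observation). Nothing is missing.
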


If $\kappa$ is fixed then the analogous sufficient condition for GCR (from \Cref{Eq:GCRVarUB}) requires $\dmax$ to be only $o(n)$. But if $\dmax$ and $\kappa$ are of similar order---as appendix~\ref{app:growth} suggests they often are empirically in social networks---the analogous GCR sufficient condition requires $\dmax$ to be $o(\log n)$, a significantly stronger requirement than under RGCR. 

The proof of the variance upper bound in \Cref{Thm:var_GATE_restricted_growth} does not apply to RGCR under a
randomized 3-net clustering. 
The reason the analysis breaks down is that local dependence (\Cref{Lem:1_hop_max-local_dep})
does not hold for the 3-net clustering algorithm. Specifically, the distribution of $\bC_{B_1(i)}$
depends on the structure of the whole network. 
For example, adding a single edge could make the incident nodes
less likely to be part of the seed set, and such a change of probability then 
reaches across the entire network, making each node more or less likely to be part of the seed set. 
Despite blocking our theoretical analysis, we still expect randomized $3$-net clustering
to undergo similar variance reduction
when using randomized clustering versus a single fixed clustering.
In \Cref{sec:simulation}, we show via simulation that the variance of RGCR with 3-net clustering
is much lower than with GCR, and it is in fact lower than that of RGCR with 1-hop-max clustering.


\subsection{Weighted randomized 3-net and 1-hop-max clusterings}

A drawback of both the 3-net and 1-hop-max clustering algorithms, shared by many 
existing approaches, is an implicit disadvantage for high-degree nodes: compared to
low-degree nodes
they are invariably connected to many more clusters and thus have much 
smaller exposure probabilities.
This phenomenon is supported by 
\Cref{Thm:prob_LB}, where we showed a exposure probability lower bound that
decreases with the size of its two-hop neighborhood.
Per \Cref{Thm:var_restricted_growth_general}, the smallest exposure probabilities 
(and thus, those for high degree nodes)
 dominate the variance in HT estimators.

To counteract the outsized contribution of high-degree nodes to the variance, 
we propose a weighted variant of both random clustering algorithms that
introduce additional node-level flexibility to adjust and balance the exposure probability of nodes. 
In particular, we can choose to prioritize high-degree nodes in these weighted clustering algorithms.
After introducing the algorithms in \Cref{sec:weighted_alg}, we presents
properties of these algorithms when they are used in RGCR,
highlighting similarities and differences when compared to unweighted counterparts.

\subsubsection{Algorithms}
\label{sec:weighted_alg}

Recall that, in the 1-hop-max clustering algorithm (\Cref{Alg:1_hop_max}),
we first independently generate a random number from the uniform distribution
and construct a clustering based on these generated random numbers: nodes with
higher numbers dominate their neighbors and are more likely to be in the center of a cluster.
Since the numbers are generated from a uniform distribution,
the probability that a given node generates a larger number than any other is always 1/2,
making higher-degree nodes less likely to dominate all their neighbors.

\begin{figure}[t]
  \centering
\begin{minipage}{.75\linewidth}
\begin{algorithm}[H]
\DontPrintSemicolon
    \KwIn{Graph $G = (V, E)$,  node weights $\bw \in \real{n}_+$.}
    \KwOut{Graph clustering $\bc \in \real n$}
    \For{$i \in V$} {
      $X_i \gets \mathcal \beta(w_i, 1)$\;
    }
    \For{$i \in V$} {
      $c_i \gets \max([X_j \text{ for } j \in B_1(i)])$\;
    }    
    \Return $\bc$\;
\caption{Weighted 1-hop-max clustering.}
\label{Alg:1_hop_max-w}
\end{algorithm}
\end{minipage}
\end{figure}

Our proposed fix to this problem is to change the first step of the algorithm,
generating numbers $X_i$ from a different non-uniform distribution at each node.
Let each node $i$ be associated with a weight $w_i > 0$ and then generate 
its number from a Beta distribution, $X_i \sim \beta(w_i, 1)$.
The full algorithm of weighed 1-hop-max is given in \Cref{Alg:1_hop_max-w}.

To understand the intuition behind the weighted scheme, we first note the following 
basic and well-known properties of the beta distribution, proven for completeness in \Cref{app:proofs}.
\begin{theorem}   \label{Prp:beta_dist_prop}
For independent random variables $X_i \sim \beta(w_i, 1)$, $X_j \sim \beta(w_j, 1)$,
we have 
\begin{enumerate}  [(a)]
\item $\prob{X_i > X_j} = \frac{w_i}{w_i + w_j}$,
\item $\max\{X_i, X_j\} \sim \beta(w_i + w_j, 1)$.
\end{enumerate}
\end{theorem}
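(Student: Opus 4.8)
The plan is to exploit the especially simple form of the $\beta(w,1)$ law: on $[0,1]$ its density is $f(x) = w x^{w-1}$, so its cumulative distribution function is the pure power law $\prob{X \le x} = x^{w}$. Both claims then collapse to one-line manipulations of this CDF, and the only ``obstacle'' worth flagging is recognizing that none of the general beta-function machinery is needed here precisely because the second shape parameter is $1$.

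First I would dispatch part (b), which is immediate. By independence the CDF of the maximum factorizes, so for $x \in [0,1]$,
\[
\prob{\max\{X_i, X_j\} \le x} = \prob{X_i \le x}\,\prob{X_j \le x} = x^{w_i}\, x^{w_j} = x^{\,w_i + w_j}.
\]
This is exactly the CDF of a $\beta(w_i + w_j, 1)$ random variable, which establishes (b).

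For part (a) I would condition on $X_j$ and integrate, again using independence:
\[
\prob{X_i > X_j} = \int_0^1 \prob{X_i > t}\, f_{X_j}(t)\, dt = \int_0^1 \bigl(1 - t^{w_i}\bigr)\, w_j\, t^{\,w_j - 1}\, dt.
\]
The first contribution integrates to $1$ and the second to $w_j/(w_i + w_j)$, giving $\prob{X_i > X_j} = 1 - w_j/(w_i + w_j) = w_i/(w_i + w_j)$, as claimed. Since the $\beta(w,1)$ law is continuous, ties occur with probability zero, so the strict and non-strict inequalities agree and no boundary case needs separate treatment.

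As an alternative that makes both parts transparent at once, I could pass to logarithms: if $X_i \sim \beta(w_i, 1)$ then $-\ln X_i \sim \mathrm{Exp}(w_i)$, since $\prob{-\ln X_i \ge y} = \prob{X_i \le e^{-y}} = e^{-w_i y}$. Under this change of variables, (a) becomes the classical race between two independent exponentials of rates $w_i, w_j$, and (b) follows because the maximum of the $X$'s corresponds to the minimum of the exponentials, and the minimum of independent exponentials with rates $w_i, w_j$ is $\mathrm{Exp}(w_i + w_j)$. I would present the direct CDF computation as the formal argument and mention this exponential viewpoint only as motivating intuition.
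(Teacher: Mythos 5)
Your proof is correct and follows essentially the same route as the paper's: part (b) by factorizing the CDF of the maximum, and part (a) by a one-dimensional integration against the $\beta(w,1)$ density (you condition on $X_j$ and use the survival function of $X_i$, while the paper conditions on $X_i$ and uses the CDF of $X_j$ — a mirror-image of the same computation). The exponential-race remark is a nice piece of added intuition but does not change the substance of the argument.
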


According to part (a) of \Cref{Prp:beta_dist_prop},
a node with a larger weight is more likely to generate a larger number.
Thus, by adopting larger weights at high degree nodes,
we can make the large degree nodes more likely to dominate their neighbors,
correcting their disadvantage in the unweighted scheme.

\begin{figure}[t]
  \centering
\begin{minipage}{.75\linewidth}
\begin{algorithm}[H]
\DontPrintSemicolon
    \KwIn{Graph $G = (V, E)$,  node weights $\bw \in \real{n}_+$.}
    \KwOut{Graph clustering $\bc \in \real n$}
    \For{$i \in V$} {
      $X_i \gets \mathcal \beta(w_i, 1)$\;
    }    
    $\pi \gets \arg\sort([X_i]_{i \in V}, descend)$\;
    $S \gets \emptyset$, unmark all nodes\;
    \For{$i \in \pi$} {
      \If{$i$ is unmarked} {
        $S \gets S \cup \{i\}$\;
        \For{$j \in B_2(i)$} {
          mark node $j$ if it is unmarked yet\;
        }
      }
    }
    \For{$i \in V$} {
      $c_i \gets \arg \min\{j \in S, j \rightarrow\dist(i, j) \}$, \ie, the id of the node in $S$
         with shortest graph distance to $i$ (arbitrary tie breaking)\;
    }    
    \Return $\bc$\;
\caption{Weighted 3-net clustering.}
\label{Alg:3_net-w}
\end{algorithm}
\end{minipage}
\end{figure}

This idea of node weighting can also be applied to 3-net clustering. 
In the unweighted version, we first generate a uniform random ordering
of all nodes, which is used to form a seed set and partition the network.
In a uniform random ordering where each node has an equal probability of
ranking first, large degree nodes are at disadvantage of being selected into
the seed set and being the center of a cluster, and thus less likely to be
network exposed. To compensate for this disadvantage,
we can generate a non-uniform
random ordering where large degree nodes are more likely to rank high.
A non-uniform random ordering can be generated by
a combination of Beta-distributed samples and sorting. 
Specifically, if each node $i$
is associated with a weight $w_i$, then we can first generate $X_i \sim \beta(w_i, 1)$,
and sort the samples in decreasing order. In this way, nodes
associated with a larger weight are more likely to rank higher
after sorting. Formally this weighted 3-net clustering algorithm is given in
\Cref{Alg:3_net-w}.

We note two connections between the weighted 3-net and 1-hop-max
clustering algorithms and their original unweighted versions. First, the weighted
version can be considered an extension of the unweighted algorithms:
when all nodes have the same weight, the weighted 3-net and 1-hop-max
algorithm are equivalent to the original algorithm. Second, for either
3-net or 1-hop-max clustering, the distribution of the random clustering
returned from the unweighted and weighted algorithms have the same
support, \ie, for clusterings that has nonzero probability of being generated
from the unweighted version, the probability of being generated from
the weighted version is also nonzero, and vice versa. The difference lies in,
certain clusterings are more or less likely to be generated in the weighted
version. Consequently, conditioning on the generated clustering and using it
in a GCR scheme, there is no difference between which version is
used to generate the clustering. However, in RGCR, which is based on
a distribution of clusterings, the weighted version might have superior
properties due to its node-level adjustments.

\subsubsection{Properties with arbitrary node weights}
\label{sec:properties_weighted}

In this section, we discuss properties of the weighted 3-net and 1-hop-max
algorithms with an arbitrary set of node weights. The result motivates our discussion
on good choices of node weights in section that follows.

First, we have the following lower bound on exposure probabilities
at each node. Similar to \Cref{Thm:prob_LB}, the result is based on analyzing the probability
that a node is ranked first in its 2-hop-neighborhood. 
The proof is given in \Cref{app:proofs}.
\begin{theorem}   \label{Thm:prob_LB_weighted}
With the weighted 3-net or 1-hop-max random clustering algorithm, using either independent
or complete randomization at the cluster level,
the full-neighborhood exposure probabilities for any node $i$ satisfy
\[
\prob{E_i ^{\bOne} \mid \mathcal P} \geq \frac{w_i}{\sum_{j \in B_2(i) }w_j} \cdot p, 
\quad
\prob{E_i ^{\bZero} \mid \mathcal P} \geq \frac{w_i}{\sum_{j \in B_2(i) }w_j} \cdot (1-p).
\]
\end{theorem}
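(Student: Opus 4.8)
The plan is to mirror the argument behind \Cref{Thm:prob_LB}, replacing its ``ranked first / largest uniform value'' event with the event that node $i$ attains the strict maximum of the weighted draws over its $2$-hop neighborhood, and then to evaluate the probability of that event using the beta-distribution facts collected in \Cref{Prp:beta_dist_prop}. Concretely, I would let $A_i$ denote the event $\{X_i > X_j \text{ for all } j \in B_2(i)\setminus\{i\}\}$, where the $X_j \sim \beta(w_j,1)$ are the independent draws used in \Cref{Alg:1_hop_max-w} and \Cref{Alg:3_net-w}; since these draws are continuous, ties occur with probability zero and $A_i$ is well defined up to a null event.

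The first step is to compute $\prob{A_i}$. Writing $M_i = \max\{X_j : j \in B_2(i)\setminus\{i\}\}$, part (b) of \Cref{Prp:beta_dist_prop} together with an easy induction on $\lvert B_2(i)\rvert$ gives $M_i \sim \beta\bigl(\sum_{j \in B_2(i)\setminus\{i\}} w_j,\, 1\bigr)$. Because $X_i$ is independent of $M_i$, part (a) then applies to the pair $(X_i, M_i)$ and yields
\[
\prob{A_i} = \prob{X_i > M_i} = \frac{w_i}{w_i + \sum_{j \in B_2(i)\setminus\{i\}} w_j} = \frac{w_i}{\sum_{j \in B_2(i)} w_j}.
\]

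The second step is to show that, on $A_i$, node $i$ together with all of $B_1(i)$ lies in a single cluster, for \emph{both} weighted algorithms; this is exactly the geometric content of \Cref{Thm:prob_LB} and carries over once ``first in the ordering / largest value'' is read as $A_i$. For weighted $1$-hop-max, every $j \in B_1(i)$ has $i \in B_1(j) \subseteq B_2(i)$, so on $A_i$ the $1$-hop maximum defining $c_j$ equals $X_i$, whence all of $B_1(i)$ carries the cluster label $X_i$. For weighted $3$-net, $i$ precedes every node of $B_2(i)$ in the descending sort, so $i$ is unmarked when reached, becomes a seed, and marks all of $B_2(i)$ (so no other node of $B_2(i)$ is a seed); since seeds form a distance-$3$ independent set, the triangle inequality makes $i$ the unique nearest seed to each $j \in B_1(i)$, and all of $B_1(i)$ is assigned to $i$.

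Finally, conditioning on any clustering realized within $A_i$, the single cluster holding $B_1(i)$ is assigned to treatment with probability exactly $p$ and to control with probability $1-p$, under both independent and complete cluster-level randomization, so $\prob{E_i^\bOne \mid A_i} = p$ and $\prob{E_i^\bZero \mid A_i} = 1-p$; combining with the computation of $\prob{A_i}$ gives
\[
\prob{E_i^\bOne \mid \mathcal P} \geq \prob{A_i}\, p = \frac{w_i}{\sum_{j \in B_2(i)} w_j}\, p,
\]
and the control bound follows identically. The only genuinely new ingredient relative to \Cref{Thm:prob_LB} is the beta computation of $\prob{A_i}$, so I expect the main thing to get right to be the independence that licenses part (a) of \Cref{Prp:beta_dist_prop} against the aggregated maximum $M_i$, and---exactly as in \Cref{Thm:prob_LB}---the fact that conditioning on $A_i$ leaves the marginal cluster-assignment probability equal to $p$ under complete randomization.
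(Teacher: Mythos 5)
Your proposal is correct and follows essentially the same route as the paper: the paper likewise reduces the claim to the event that node $i$ draws the largest value in $B_2(i)$, computes that event's probability exactly as you do (aggregating the competing draws via part (b) of \Cref{Prp:beta_dist_prop} and comparing via part (a), which is the paper's \Cref{Lem:ProbLargestNumber}), and then multiplies by the marginal cluster-assignment probability $p$. Your write-up simply makes explicit the geometric details (the triangle-inequality argument for $3$-net seeds) that the paper leaves implicit by citing \Cref{Thm:prob_LB}.
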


When all nodes have equal weights, then the weighted 3-net and 1-hop-max
algorithm degenerates to the original version, making \Cref{Thm:prob_LB_weighted}
a generalization of \Cref{Thm:prob_LB}.

Computing the exposure probability of each node might now be challenging,
but we again show that Monte Carlo estimation,
as in \Cref{Eq:prob_estimator_MC}, can efficiently achieve low relative error.
\begin{theorem}   \label{Thm:prob_MC_var_weighted}
Using either weighted 3-net or weighted 1-hop-max random clustering algorithm, and with 
$K$ Monte-Carlo trials, for any node $i$, the relative error of the
Monte-Carlo estimator is upper bounded in MSE as
\[\textstyle
\expect{\frac{\hat{\mathbb{P}}[E_i ^\bOne \mid \mathcal P] - \prob{E_i ^\bOne \mid \mathcal P}}{\prob{E_i ^\bOne \mid \mathcal P}} ~\middle |~ \mathcal P}^2
\leq \frac{1}{Kp} \cdot \frac{\sum_{j \in B_2(i) }w_j}{w_i}.
\]
\end{theorem}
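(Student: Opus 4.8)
The plan is to mirror the argument for the unweighted case (\Cref{Thm:prob_MC_var}), the only change being which exposure-probability lower bound is invoked at the very end. Write $q_i = \prob{E_i^\bOne \mid \mathcal P}$ for the true unconditional exposure probability, and for each Monte Carlo trial set $Q_i^{(k)} = \prob{E_i^\bOne \mid \bc^{(k)}}$, the exact conditional exposure probability under the realized clustering $\bc^{(k)}$ (computable as in GCR). Since the $\bc^{(k)}$ are drawn i.i.d.\ from $\mathcal P$, the $Q_i^{(k)}$ are i.i.d.\ random variables, and by the Law of Total Expectation (\Cref{Eq:ExpoProb_mix}) each satisfies $\expect{Q_i^{(k)}} = q_i$. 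The estimator $\hat{\mathbb{P}}[E_i^\bOne \mid \mathcal P] = \frac{1}{K}\sum_{k=1}^K Q_i^{(k)}$ is thus an unbiased average of i.i.d.\ terms.

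First I would rewrite the MSE of the relative error as a normalized variance. Because the estimator is unbiased for $q_i$, conditioning only on $\mathcal P$ (so that the randomness is over the $K$ trials),
\[
\expect{\left(\frac{\hat{\mathbb{P}}[E_i^\bOne \mid \mathcal P] - q_i}{q_i}\right)^2 ~\middle|~ \mathcal P}
= \frac{1}{q_i^2}\var{\hat{\mathbb{P}}[E_i^\bOne \mid \mathcal P]}
= \frac{\var{Q_i^{(1)}}}{K\, q_i^2},
\]
where the last equality uses independence across the $K$ trials.

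The key step is to bound $\var{Q_i^{(1)}}$, and here I would exploit that $Q_i^{(1)}$ is itself a (conditional) probability, hence takes values in $[0,1]$. Consequently $(Q_i^{(1)})^2 \leq Q_i^{(1)}$ pointwise, so
\[
\var{Q_i^{(1)}} \leq \expect{(Q_i^{(1)})^2} \leq \expect{Q_i^{(1)}} = q_i.
\]
Substituting yields the intermediate bound $\frac{\var{Q_i^{(1)}}}{K q_i^2} \leq \frac{1}{K q_i}$, which depends on the true exposure probability only through its reciprocal.

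Finally, I would invoke the weighted lower bound of \Cref{Thm:prob_LB_weighted}, namely $q_i \geq \frac{w_i}{\sum_{j \in B_2(i)} w_j}\cdot p$, to conclude
\[
\frac{1}{K q_i} \leq \frac{1}{Kp}\cdot \frac{\sum_{j \in B_2(i)} w_j}{w_i},
\]
as claimed. This argument applies verbatim to both the weighted $3$-net and weighted $1$-hop-max generators, since the only algorithm-specific input is the exposure-probability lower bound, which \Cref{Thm:prob_LB_weighted} supplies for both. I do not anticipate a serious obstacle: the entire proof rests on the boundedness trick $(Q^2 \leq Q)$ for the per-clustering conditional probability together with the already-established positivity lower bound. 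The only point requiring care is confirming that each $Q_i^{(1)}$ is a legitimate conditional probability lying in $[0,1]$, which holds because conditioning on $\bc^{(k)}$ leaves a well-defined cluster-level randomization whose full-neighborhood exposure probability is exactly the GCR quantity $\prob{E_i^\bOne \mid \bc^{(k)}}$.
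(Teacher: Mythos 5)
Your proof is correct and follows essentially the same route as the paper's: the paper proves the weighted statement by repeating the unweighted argument (\Cref{Thm:prob_MC_var}) verbatim and substituting the weighted lower bound of \Cref{Thm:prob_LB_weighted} at the final step, which is exactly your structure (unbiasedness reduces MSE to variance, the per-sample variance is bounded by $q_i$, independence divides by $K$, and positivity closes the bound). The only cosmetic difference is the justification of $\var{Q_i^{(1)}} \leq q_i$: the paper derives it via the law of total variance applied to the Bernoulli indicator $\indic{E_i ^\bOne}$, whereas you use the pointwise bound $(Q_i^{(1)})^2 \leq Q_i^{(1)}$ for a $[0,1]$-valued random variable---both are one-line arguments for the same inequality.
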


As before, stratified sampling can also be adapted for the weighted clustering methods.
Similar to the procedure in \Cref{sec:expo_prob_estimate}, we generate
$Kn$ clustering samples $\{\bc^{(k, i)}\}$, where in clusterings $\bc^{(k, i)}$, 
$k \in \{1, 2, \dots, n\}$, node $i$ is ``favored" and deterministically placed first.
Note that the likelihood of node $i$ naturally generating the largest draw is proportional
to $w_i$, per \Cref{Prp:beta_dist_prop}. The sample $\bc^{(k,j)}$ 
should be weighted accordingly. The estimated exposure probabilities should then be
\[
\hat{\mathbb{P}}[E_i ^\bz \mid \mathcal P] 
= \frac{ \sum_{k = 1}^K \sum_{j=1}^n w_j \prob{E_i ^\bz \mid \bc^{(k, j)}}}{K \sum_{j=1}^n w_j}.
\]
Again this stratified method is preferred over Monte Carlo estimation 
with independent samples since it guarantees positivity in the estimated 
exposure probabilities and reduces variance in the probability estimation.

\subsubsection{Choice of node weights}
\label{sec:weighted_opt_choice}

With the node-level flexibility in the weighted 3-net and 1-hop-max clustering,
a natural subsequent question is to find a good choice of node weights. 
In this section, we discuss two heuristics which lead to different sets of node weights.
The first heuristic suggests node weights based on the eigenvector of an eigenvalue problem
associated with the network's squared adjacency matrix. The second heuristic suggests uniform weights, i.e., 
the unweighted versions of the algorithms.

\xhdr{Maximizing the minimal exposure probability lower bound} 
As is discussed in the previous sections, high-degree nodes are less likely
than low-degree nodes to be network exposed 
using the unweighted 3-net or 1-hop-max clustering.
To correct this disadvantage, it might be ideal if all nodes have the
same exposure probability, or at least the same lower bound. 

Given a graph $G=(V,E)$, let $G_2 = (V, E_2)$ denote the ``squared'' graph,
\ie, with the same node set $V$, and an edge $(i, j) \in E_2$ if $i \in B_2(j)$ 
in the original network.
The adjacency matrix of $G_2$ is an irreducible non-negative matrix,
and according to the Perron-Frobenius theorem, its spectral radius,
denoted as $\lambda^*$, is also its largest positive eigenvalue. Moreover,
for the eigenvector $\bw^*$ associated with this eigenvalue, \ie,
\begin{equation}   \label{Eq:EigenVec} \textstyle
\sum_{j \in B_2{(i)}} w^*_j   = \lambda^* w_i ^*,
\end{equation}
all the elements $w_i ^*$ are positive. Therefore, $\bw^*$ provides
a valid set of node weights, which  we call the \emph{spectral weights}.

Using these spectral weights in the weighted 3-net or 1-hop-max
scheme, we show that as a
corollary of \Cref{Thm:prob_LB_weighted,Eq:EigenVec} (the proof logic is identical),
all nodes now have the same exposure probability lower bound.

\begin{theorem}   \label{Thm:prob_LB_spectral}
With the spectral-weighted 3-net or 1-hop-max random clustering algorithm, 
using either independent or complete randomization the cluster level,
the full-neighborhood exposure probabilities for any node $i$ satisfy
\[
\prob{E_i ^{\bOne} \mid \mathcal P} \geq \frac{p}{\lambda^*}, 
\quad
\prob{E_i ^{\bZero} \mid \mathcal P} \geq \frac{1-p}{\lambda^*},
\]
a uniform lower bound on the full neighborhood exposure probability
of all nodes.
\end{theorem}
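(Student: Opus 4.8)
The plan is to derive this uniform bound as an immediate corollary of \Cref{Thm:prob_LB_weighted}, by specializing the arbitrary node weights to the spectral weights $\bw^*$. First I would verify that $\bw^*$ is a legitimate weight vector in the sense required by the weighted algorithms, namely that it lies in $\real{n}_+$ with every coordinate strictly positive. This is exactly what the Perron--Frobenius theorem guarantees: the adjacency matrix of the squared graph $G_2$ is non-negative and irreducible (since $G$, and hence $G_2$, is connected), so its spectral radius $\lambda^*$ is a positive eigenvalue whose associated eigenvector $\bw^*$ may be taken entrywise positive. With positivity established, the hypotheses of \Cref{Thm:prob_LB_weighted} are met verbatim.

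The core of the argument is then a single substitution. \Cref{Thm:prob_LB_weighted} gives, for every node $i$,
\[
\prob{E_i ^{\bOne} \mid \mathcal P} \geq \frac{w_i^*}{\sum_{j \in B_2(i)} w_j^*} \cdot p.
\]
The key step is to observe that the denominator is governed precisely by the defining eigenvector equation \eqref{Eq:EigenVec}: since $\sum_{j \in B_2(i)} w_j^* = \lambda^* w_i^*$, the ratio collapses to
\[
\frac{w_i^*}{\sum_{j \in B_2(i)} w_j^*} = \frac{w_i^*}{\lambda^* w_i^*} = \frac{1}{\lambda^*},
\]
which, crucially, no longer depends on $i$. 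Substituting back yields $\prob{E_i^{\bOne} \mid \mathcal P} \geq p/\lambda^*$, and repeating the identical manipulation with the control bound of \Cref{Thm:prob_LB_weighted} yields $\prob{E_i^{\bZero} \mid \mathcal P} \geq (1-p)/\lambda^*$.

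I do not anticipate a genuine obstacle here, as the result is a direct corollary; the only points requiring care are the appeal to Perron--Frobenius to ensure $\bw^*$ is strictly positive (so that the weighted algorithm is well-defined and \Cref{Thm:prob_LB_weighted} applies), and the observation that $B_2(i)$ contains $i$ itself, so that the self-referential term $w_i^*$ genuinely appears in the sum defining $\lambda^*$ in \eqref{Eq:EigenVec}, making the cancellation exact. Everything else is the algebraic collapse displayed above, and the $i$-independence of $1/\lambda^*$ is exactly the uniformity asserted in the statement.
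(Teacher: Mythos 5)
Your proof is correct and is essentially identical to the paper's: the paper derives this result as a corollary of \Cref{Thm:prob_LB_weighted} combined with the eigenvector equation \eqref{Eq:EigenVec}, exactly the substitution and cancellation you perform, with Perron--Frobenius invoked (as you do) to guarantee $\bw^*$ is entrywise positive and $\lambda^*$ is its eigenvalue. Nothing further is needed.
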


We then have the following corollary (of \Cref{Thm:var_restricted_growth_unweighted}) 
upper bound on the variance 
of HT GATE  estimators using RGCR with spectral-weighted 1-hop-max
random clustering.

\begin{theorem}   \label{Thm:var_restricted_growth_spectral}
Using RCGR with spectral-weighted 
1-hop-max clustering, if every node's response is within $[0, \bar Y]$ then
\[
\var{\hat \mu _{\mathcal P}(\bOne) } \leq \frac{1}{n} \cdot  \bar Y^2 \lambda^* (d_{\max}+1)  \kappa^3 p^{- 1},
\]
for both independent and complete cluster-level randomization.
\end{theorem}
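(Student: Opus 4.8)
The plan is to treat this statement as a direct corollary, following the exact proof template of \Cref{Thm:var_restricted_growth_unweighted} but substituting a different exposure-probability lower bound. First I would invoke \Cref{Thm:var_restricted_growth_general}, which bounds $\var{\hat \mu _{\mathcal P}(\bOne)}$ by $\frac{\bar Y^2}{n^2}\sum_{i=1}^n \lvert B_4(i)\rvert / \prob{E_i ^\bOne \mid \mathcal P}$. The one point genuinely requiring care is that \Cref{Thm:var_restricted_growth_general} was stated for the unweighted 1-hop-max algorithm, whereas here we run the spectral-weighted variant (\Cref{Alg:1_hop_max-w}). I would argue the general bound transfers by observing that its proof rests only on the local dependence property of \Cref{Lem:1_hop_max-local_dep}, and that this property holds verbatim for the weighted algorithm: the cluster label $C_j = \max\{X_{j'} : j' \in B_1(j)\}$ is still a coordinatewise maximum of \emph{independent} node-level draws, so the joint law of $\bC_{B_1(i)}$ depends only on the subgraph induced on $B_2(i)$ regardless of whether the $X_{j'}$ are uniform or $\beta(w_{j'},1)$. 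This is the only non-mechanical step.

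With the general bound in hand, I would substitute the uniform exposure-probability lower bound from \Cref{Thm:prob_LB_spectral}, namely $\prob{E_i ^\bOne \mid \mathcal P} \geq p/\lambda^*$, so that $1/\prob{E_i ^\bOne \mid \mathcal P} \leq \lambda^*/p$ for every node $i$. I would then control the ball size via the restricted growth estimate $\lvert B_r(i)\rvert \leq (1+\dmax)\kappa^{r-1}$ evaluated at $r=4$, giving $\lvert B_4(i)\rvert \leq (1+\dmax)\kappa^3$. These are precisely the two inequalities used in the proof of \Cref{Thm:var_restricted_growth_unweighted}, except that the per-node factor $\lvert B_2(i)\rvert/p$ appearing there is now replaced by the \emph{node-independent} factor $\lambda^*/p$ coming from the spectral weights.

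Finally I would assemble the pieces: each summand is at most $(1+\dmax)\kappa^3 \cdot \lambda^*/p$, and summing $n$ identical bounds collapses one factor of $n$ against the $1/n^2$ prefactor, yielding $\var{\hat \mu _{\mathcal P}(\bOne)} \leq \frac{1}{n}\bar Y^2 \lambda^*(\dmax+1)\kappa^3 p^{-1}$, as claimed, valid for both independent and complete randomization since every ingredient holds in both regimes. The main obstacle is thus not the arithmetic, which is a one-line substitution, but confirming that the variance bound of \Cref{Thm:var_restricted_growth_general} is inherited by the weighted algorithm; once local dependence is checked for the $\beta(w_i,1)$ draws, the remainder is immediate.
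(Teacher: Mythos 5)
Your proposal is correct and follows essentially the same route as the paper's own proof: invoke \Cref{Thm:var_restricted_growth_general} (noting it extends to the weighted algorithm), substitute the uniform lower bound $p/\lambda^*$ from \Cref{Thm:prob_LB_spectral}, and bound $\lvert B_4(i)\rvert \leq (1+\dmax)\kappa^3$ via restricted growth. If anything, your argument is slightly more careful than the paper's, which merely asserts that the general variance bound holds for weighted 1-hop-max ``with an identical proof,'' whereas you explicitly verify that the local dependence property of \Cref{Lem:1_hop_max-local_dep} survives the switch from uniform to $\beta(w_i,1)$ draws.
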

\begin{proof}
We first note that, with an identical proof, one can verify that
\Cref{Thm:var_restricted_growth_general}
also hold with the weighed 1-hop-max clustering with any weights $\bw$.
Now similar to the proof of \Cref{Thm:var_restricted_growth_unweighted}, we have
\[ 
\var{\hat \mu _{\mathcal P}(\bOne)}
\leq \frac{\bar Y ^2}{n^2}  \sum_{i=1}^n \left[ \frac{\lambda^*}{p} \cdot \lvert B_4(i) \rvert  \right] \leq \frac 1 n \cdot \bar Y^2 \lambda^* (1 + d_{\max}) \kappa^3 p^{-1},
\]
where the first inequality is due to the exposure probability lower bound in 
\Cref{Thm:prob_LB_spectral}.
\end{proof}

As a final corollary, we have the following upper bound on the variance of the HT GATE estimator, by a proof 
identical to that of \Cref{Thm:var_GATE_restricted_growth}.
\begin{theorem}   \label{Thm:var_GATE_restricted_growth_spectral}
Using RCGR with spectral-weighted 
1-hop-max clustering, if every node's response is within $[0, \bar Y]$ then
\[\textstyle
\var{\hat \tau _{\mathcal P} } \leq \frac{2}{n} \cdot  \bar Y^2 \lambda^* (d_{\max}+1) \kappa^3 (p^{- 1} + (1-p)^{-1}),
\]
for both independent and complete cluster-level randomization.
\end{theorem}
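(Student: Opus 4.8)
The plan is to reduce the GATE variance bound to the two mean-outcome bounds already in hand, using only the elementary identity in \Cref{Eq:var-GATE} together with Cauchy--Schwarz and AM--GM to control the covariance term, thereby avoiding any direct reasoning about the joint exposure probabilities $\prob{E_i^\bOne \cap E_j^\bZero}$ appearing in \Cref{Eq:covar-mean_outcome}. Writing $A = \hat\mu_{\mathcal P}(\bOne)$ and $B = \hat\mu_{\mathcal P}(\bZero)$, I would start from $\var{\hat\tau_{\mathcal P}} = \var{A} + \var{B} - 2\cov{A,B}$ and bound $-2\cov{A,B} \le 2\lvert \cov{A,B}\rvert \le 2\sqrt{\var{A}\var{B}} \le \var{A} + \var{B}$, where the middle step is the covariance form of Cauchy--Schwarz and the final step is AM--GM. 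This yields the clean intermediate inequality $\var{\hat\tau_{\mathcal P}} \le 2\bigl(\var{A} + \var{B}\bigr)$.

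The second step is to supply the two mean-outcome bounds. For the treatment side, \Cref{Thm:var_restricted_growth_spectral} gives $\var{A} \le \frac{1}{n}\bar Y^2 \lambda^* (\dmax+1)\kappa^3 p^{-1}$ directly. For the control side, I would invoke the control-condition analogue: the general bound \Cref{Thm:var_restricted_growth_general} holds verbatim for $\hat\mu_{\mathcal P}(\bZero)$ with $\prob{E_i^\bZero \mid \mathcal P}$ in the denominator, and \Cref{Thm:prob_LB_spectral} supplies the matching lower bound $\prob{E_i^\bZero \mid \mathcal P} \ge (1-p)/\lambda^*$. Substituting gives $\var{B} \le \frac{1}{n}\bar Y^2 \lambda^*(\dmax+1)\kappa^3 (1-p)^{-1}$, i.e.\ the treatment bound with $p$ replaced by $1-p$.

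Combining the two steps, $\var{\hat\tau_{\mathcal P}} \le 2(\var{A} + \var{B}) \le \frac{2}{n}\bar Y^2 \lambda^*(\dmax+1)\kappa^3\bigl(p^{-1} + (1-p)^{-1}\bigr)$, which is exactly the claimed bound; since both ingredient bounds hold under independent and complete cluster-level randomization, so does the conclusion.

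I do not anticipate a serious obstacle here—the argument is essentially bookkeeping, and the paper itself notes it is ``identical'' to the proof of \Cref{Thm:var_GATE_restricted_growth}. The only point requiring a modicum of care is the implicit symmetry claim in the second step: I must confirm that the derivations underlying \Cref{Thm:var_restricted_growth_general,Thm:prob_LB_spectral} are genuinely invariant under the treatment/control swap $p \leftrightarrow 1-p$, rather than merely asserting it. This is where I would verify that nothing in the local-dependence structure of 1-hop-max or in the spectral-weight lower bound privileges the treatment assignment; given the explicit two-sided statement of \Cref{Thm:prob_LB_spectral}, I expect this to go through immediately.
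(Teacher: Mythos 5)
Your proposal is correct and follows essentially the same route as the paper: the paper proves this result by the argument it gives for \Cref{Thm:var_GATE_restricted_growth}, namely bounding $-2\cov{\hat\mu_{\mathcal P}(\bOne),\hat\mu_{\mathcal P}(\bZero)}$ via Cauchy--Schwarz and AM--GM to get $\var{\hat\tau_{\mathcal P}} \le 2(\var{\hat\mu_{\mathcal P}(\bOne)} + \var{\hat\mu_{\mathcal P}(\bZero)})$, then plugging in the spectral-weighted treatment bound and its control-side analogue with $p \leftrightarrow 1-p$. Your closing caution about verifying the treatment/control symmetry (and, implicitly, that \Cref{Thm:var_restricted_growth_general} extends to weighted 1-hop-max) is exactly the step the paper also handles by remark rather than by a separate proof.
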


Of note, according to the Perron-Frobenius theorem, we also have 
\[
\lambda^* \leq \max_i (\lvert B_2(i) \rvert) \leq (d_{\max}+1) \kappa.
\] 
As a result, this variance upper bound using spectral-weighted 1-hop-max clustering
can be used to furnish the variance upper bound for the unweighted 1-hop-max clustering  
(\Cref{Thm:var_GATE_restricted_growth}) as well.
These final inequalities are not necessarily strict improvements---they become equalities for a regular graph---but 
in practical settings they can lead to sizable improvements over unweighted clustering methods.

Having the same exposure probabilities at each node is ideal, whereas we note
that our spectral weights do not exactly achieve that. They merely maximize 
a uniform lower bound, the lower bound given in \Cref{Thm:prob_LB_weighted}.
The tightness of this lower bound might not be equal at each node,
since it only captures the scenario when the node is at the interior of a cluster. 
If a node is not in the interior and thus adjacent to multiple clusters, then
a lower-degree node is likely to be adjacent to fewer clusters and thus still
has higher exposure probability. Therefore, in reality, one might use a
weight where high-degree nodes are even more aggressively favored than
under spectral weighting. In \Cref{sec:simulation}, besides uniform weight  
and spectral weight, we also consider weighting each node by their degree directly.
Simulation results show that this aggressive degree weight strategy usually  
yields lower variance than both uniform weights and spectral weights.

\xhdr{Minimizing a variance proxy}
The above heuristic is intended to reduce the estimator variance, 
but a more direct approach would be to find the optimal weights that 
minimize the actual estimator variance.

That said, optimizing the variance, as formulated in
\Cref{Eq:var-mean_outcome,Eq:var-GATE,Eq:covar-mean_outcome},
is challenging because (i) it consists of cross-terms associated with the joint
exposure probability of node pairs that are hard to analyze,
and (ii) the nodes' response is unknown prior to the experiment, but can play
a significant role in determining the variance.
One compromise is to use a proxy objective function that resembles
the variance formula. We consider the following function
\begin{equation}   \label{Eq:ProxyVariance}
H(\bw) = \sum_{i=1} ^n \frac{1}{\prob{E_i ^\bOne \mid \mathcal P, \bw}},
\end{equation}
which overlooks the cross-terms and assumes a uniform response from all nodes.

Note that this proxy function is also intractable since one cannot efficiently
compute the exposure probability of each node given the weights. However,
one can obtain an upper bound of $H(\bw)$ using the exposure probability
lower bound in \Cref{Thm:prob_LB_weighted}, \ie,
\begin{equation}   \label{Eq:ProxyVarianceUB}
\bar H(\bw) =\frac{1}{p} \sum_{i=1} ^n \frac{\sum_{j \in B_2(i) }w_j}{w_i},
\end{equation}
and attempt to minimize this variance surrogate. We have the following result.
\begin{theorem}   \label{Thm:UniformOptimal}
The minimum of $\bar H(\bw)$ of is achieved with uniform weighting, \ie,
\[
\bar H(\bOne) \leq \bar H(\bw)
\]
for any $\bw \in \real n _+$.
\end{theorem}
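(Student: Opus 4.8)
The plan is to prove that uniform weighting minimizes $\bar H(\bw)$ by showing that the function is invariant under global scaling and then using a convexity/symmetry argument. First I would observe that $\bar H(\bw)$ is scale-invariant: replacing $\bw$ by $t\bw$ for any $t > 0$ leaves each ratio $\frac{\sum_{j \in B_2(i)} w_j}{w_i}$ unchanged, so without loss of generality I may normalize the weights, say by fixing $\sum_i w_i = n$ (so that $\bOne$ is a feasible point). This reduces the problem to a constrained minimization on a compact-after-normalization domain.

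The key structural step is to rewrite the objective as a sum over \emph{ordered pairs}. Since $j \in B_2(i) \iff i \in B_2(j)$ (the squared-graph adjacency is symmetric), I would expand
\[
\bar H(\bw) = \frac{1}{p} \sum_{i=1}^n \frac{\sum_{j \in B_2(i)} w_j}{w_i}
= \frac{1}{p}\sum_{i=1}^n \sum_{j \in B_2(i)} \frac{w_j}{w_i}.
\]
The diagonal terms $j = i$ each contribute exactly $1$ (since $i \in B_2(i)$), giving a constant $n/p$ that does not depend on $\bw$. The off-diagonal terms pair up: for each unordered pair $\{i,j\}$ with $j \in B_2(i)$ and $i \neq j$, the sum contributes $\frac{w_j}{w_i} + \frac{w_i}{w_j}$. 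By the AM--GM inequality (or simply $x + 1/x \geq 2$ applied to $x = w_i/w_j$), each such paired term is at least $2$, with equality if and only if $w_i = w_j$. Summing over all pairs, $\bar H(\bw)$ is bounded below by a constant that is attained precisely when $w_i = w_j$ for every edge $\{i,j\}$ of the squared graph $G_2$.

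To conclude, I would invoke connectivity of $G_2$: the adjacency matrix of $G_2$ is irreducible (noted in the excerpt when introducing the spectral weights), so $G_2$ is connected, and the equality condition $w_i = w_j$ for all edges of $G_2$ forces $\bw$ to be constant across all nodes. Hence the lower bound is achieved exactly at uniform weighting, and by scale-invariance every uniform weight vector---including $\bOne$---is a minimizer, giving $\bar H(\bOne) \leq \bar H(\bw)$ for all $\bw \in \real{n}_+$. The main thing to be careful about is the pairing bookkeeping: I must confirm that every off-diagonal term appears in exactly one reciprocal pair and that the diagonal is correctly separated out, since a miscount there would break the clean $x + 1/x \geq 2$ reduction. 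Connectivity of $G_2$ is what upgrades the per-pair equality conditions to a global ``all weights equal'' conclusion; if $G_2$ were disconnected the minimizer would only be required constant on each component, so this irreducibility remark is doing real work.
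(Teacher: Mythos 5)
Your proof is correct and follows essentially the same route as the paper's: symmetrize the sum over pairs with $\dist(i,j)\leq 2$, apply AM--GM to each reciprocal pair $\frac{w_j}{w_i}+\frac{w_i}{w_j}\geq 2$, and note the resulting lower bound $\frac{1}{p}\sum_i \lvert B_2(i)\rvert$ is exactly $\bar H(\bOne)$. Your added connectivity/irreducibility argument for $G_2$ establishes uniqueness of the minimizer (up to scaling), which is more than the stated inequality requires but is a valid refinement.
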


The first heuristic increased the exposure probability of high degree
nodes, but came at the cost of decreasing the exposure probabilities of low degree nodes. 
Thus it is not certain whether this heuristic would actually reduces variance. 
It is therefore interesting that under this second heuristic, 
if trusting $\bar H(\bw)$ as a surrogate,
according to \Cref{Thm:UniformOptimal} the optimal weights are the uniform weights, 
corresponding to the unweighted 3-net or 1-hop-max clustering algorithms. 

The construction of the surrogate variance, $\bar H(\bw)$, 
is based on the lower bound exposure probability in \Cref{Thm:prob_LB_weighted},
whose tightness varies between high and low degree nodes. 
Specifically, for a low-degree node $i$, its exposure probability trivially satisfies
$\prob{E_i ^\bOne}\geq p^{\degree i}$, a bound that could potentially be much 
higher than the lower bound $p w_i / (\sum_{j \in B_2(i)} w_j)$.
Consequently, assigning $i$ a low weight would not significantly increase its inverse
exposure probability as penalized in $\bar H(\bw)$.
Therefore, for a real-world network with a wide range of node degrees,
it can certainly still be a good idea to use a weighted clustering algorithm with high weights for high degree nodes.
Our simulations in \Cref{sec:simulation} further demonstrate this intuition.

\subsection{\hajek~estimator bias}
\label{sec:bias_hajek}

The \hajek\ estimator is much less amenable to theoretical analysis than the Horvitz--Thompson (HT) estimator, and so our analysis of the \hajek\ estimator of the GATE is much less extensive. Both GATE estimators depend on the same exposure probabilities, so the general analysis fo the exposure probabilities under randomized $3$-net and 1-hop-max sheds light on the behavior of the \hajek\ estimator as well. That said, the variance much less straight-forward to analyze.

Regardless of these theoretical difficulties, the \hajek\ estimator has many intuitive advantages as a GATE estimator, relative to the HT estimator. We catalog these intuitive advantages briefly, and also contribute a possibly useful observation about the \hajek\ GATE estimator: it is unbiased when the individual treatment effect is constant. 

In our simulations in \Cref{sec:simulation} we offer a full side-by-side evaluation of both the HT and \hajek\ estimators, and find that RGCR also improves \hajek\ estimator performance. That said, RGCR tends to provide order-of-magnitude improvements in the variance of HT estimators, relative GCR. The added benefits of RGCR for the \hajek\ estimator are more modest.

As a first generic advantage of the \hajek~GATE estimator over the HT estimator,
the value of the \hajek\ estimator of a mean outcome, $\tilde \mu(\bz)$, is bounded within the range
of all units' responses, due to the estimator having the form of a convex combination of the responses of all exposed units 
(weighted by the inverse exposure probability). As a result, when the responses are bounded then the \hajek\ estimator variance is immediately bounded.
In contrast, the value of the HT estimator can be far outside this range
of responses, due to its sensitivity to extremely small exposure probabilities, and the HT variance can then be much, much larger as well.

As a second advantage, the variance of the \hajek~estimator is invariant to a shift in unit responses:
if every unit's response is increased or decreased (additively) by a constant, 
then the variance of \hajek~estimator remains unchanged. This, again, is not a property of
the HT estimator for the same estimand.

As a third advantage, for a given outcome $\bZ=\bz$, the \hajek~estimator depends only on the relative value of network exposure probabilities
of all nodes, and is invariant to their absolute value. Specifically, for two sets of 
node-wise exposure probabilities
$\{\prob[1]{E_i ^\bz}\}_{i=1} ^n$ and $\{\prob[2]{E_i ^\bz}\}_{i=1} ^n$ which may 
come from two different experiment designs, if there is a constant $c$ such that
$\prob[1]{E_i ^\bz} = c\cdot \prob[2]{E_i ^\bz}$ for every node $i$, then for a given outcome $\bZ=\bz$ the two sets
of exposure probabilities yield the same \hajek~estimator.
This property might imply an advantage for the RGCR scheme compared with GCR in \hajek~estimation,
as the RGCR scheme yields a more uniform network exposure probability of all nodes:
RGCR tends to increase small probabilities of ``unlucky" nodes and decrease large probabilities of ``lucky'' nodes
compared to a GCR scheme with a fixed clustering (\Cref{Fig:mix_two_partitions}).

Compared with the HT estimator, a potential drawback of the H\'ajek estimator, 
widely known in the literature, is the potential issue of bias, \ie, $\expect{\tilde \mu(\bz)} \neq \mu(\bz)$
and $\expect{\tilde \tau} \neq \tau$.
However, for GATE estimation in the setting where 
every node has the same individual treatment effect 
($\tau_i \triangleq Y_i(\bOne) - Y_i(\bZero)$),  
we observe that it is somewhat surprisingly 
an unbiased estimator for the GATE.
\begin{theorem}   \label{Thm:UnbiasedHajek}
If the treatment effect of every node is constant across all nodes, \ie, $\tau_i \equiv \tau$,
then using either GCR or RGCR scheme with $p = 0.5$, we have $\expect{\tilde \tau} = \tau$.
\end{theorem}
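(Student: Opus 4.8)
The plan is to exploit a treatment--control symmetry that holds precisely at $p=0.5$, combined with the constant-effect assumption, to collapse $\expect{\tilde\tau}$ to an identity, while being careful that the ratio (self-normalized) form of the \hajek\ estimator cooperates. Let $\bZ$ denote the random global assignment and write the exposure indicators as functions of it, $\indic{E_i^\bOne}(\bZ)$ and $\indic{E_i^\bZero}(\bZ)$. The first step is to record the symmetry: at $p=0.5$, for GCR with any fixed clustering $\bc$ and for RGCR under either independent or complete cluster-level randomization, the cluster-level assignment distribution is invariant under the global flip $\bZ \mapsto \bOne - \bZ$ (for RGCR the clustering $\bC$ is drawn independently of the assignment and is untouched by the flip), so $\bZ \stackrel{d}{=} \bOne - \bZ$. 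Two consequences follow. First, $\prob{E_i^\bOne \mid \mathcal P} = \prob{E_i^\bZero \mid \mathcal P}$ for every node $i$, since network exposure to control under $\bOne-\bZ$ is exactly network exposure to treatment under $\bZ$. Second, deterministically $\indic{E_i^\bZero}(\bOne-\bz) = \indic{E_i^\bOne}(\bz)$ for every realized assignment $\bz$, by the same correspondence.

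The second step is a pointwise identity linking the two mean-outcome estimators. Let $\tilde\mu(\bZero)(\bz)$ denote the \hajek\ estimator of $\mu(\bZero)$ evaluated at a realized assignment $\bz$, and consider it at the flipped assignment $\bOne-\bz$. Using the second consequence to rewrite each $\indic{E_i^\bZero}(\bOne-\bz)$ as $\indic{E_i^\bOne}(\bz)$, and the first consequence to replace $\prob{E_i^\bZero\mid\mathcal P}$ by $\prob{E_i^\bOne\mid\mathcal P}$ in both numerator and denominator, the inverse-probability weights become exactly those of $\tilde\mu(\bOne)(\bz)$. Substituting the constant-effect relation $Y_i(\bZero) = Y_i(\bOne) - \tau$ into the numerator, the $-\tau$ contribution is a self-normalized weighted average of the constant $\tau$ over the exposed set, which equals $\tau$ because the \hajek\ normalization divides by the same total weight. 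This yields
\[
\tilde\mu(\bZero)(\bOne-\bz) = \tilde\mu(\bOne)(\bz) - \tau,
\]
valid on every assignment $\bz$ for which the relevant denominator is positive.

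The third step takes expectations. Expanding $\expect{\tilde\mu(\bZero)} = \sum_{\bz} \prob{\bZ = \bz}\,\tilde\mu(\bZero)(\bz)$ over the assignments where the estimator is defined and applying the change of variables $\bz \mapsto \bOne - \bz$, the flip symmetry $\prob{\bZ = \bOne - \bz} = \prob{\bZ = \bz}$ together with the pointwise identity gives $\expect{\tilde\mu(\bZero)} = \expect{\tilde\mu(\bOne)} - \tau$. Subtracting, $\expect{\tilde\tau} = \expect{\tilde\mu(\bOne)} - \expect{\tilde\mu(\bZero)} = \tau$, as claimed. The same change of variables sends the positive-denominator event for $\tilde\mu(\bZero)$ bijectively onto that for $\tilde\mu(\bOne)$, so the two estimators' domains of definition match up under the coupling and require no treatment beyond the standing positivity assumption.

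I expect the main point of care, rather than a genuine difficulty, to be the bookkeeping around the ratio form: one must verify that the constant $\tau$ really does factor out of the self-normalized sum (it does, precisely because the normalizing denominator cancels it to the bare constant) and that the flip coupling is consistent with the events on which $\tilde\mu(\bOne)$ and $\tilde\mu(\bZero)$ are well-defined. The two essential ingredients---flip symmetry at $p=0.5$ and the constant individual treatment effect---are both clean, and everything else is routine once the identity $\tilde\mu(\bZero)(\bOne-\bz) = \tilde\mu(\bOne)(\bz) - \tau$ is in hand.
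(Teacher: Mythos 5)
Your proposal is correct and follows essentially the same route as the paper's proof: both use the constant treatment effect to factor $\tau$ out of the self-normalized sum (which works precisely because the H\'ajek normalization cancels it), and both use the $p=0.5$ treatment--control symmetry to equate the expectations of the remaining weighted averages of $Y_i(\bZero)$. Your explicit flip coupling $\bZ \mapsto \bOne - \bZ$ is simply a constructive realization of the distributional identity the paper asserts (that $\{\indic{E_i^\bOne}\}_{i=1}^n$ and $\{\indic{E_i^\bZero}\}_{i=1}^n$ have the same joint law), so the two arguments coincide in substance.
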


In practice, individual treatment effects $\tau_i$ are reasonably non-constant across individuals, 
making the H\'ajek estimator potentially biased. In our simulations in  \Cref{sec:simulation}, which feature non-constant individual treatment effects, we find that this bias is modest in our settings and the overall mean squared error (MSE) of the \hajek~GATE estimator is broadly superior to that of the HT GATE estimator.

\section{The curse of large clusters}
\label{sec:ring}

In this section, we use a specific network and simple response model to study how the variance of RGCR is affected by network homophily. We conclude that in our model if the number of clusters returned by the clustering algorithm is $O(1)$ in the size of the graph, a non-vanishing variance persists as part of the HT and \hajek~estimators. We consider both independent and complete randomization.

We consider a ring-like network, the cycle graph with $n$ nodes, where each node $i \in \{1, 2, \dots, n\}$ 
is connected to nodes $i-1$ and $i+1$ (except for node $1$ and $n$ being connected). 
We further consider the following simple response model with network drift. For each node $i$,
\[
Y_i = a + b h_i + \tau \frac{\sum_{j \in B_1(i)} z_i}{1 + \degree{i}} ,
\]
where $a$, $b$, and $\tau$ are scalar constants. 

In the second term $h_i$ is the homophily drift also used in our simulations in Section~\ref{sec:simulation} and described in detail there. Informally, $h_i$ is defined according to a natural disagreement minimization problem on the graph. On the cycle graph this problem has the well-known closed-form solution 
$$h_i = \sin \alpha_i, \text{ where } \alpha_i \triangleq \frac{i}{2\pi n}.$$
Here $\alpha_i$ can be thought of as the \emph{angle} of node $i$ along an evenly spaced cycle. The solution comes from basic properties of the cycle graph Laplacian, which is a symmetric circulant matrix. This $h_i$ term then effectively models how nearby nodes generate similar responses while distant nodes generate different reponses.

The third term in the model represents a linear-in-means treatment effect, where we seek to estimate the GATE $\tau$.
As a brief forward reference, we note that this present model is simpler than the response model we consider 
in our simulations in \Cref{sec:simulation}, yet still sufficient to induce the curse of large clusters 
we seek to demonstrate.

With a constant $k > 0$ that divides $n$, an oracle clustering of this network into $k$ clusters is the $k$-partition formed by breaking the ``ring'' into $k$ equally-sized connected arcs. Note that there are $n/k$ such different oracle $k$-partitions.

We study the variance of the RGCR scheme with a random oracle $k$-partition,
in the large-network scenario when $n \rightarrow \infty$.
We have the following results on the HT estimator, with the proof given in \Cref{app:proofs}.
\begin{theorem}   \label{Thm:RingNetworkVariance}
Suppose $p = 1/2$ and $k = o(n)$, then as $n \rightarrow \infty$,
\begin{itemize}
\item with independent randomization, we have
\[
\var{\hat \tau} \rightarrow \frac{(2a+\tau)^2}{ k} + \frac{b^2 k}{\pi^2} (1 - \cos({2\pi}/{k})) = [(2a+\tau)^2 + 2b^2]\cdot \Theta(1/k),
\]
\item with complete randomization, we have
\[
\var{\hat \tau} \rightarrow \frac{b^2k^2}{\pi^2(k-1)} (1 - \cos({2\pi}/{k})) = 2b^2 \cdot \Theta(1/k),
\]
where $\hat \tau$ is the HT estimator of the GATE.
\end{itemize}
\end{theorem}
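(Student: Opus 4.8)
The plan is to compute the marginal and pairwise exposure probabilities induced by a uniformly random oracle $k$-partition, substitute them into the Aronow--Samii variance formulas (\Cref{Eq:var-mean_outcome,Eq:covar-mean_outcome,Eq:var-GATE}), and evaluate the $n \to \infty$ limit. First I would parametrize the $n/k$ oracle partitions by a shift $s \in \{0,1,\dots,L-1\}$ with $L = n/k$, so that drawing $\mathcal P$ amounts to drawing $s$ uniformly. In a given partition a node is \emph{interior} (its whole $1$-hop neighborhood lies in one arc) unless it sits at one of the $2k$ arc endpoints; a fixed node is interior for $L-2$ of the $L$ shifts and boundary for $2$, so its boundary contribution is $O(1/L)=O(k/n)$. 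The first fact I would establish is that these boundary corrections are negligible in the limit, so that every marginal exposure probability satisfies $\prob{E_i^\bOne \mid \mathcal P} \to p = 1/2$ and likewise for $\bZero$, and the diagonal terms of \Cref{Eq:var-mean_outcome} are $O(1/n)$ and vanish.

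The combinatorial heart of the argument is the pairwise co-clustering probability: for two nodes at ring distance $d$, the probability over the random shift that they land in the same arc is $q(d) = (1 - d/L)_+$. Conditioning on same-arc versus different-arc, and on the cluster-assignment rule, I would reduce each cross-term ratio in the variance formula to a function of $q(d)$. Under independent randomization the within-condition factor $\frac{\prob{E_i^\bOne \cap E_j^\bOne}}{\prob{E_i^\bOne}\prob{E_j^\bOne}} - 1$ collapses to $q(d)$ and the covariance factor $\frac{\prob{E_i^\bOne \cap E_j^\bZero}}{\prob{E_i^\bOne}\prob{E_j^\bZero}} - 1$ to $-q(d)$; under complete randomization, where two distinct arcs are jointly treated with probability $\frac{(k/2)(k/2-1)}{k(k-1)}$ (the same value whether one samples $k/2$ treated arcs without replacement or uses stratified pairing of the equal-size arcs), these become $\frac{kq(d)-1}{k-1}$ and its negation. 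I expect correctly marrying the shift randomness with these hypergeometric cluster-assignment probabilities, while confirming boundary pairs are asymptotically irrelevant, to be the main obstacle.

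With these factors in hand I would decompose each response as a constant level plus a homophily drift, $Y_i(\bOne) = (a+\tau) + b h_i$ and $Y_i(\bZero) = a + b h_i$, using $h_i = \sin(2\pi i/n)$ (the non-constant disagreement-minimizing eigenvector of the cycle Laplacian), and exploit $\sum_i h_i = 0$, $\sum_i h_i^2 = n/2$, and the autocorrelation $\sum_i h_i h_{i+r} = \frac{n}{2}\cos(2\pi r/n)$. Because every cross-term weight depends only on $d_{ij}$ and is therefore translation invariant in $i$, the mixed level$\times$drift sums vanish against $\sum_i h_i = 0$, so each of $\var{\hat\mu(\bOne)}$, $\var{\hat\mu(\bZero)}$, and $\cov{\hat\mu(\bOne),\hat\mu(\bZero)}$ splits into a level piece and a homophily piece. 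The level piece is governed by $\sum_{j\ne i}(\text{factor})$: under independent randomization this equals $\sum_{j\ne i} q(d_{ij}) = L-1 = \Theta(n/k)$, contributing $\Theta(1/k)$; under complete randomization the telescoping identity $\sum_{j\ne i}\frac{kq(d_{ij})-1}{k-1} = \frac{k(L-1)-(n-1)}{k-1} = -1$ makes the level contribution $O(1/n)$, which is exactly why the $(2a+\tau)^2/k$ term disappears under complete randomization.

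Finally I would evaluate the homophily piece, which in both schemes reduces to $V = \frac{1}{n^2}\sum_{i\ne j} q(d_{ij}) h_i h_j$. Using the autocorrelation identity this becomes $\frac{1}{n}\sum_{r=1}^{L}(1-r/L)\cos(2\pi r/n)$, which I would recognize as a Riemann sum converging to $\frac{1}{k}\int_0^1 (1-x)\cos(2\pi x/k)\,dx = \frac{k(1-\cos(2\pi/k))}{4\pi^2}$. Assembling $\var{\hat\mu(\bOne)} + \var{\hat\mu(\bZero)} - 2\cov{\hat\mu(\bOne),\hat\mu(\bZero)}$, and using that the covariance factor is precisely the negation of the variance factor (so the homophily contributions combine with an overall factor of $4$ and the level contributions combine through $(a+\tau)^2 + a^2 + 2a(a+\tau) = (2a+\tau)^2$), then yields $\frac{(2a+\tau)^2}{k} + \frac{b^2 k}{\pi^2}(1-\cos(2\pi/k))$ under independent randomization and $\frac{b^2 k^2}{\pi^2(k-1)}(1-\cos(2\pi/k))$ under complete randomization, matching the claim. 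A cross-check at large $k$ via $1-\cos(2\pi/k)\approx 2\pi^2/k^2$ recovers the stated $\Theta(1/k)$ forms.
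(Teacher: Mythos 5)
Your proposal is correct, and its probabilistic core coincides with the paper's: marginal exposure probabilities tend to $1/2$ up to $O(k/n)$ boundary corrections, the co-clustering probability is linear in distance (your $q(d) = (1-d/L)_+$ is exactly the paper's $\indic{\delta(\alpha_i,\alpha_j) < 2\pi/k}\,(1 - k\delta(\alpha_i,\alpha_j)/(2\pi))$), and the joint exposure probabilities conditional on same/different cluster --- including the $\frac{1}{k-1}$ pairing probability under complete randomization, which you correctly note induces the same pairwise assignment law as balanced sampling without replacement --- are identical to those in the paper. Where you genuinely diverge is in evaluating the limit: the paper parametrizes nodes by angles and grinds through double Riemann integrals using $\sin(\alpha+\delta)+\sin(\alpha-\delta) = 2\sin\alpha\cos\delta$, whereas you split each response into level plus drift, kill the mixed terms with $\sum_i h_i = 0$, evaluate the level piece exactly via $\sum_{j\neq i} q(d_{ij}) = L-1$ and the telescoping identity $\sum_{j\neq i}\frac{kq(d_{ij})-1}{k-1} = -1$ (a cleaner explanation than the paper's integral cancellation of why the $(2a+\tau)^2/k$ term survives only under independent randomization), and reduce the homophily piece to a single Riemann sum via the autocorrelation $\sum_i h_i h_{i+r} = \frac{n}{2}\cos(2\pi r/n)$; both routes arrive at the same integral $\int_0^1(1-x)\cos(2\pi x/k)\,dx$. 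Notably, your sign bookkeeping for the covariance --- the opposite-condition factor is the negation of the within-condition factor, so $\cov{\hat\mu(\bOne), \hat\mu(\bZero)}$ has a \emph{negative} limit --- is the consistent one: the paper's displayed intermediate covariance limits carry positive signs, evidently typos, since only negative limits reproduce the theorem's conclusion from $\var{\hat \tau} = \var{\hat\mu(\bOne)} + \var{\hat\mu(\bZero)} - 2\cov{\hat\mu(\bOne), \hat\mu(\bZero)}$. One small imprecision to fix in your write-up: under complete randomization the homophily piece is $\frac{k}{k-1}V + O(1/n)$ rather than $V$ itself (the constant part $-\frac{1}{k-1}$ of the weight contributes only $O(1/n)$ because $\sum_{i\neq j}h_ih_j = -n/2$); your final formula already carries this $\frac{k}{k-1}$ factor, so simply state that step explicitly.
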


\Cref{Thm:RingNetworkVariance} yields two important insights. 
First, if the clustering algorithm generates a fixed number of clusters, then the variance of
the HT GATE estimator, both for independent randomization and complete randomization, 
does not converge to 0 as $n \rightarrow \infty$. 
This is, in part or in full, due to
the issue of network homophily, a phenomenon commonly observed in real-world networks whereby closely connected nodes share common behaviors~\cite{mcpherson2001birds}.
In a large graph with few clusters, nodes in each cluster may generate different response than 
other clusters, obfuscate GATE estimation if we assign treatment/control at cluster level:
the difference in the responses of different clusters might be unrelated to the treatment effect, but instead due to endogenous node properties captured in the network topology~\cite{shalizi2011homophily}. 
Therefore, in order for the variance of the estimator to vanish under RGCR, the clustering algorithm needs to generate 
an increasing number of clusters as the network grows large.

Second, the analysis also shows a separate deficit of independent randomization: 
the variance increases quadratically with the average response $a$,
making the estimation sensitive to the scaling and shifting of the average responses.
In contrast, complete randomization does not suffer from this issue, with a variance 
under this response model that is independent of $a$. 
Therefore, we recommend that one should use complete randomization with 
RGCR whenever possible (when positivity is satisfied), 
a change from ordinary GCR where complete randomization
typically does not satisfy positivity for any relevant exposure model.

We also note that the above complete randomization result for the HT estimator applies equally for the \hajek~estimator, 
since these two estimators are asymptotically equivalent in this specific setting. 
Under complete randomization, and due to the fact that each cluster in the oracle $k$-partition contains the same number of nodes,
we always have a constant number of nodes in the treatment and control groups.
Moreover, due to the symmetry of the network, every node has the same exposure probability $\prob{E_1 ^\bz} \to 1/2$ in the limit of $n \to \infty$.
Therefore, the denominator of the \hajek~estimator concentrates at a constant $n$, making it equivalent to the HT estimator.
In summary, we also have non-vanishing variance in the \hajek~estimator if the number of clusters $k$ is bounded as $n \to \infty$.

\section{Simulation experiments}
\label{sec:simulation}

In this section we evaluate the performance of the randomized graph cluster randomization 
(RGCR) scheme in diverse simulations. 
After introducing the simulation setup in \Cref{sec:network,sec:response_model}, 
we examine the behavior of the HT estimator in \Cref{sec:sim_HT_reduction,sec:sim_HT_RGCR},
and the \hajek~estimator in \Cref{sec:simulation_hajek}.
For each estimator, we first demonstrate significant variance reduction 
(as well as bias reduction for the \hajek~estimator) under the RGCR scheme compared with GCR,
and then compare the bias, variance, and mean squared error (MSE) under RGCR
employing various random clustering algorithms.

As randomized clustering algorithms we consider both randomized 3-net and 1-hop-max,
both applied in unweighted, spectral-weighted, and degree-weighted forms.
Note that for RGCR designs we consider both independent and complete randomization while for 
GCR we only consider independent randomization (complete randomization is unattractive 
under GCR due to the potential violation of our positivity assumption).
We find that the spectral- and degree-weighted variants of $3$-net and $1$-hop-max clusterings 
further reduce the variance of the HT estimator and the bias and variance of the \hajek~estimator
(compared with the unweighted clustering algorithms).
In comparing complete randomization and independent randomization,
we find that complete randomization leads to lower variance 
in the HT estimator and the two approaches have comparable bias and variance for
the \hajek~estimator.

These estimators require exposure probabilities, 
which are estimated with Monte Carlo methods introduced in
\Cref{sec:expo_prob_estimate,sec:properties_weighted}.
In \Cref{sec:est_expo_prob} we demonstrate the high accuracy in estimation,
and visualize how the exposure probabilities vary under different random clustering algorithms.
Specifically, we observe that applying the spectral- or degree-weighting scheme
can increase the smallest exposure probabilities compared with the unweighted versions,
offering an explanation of why the variance of the HT estimator as well as the bias and variance 
of the \hajek\ estimator is reduced.

Besides examining the bias, variance, and mean square error (MSE) in each network,
we conclude this section by demonstrating how these quantities decay with the size $n$
of the network. In \Cref{sec:sim_diff_n}, we show that the bias and variance of
the \hajek~estimator decays with a much higher rate under RGCR compared with GCR,
which results in even more significant bias and variance reduction on large networks.
These results highlight the broad favorability of the RGCR scheme in practice.


\subsection{Networks}\label{sec:network}

We consider two interference networks across the experiments in this section.
The first network is drawn from a variation on the small-world network model proposed 
by Kleinberg~\cite{kleinberg2000small}, itself a modification of 
small-world model proposed by Watts and Strogatz~\cite{watts1998collective}.
Besides the two properties of the Watts--Strogatz model
 of high clustering and short average pairwise distance,
Kleinberg's small-world model is known for its navigability: individuals can find 
short chains from purely local information without centralized search~\cite{milgram1967small}.

The navigable small-world network is constructed from a periodic 2-dimensional lattice:
for each node, add a pre-specified number of long edges, where the other end of each edge
is randomly chosen on the network with probability proportional to the square of the inverse lattice
distance, \ie,
\[
\prob{\text{node $v$ is the end of random long edge from $u$}} \propto \dist[lattice](u, v)^{-2}.
\]
The network we use is generated from a $96 \times 96$ lattice, where the number of long edges
at each node is drawn from a power-law distribution~\cite{clauset2009power} with exponent $\alpha = 2.3$.
The resulting degree distribution is then heavy-tailed. We draw exactly one network from the model and fix 
it throughout the majority of our simulations. At a later point in the simulation discussion we consider
the effects of varying the network size within the framework of this model; 
we then sample a single graph for each lattice dimension.

Our second network for simulations is a snapshot the Facebook friendship network 
among Stanford students in 2005, included in the Facebook-100 dataset~\cite{traud2012social}. 
Some basic properties of these two main networks are given in \Cref{tab:network_stat}
with more detailed
growth statistics given in \Cref{app:growth}.

\begin{table}[t]
\centering
\begin{tabular}{c  @{\hskip 5ex} c @{\hskip 5ex}  c @{\hskip 5ex}  c @{\hskip 5ex}  c @{\hskip 5ex}  c
}
\toprule
network   & $n$ & $m$ & $\bar d$ & $\dmax$ & $\kappa$
\\ \midrule
\smallworld & ~9,216 & ~55,214 & 11.98 & 42 & 21.8
\\ \midrule
\stanford & 11,586 & 568,309 & 98.10 & 1172 & 586.5
\\ \bottomrule
\end{tabular}
\caption{Basic properties of the two interference networks studied in our simulations. For more detailed
growth statistics on these two networks, see \Cref{app:growth}.}
\label{tab:network_stat}
\end{table}

\subsection{Response model}\label{sec:response_model}

Our response model is intentionally more complicated than response models studied in pervious
simulations of network interference; the added complications are intended to 
inject realism into the simulations. We propose that this model is ``as simple as possible 
but not simpler'', where removing any one of these components can mislead one to 
conclude that overly simplistic designs or analyses would work well in practice.
We use the following response model throughout this simulation section:
\begin{eqnarray}
\label{Eq:Response0}
Y_i(\bZero)  &=&  (a + b \cdot h_i + \sigma \cdot \epsilon_i) \cdot \frac{\degree{i}}{\bar d}, \\
\label{Eq:ResponseZ}
Y_i(\bz)  &=&  Y_i(\bZero) \cdot \left(1 + \delta z_i + \gamma  \frac{\sum_{j \in \nbr{i}} z_j}{\degree{i}} \right).
\end{eqnarray}
The model has the following components.

\xhdr{Parameters}
The parameters $a$, $b$, and $\sigma$ are constants, 
where $a$ controls the shifting in average node's response,
$b$ controls the magnitude of homophily that results in a network drift effect (discussed below),
and $\sigma$ controls the noise level where $\epsilon_i \sim_{\iid} N(0, 1)$ 
is independent of any other node attributes.
In all our experiments we use $a = 1$, $b = 0.5$, $\sigma = 0.1$, 
and let $\delta =\gamma=0.5$ for the treatment effects.

\xhdr{Interference} 
Focusing first on the treatment effect, $\delta z_i$ represents the direct effect and 
$\gamma \frac{\sum_{j \in \nbr{i}} z_j}{\degree{i}}$ represents spillovers.
With this response model, the full-neighborhood exposure model is properly
specified, and we have 
\begin{equation}   \label{Eq:mode_tau_i}
\tau_i = Y_i(\bOne) - Y_i(\bZero) = (\delta + \gamma) \cdot Y_i(\bZero),
\end{equation}
and the GATE $\tau$ becomes
\begin{equation}   \label{Eq:mode_tau}
\textstyle
\tau = \frac{1}{n}\sum_i \tau_i = (\delta + \gamma) \cdot \mu(\bZero).
\end{equation}

\xhdr{Degree-correlated responses}
The role of the degree $\degree{i}$ and average degree $\bar d$
induce a strong correlation between node degree and control 
response, a realistic phenomenon~\cite{basse2018model} that 
also injects heavy-tailed-ness into the response distribution 
whenever the degree distribution is heavy-tailed.

\xhdr{Multiplicative treatment effect} 
Instead of the more common additive treatment effect
here the treatment effect is multiplicative at the node level. This multiplicative model,
which has also been studied elsewhere~\cite{aronow2017estimating}, 
caries forward the correlation between degree and control response to cause a heterogeneous
``individual'' global treatment effect. 
Note that, according to \Cref{Thm:UnbiasedHajek}, a heterogeneous treatment effect
is required to reveal the bias in \hajek~estimation.
A multiplicative treatment effect can also be deemed natural because 
the different exposure levels incur the same \emph{relative} change in a units' response.

\xhdr{Homophily}
Our use of a network homophily term $h_i$ is new to the literature on
causal inference under interference, and we believe it provides an important missing piece
for evaluating experimental designs under a more realistic response model.
This term represents the network drift phenomenon
where closely connected nodes usually have similar characteristics and consequently
similar response, while distant nodes can be dissimilar. 
For example, in the United States, many behaviors are correlated with geography,
while network structure is also very obviously correlated with geography~\cite{ugander2013balanced}.
Failure to consider this network drift in the response incurs additional variance 
in the estimation under graph cluster randomization: 
assigning all the East Coast users into the treatment while all West Coast users
to control would make GATE estimation sensitive to any variations
related to this network-level drift effect.
We note that such network drift is different from network correlation,
which is modeled by a Gaussian Markov Random Field (GMRF)~\cite{basse2018model}.
GMRF models only introduce local correlation in the nodes response,
while they do not impose a global drift in response at the network level.

We construct our $h_i$ feature as solving the following disagreement minimization problem:
\begin{equation}   \label{Eq:Opt}
\begin{array}{cl}
\min_{\bh \in \real{n}}  &  \sum_{(i,j)\in E} (h_i - h_j)^2 \\
\\
\text{subject to}  & \sum_{i = 1}^n \degree{i} h_i = 0, \\
\\
& \max\{|h_i|: 1 \leq i \leq n\} = 1.
\end{array}
\end{equation}
Without the constraints it is clear that any constant $h_i$ would minimize the objective,
but we are constructing $h_i$ to be the scalar function that minimizes
the disagreement across all edges, subject to their being non-zero disagreement.

This minimization problem is a classic problem in spectral graph theory,
where the solution is the eigenvector associated with the second smallest eigenvalue 
of the normalized graph Laplacian matrix $D^{-1} L$~\cite{von2007tutorial}, \ie,
\[
D^{-1} L \bh = \lambda_2 \bh,
\]
where $D$ is the diagonal degree matrix, and $L = D - A$ is the unnormalized 
graph Laplacian.
The problem notably arises in spectral clustering~\cite{von2007tutorial}
when the goal is to partition the graph into non-overlapping clusters
so that each cluster is internally well-connected while loosely linked between each 
other\footnote{According to Cheeger inequality~\cite{cheeger1969lower}
the eigenvalue $\lambda_2$ is related with the optimal conductance
of the graph bisection, and in the spectral clustering algorithm, 
a favorable bisection is obtained by thresholding~\cite{spielman1996spectral} 
or executing $k$-means (with $k=2$)
on the eigenvector $\bh$~\cite{shi2000normalized,von2007tutorial}.}.
The eigenvector $\bh$ can be obtained by various numerical linear algebra
algorithms, e.g., power iteration~\cite{golub2012matrix},
with computational complexity $O(m \log (1/\epsilon))$, where $m$ is the number of edges 
in the network and $\epsilon$ is the target accuracy in the output eigenvector.

A consequence of using this drift function $h_i$ is that, as $n \rightarrow \infty$,
we have $\mu(\bZero) \rightarrow a$.
Such convergence is due to $\sum_{i} h_i \degree{i} = 0$ as in \Cref{Eq:Opt},
and $\frac{1}{n} \sum_i \epsilon_i \degree{i} \rightarrow 0$ since $\epsilon_i \sim_{\iid} N(0,1)$
which is also independent of $\degree{i}$. As a result, $\frac{1}{n} \sum_i a (\degree{i} / \bar d) = a$.
Consequently, we have
\[
\tau = \mu(\bZero) \rightarrow 1,
\]
as $n \rightarrow \infty$.

In \Cref{Fig:ResponseHeatmap} 
we visualize the node homophily feature $h_i$
in our heavy-tailed small-world network
as well as the resulting response under global control $Y_i(\bZero)$.
Notice that adjacent nodes on the lattice have closer value in $h_i$,
while distant nodes tend to have significantly different $h_i$.
The variation of $h_i$'s along the lattice is not smooth due to the 
existence of a heavy-tailed number of long-range edges in the small-world network.

\begin{figure}[t] 
   \centering
\begin{minipage}{0.4\linewidth}\centering
   \includegraphics[height=2in]{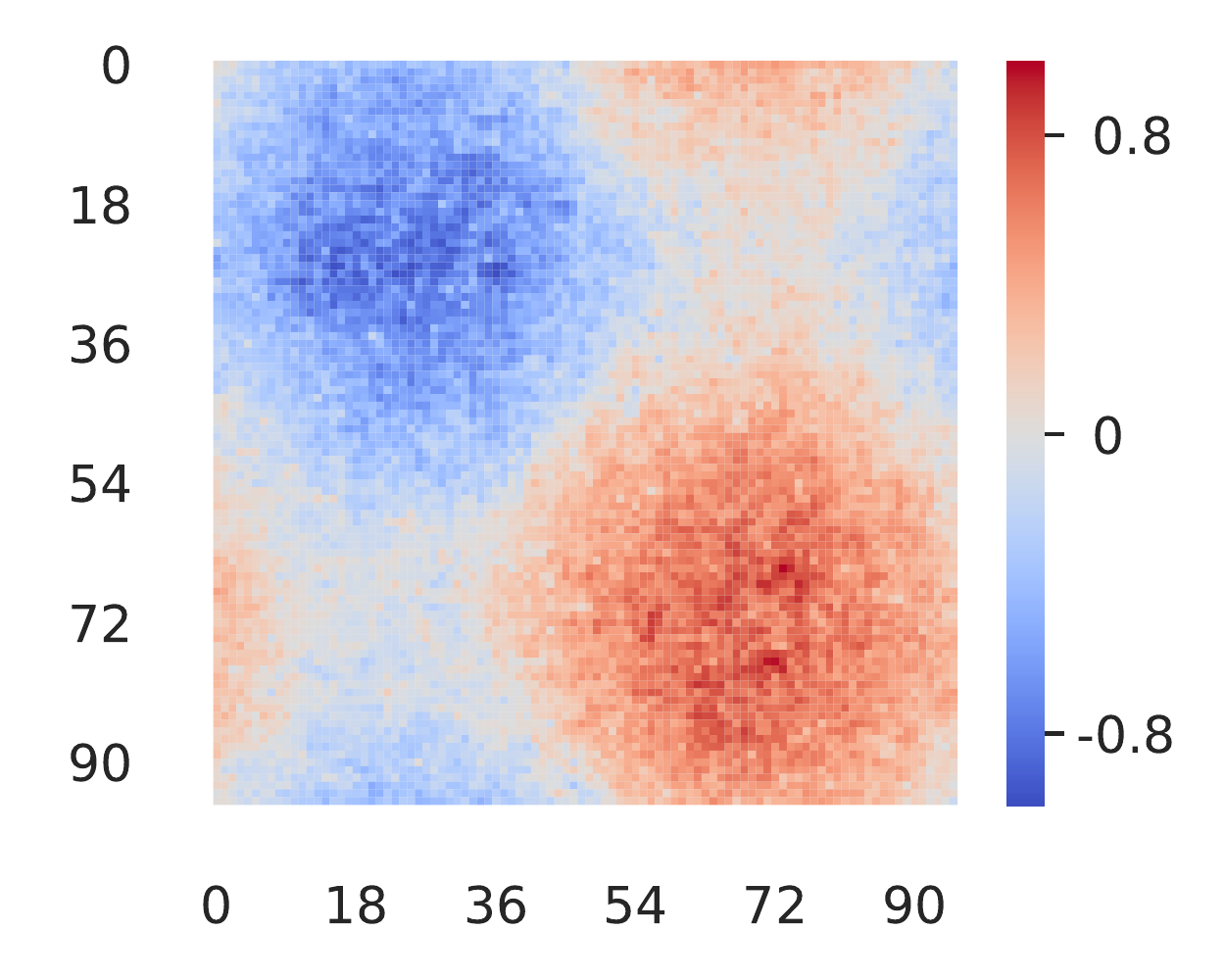} 
   \vspace{-1.5ex}
   \\{\footnotesize $h_i$   }
\end{minipage}
\begin{minipage}{0.4\linewidth}\centering
   \includegraphics[height=2in]{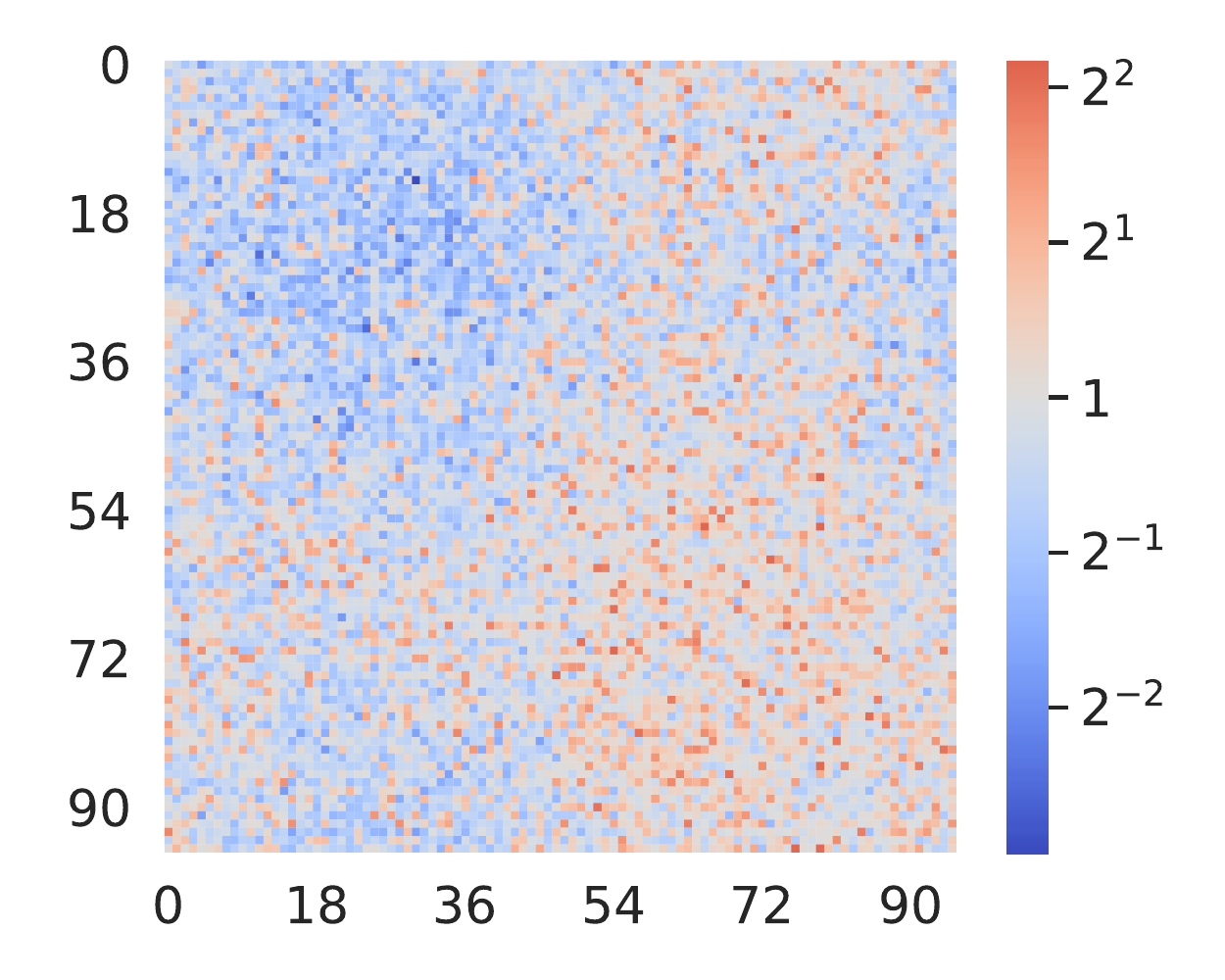} 
   \vspace{-1.5ex}
   \\{\footnotesize $Y_i(\bZero)$   }
\end{minipage}
\caption[Illustration of node locality feature and response]
{Heatmap of the node homophily feature $h_i$ (left) and
the corresponding response under global control $Y_i(\bZero)$ (right)
of a heavy-tailed small-world network based on the periodic $96 \times 96$ lattice.
}
\label{Fig:ResponseHeatmap}
\end{figure}


\subsection{Variance reduction of $\hat \tau$}\label{sec:sim_HT_reduction}

In this section, we demonstrate the significant variance reduction of the HT estimator 
under randomized graph cluster randomization (RGCR) compared with the standard GCR scheme. 

To make the benefits of randomization concrete, 
for each RGCR design we study the variance of HT GATE estimators 
when mixing $K$ clusterings, for varying values of $K$.
Specifically, we first generate $K$ random clusterings from the random clustering algorithm $\mathcal P$ and fix them. 
We then let the random clustering distribution $\mathcal P_K$ be the uniform distribution 
among the $K$ generated clusterings, $\{\bc^{(k)}\}_{k=1}^K$:
in the design phase we use a single clustering
out of the $K$ clusterings drawn uniformly at random;
and in the analysis phase, the exposure probability of each node under $\mathcal P_K$ is then
\begin{equation}   \label{Eq:FiniteMixProb}
\mathbb{P}[E_i ^\bz \mid \mathcal P_K] = \frac{1}{K} \sum_{k = 1}^K \prob{E_i ^\bz \mid \bc^{(k)}}.
\end{equation}

The benefit of analyzing mixtures of $K$ fixed clusterings is two-fold.
First, under this $K$-cluster design we can compute the exposure probabilities 
(\Cref{Eq:FiniteMixProb}) exactly (compared to estimating 
them using a Monte Carlo procedure when considering the full clustering distribution
$\mathcal P$).
Second, it unifies the GCR and RGCR schemes:
With $K = 1$ we are considering the standard GCR scheme 
with a single fixed clustering, while as $K \rightarrow \infty$,
it approaches the RGCR scheme based on the random clustering algorithm $\mathcal P$. 

In our simulations we contrast mixtures of $K \in \{1, 10, 10^2, \dots, 10^6\}$ clusterings
for both our heavy-tailed small-world network and the \stanford~friendship network.
The variance of the GATE estimators depend on the $K$ partitions being randomly generated,
and thus we repeat each partition generation process $400$ time to obtain a distribution.
We assign each cluster to the treatment group with probability
$p = 0.5$, the symmetric assignment scenario where we have 
$\prob{E_i ^\bOne \mid \mathcal P} = \prob{E_i ^\bZero \mid \mathcal P}$.

\begin{figure}[t] 
   \centering
 \begin{minipage}{0.4\linewidth}\centering
   \includegraphics[height=2.2in]{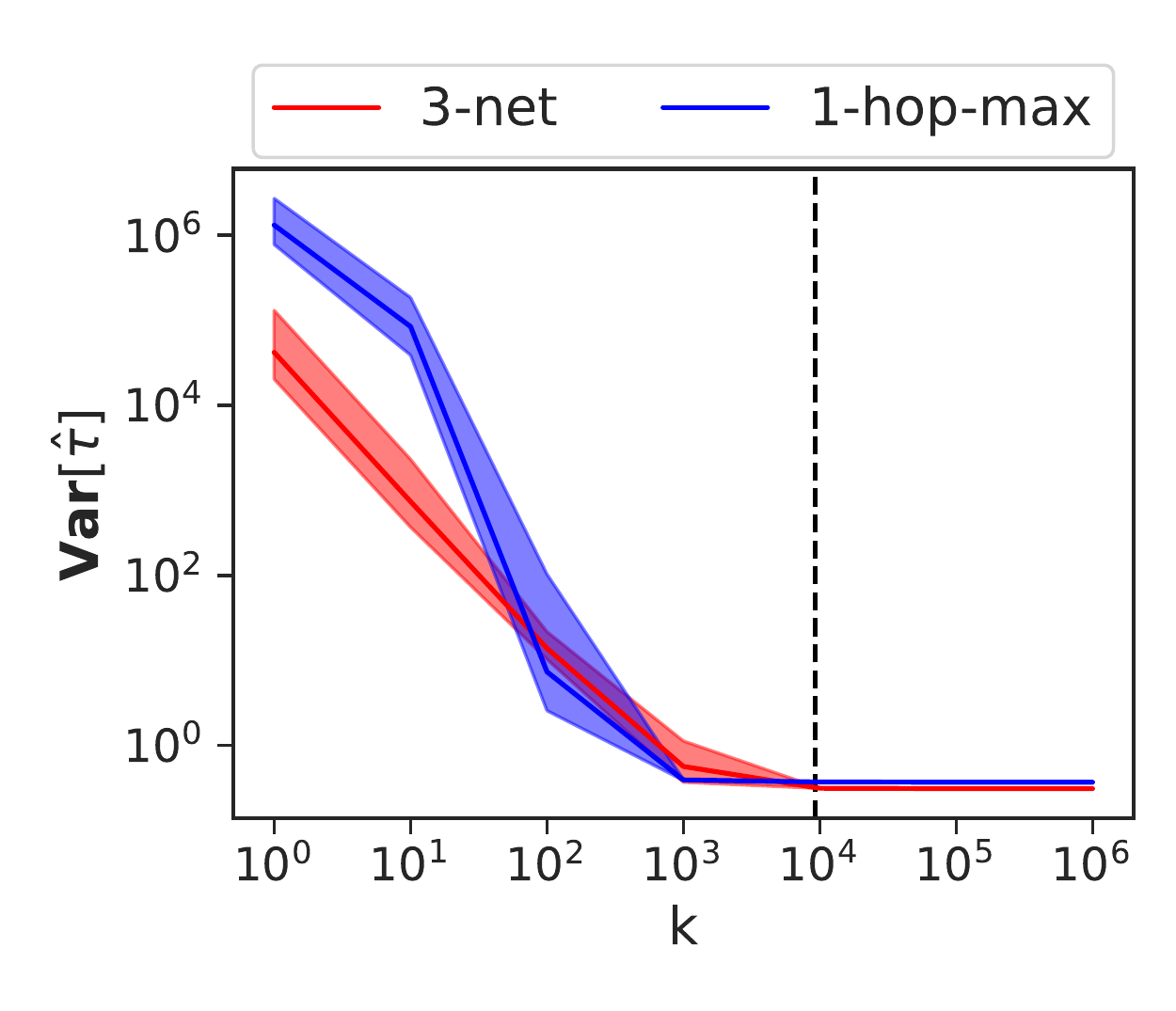} 
   \vspace{-1em}
   \\{\footnotesize \hspace{3em} \smallworld  }
 \end{minipage}
 \begin{minipage}{0.4\linewidth}\centering
   \includegraphics[height=2.2in]{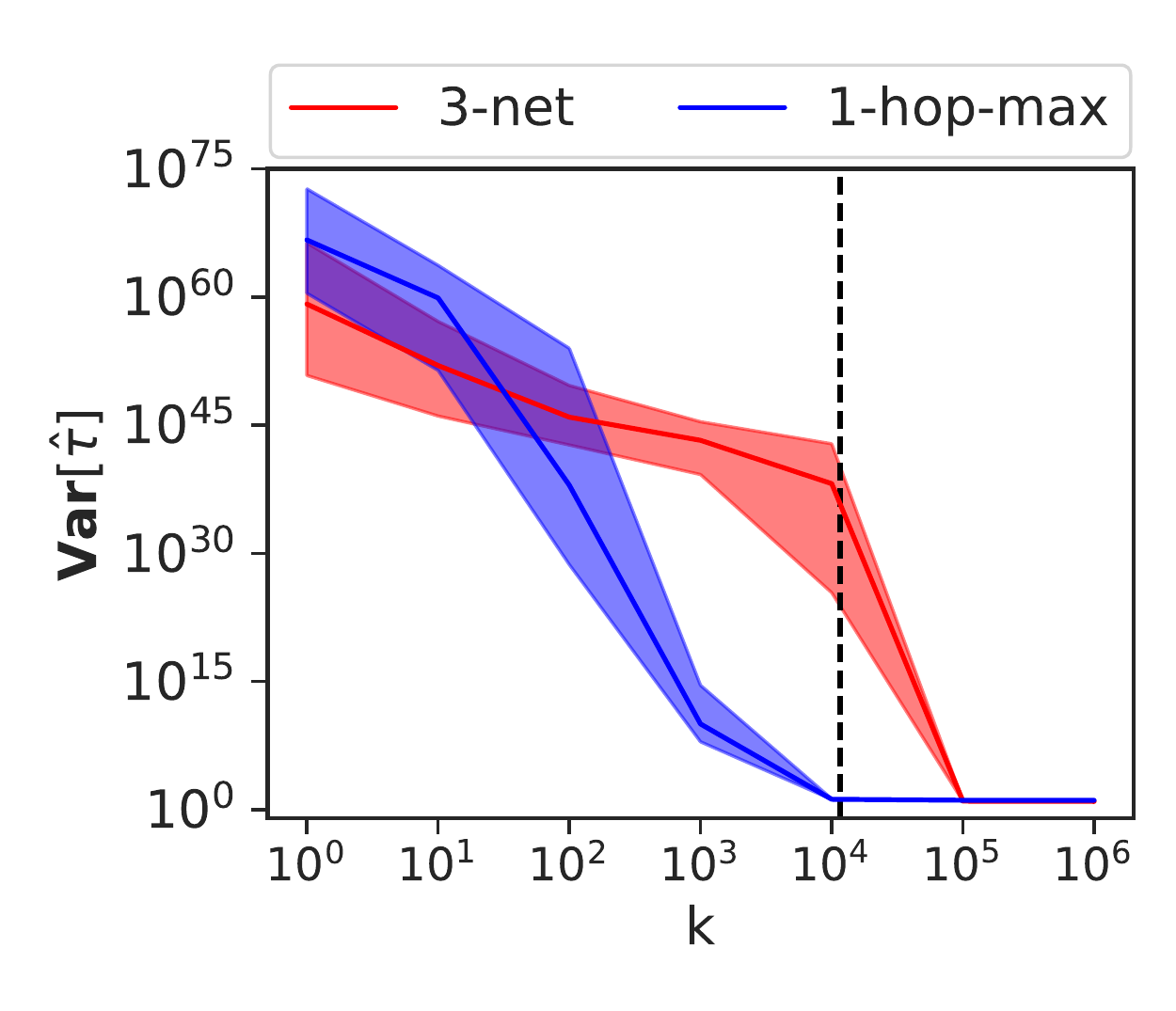} 
   \vspace{-1em}
   \\{\footnotesize  \hspace{3em}  \stanford   }
 \end{minipage}
\caption[Comparing $\var{\hat \tau}$ between GCR and RGCR schemes]
{The distribution of $\var{\hat \tau}$ when mixing $K$ random clusterings
from the unweighted randomized 3-net and 1-hop-max algorithms
in the heavy-tailed \smallworld\ network (left) and the \stanford\ network (right).
The number of nodes in each network is marked by a dashed line.
We plot the median (solid line) as well as the 2.5\% and 97.5\% quantiles (shaded area)
of the variance distribution from our simulations.
For both algorithms and both networks we observe enormous variance reduction from cluster mixing.
Similar trends are also seen in the weighted clustering methods (not shown).
}
\label{Fig:VarDist}
\end{figure}

\Cref{Fig:VarDist} shows the variance of the HT GATE estimator under each 
unweighted random clustering strategy with independent cluster-level assignment.
Within each scheme, we observe enormous variance reduction 
from randomized clustering in both the synthetic small-world network and 
the Facebook friendship network.
With $K = 1$, \ie, the standard GCR scheme with a fixed clustering,
the variance is uselessly high due to the existence of many exponentially
small exposure probabilities. 
As $K$ increases, the variance of the HT GATE estimator 
decreases monotonically and significantly.
Towards the limit of $K \rightarrow \infty$ which is an approximation to the RGCR scheme,
the variance is reduced to a realistic level of around $10^0$. The exact
value of the variance depends strongly on both the response model and the size of the networks 
we use in our simulations. 
Consistent with the theory developed in \Cref{sec:properties},
such variance reduction agrees with what one would expect from the 
exponentially small exposure probabilities being ``washed out", 
gradually growing to probabilities that are only polynomially small.

\subsection{Exposure probabilities}
\label{sec:est_expo_prob}

In the previous subsection, we approximate the RGCR scheme under 
the random cluster distribution $\mathcal P$ by using
a uniform clustering from a collection of $K$ clusterings from $\mathcal P$. 
In the next subsections, we examine the performance of the generic RGCR scheme.
Recall that due to the infeasibility of exactly computing the exposure 
probabilities (\Cref{Thm:Hard-epsnet}), 
for a generic $\mathcal P$ we must rely on estimated probabilities 
obtained via Monte Carlo
(\Cref{sec:expo_prob_estimate,sec:properties_weighted}).

We first validate the accuracy in this Monte Carlo procedure
 and then also examine the 
estimated probabilities, comparing them with the theory developed 
in \Cref{sec:properties}. 
Again all simulations are conducted under the scenario where we assign 
each cluster into the treatment group with probability $p = 0.5$, thus
$\prob{E_i ^\bOne \mid \mathcal P} = \prob{E_i ^\bZero \mid \mathcal P}$
and we only need to estimate $\prob{E_i ^\bOne \mid \mathcal P}$.

\xhdr{Accuracy in exposure probabilities estimation}
We demonstrate the accuracy of our Monte Carlo procedure for 
a $32\times32$
instance of our heavy-tailed small-world network model
 using the unweighted 3-net clustering as an example.  
To measure the relative error of the probabilities, as well as how they decay with the number 
of stratified samples in Monte Carlo estimation, we conduct the estimation procedure
with $Kn$ stratified samples, where $K$ ranges from $\{1, 2^1, \dots, 2^7\}$.
For each $K$, we repeat the estimation 10 times and estimate the relative
standard deviation
\[
\rstd\left(\hat{\mathbb P}_K\left[E_i ^\bOne \mid  \mathcal P \right] \right)
= \frac{\std{\hat{\mathbb P}_K\left[E_i ^\bOne \mid \mathcal P \right]}}
{{\mathbb P}\left[E_i ^\bOne \mid  \mathcal P \right]}
\]
of each node $i$ with Maximum Likelihood, where the subscript $K$
denotes that $Kn$ samples were used.
We then compute the average relative standard deviation of all nodes for each $K$,
and summarize the results in the left subfigure in \Cref{Fig:ProbEstVar}.
We observe from the figure that the average relative standard deviation decays as
\[
\hat{\rstd}\left(\hat{\mathbb P}_K\left[E_i ^\bOne \mid  \mathcal P \right] \right)
\propto K^{-1/2}
\]
When $K = 128$, the average relative standard deviation is around 1\%.

\begin{figure}[t] 
   \centering
 \begin{minipage}{0.4\linewidth}\centering
    \includegraphics[height=1.6in]{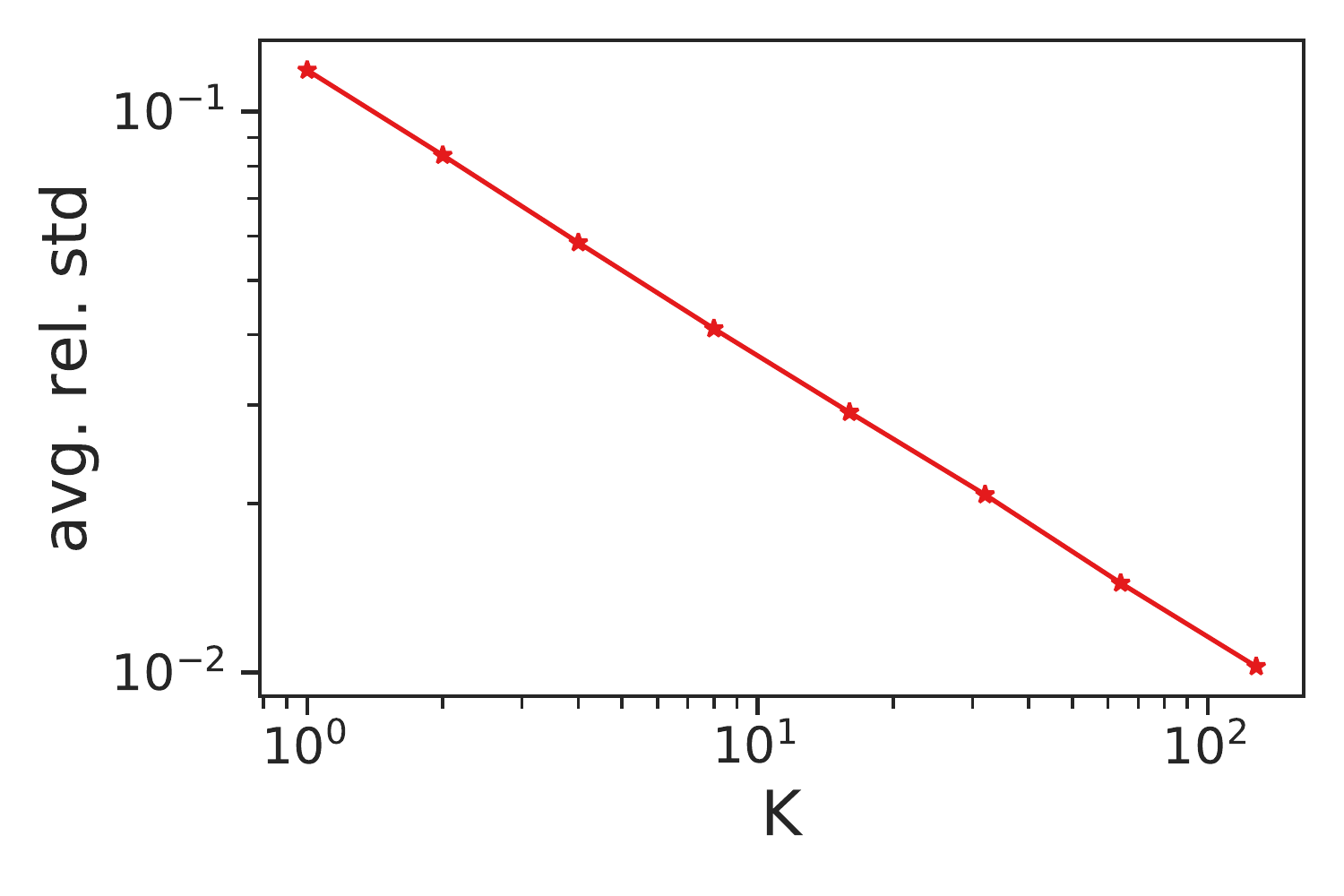} 
 \end{minipage}
 \begin{minipage}{0.4\linewidth}\centering
   \includegraphics[height=1.6in]{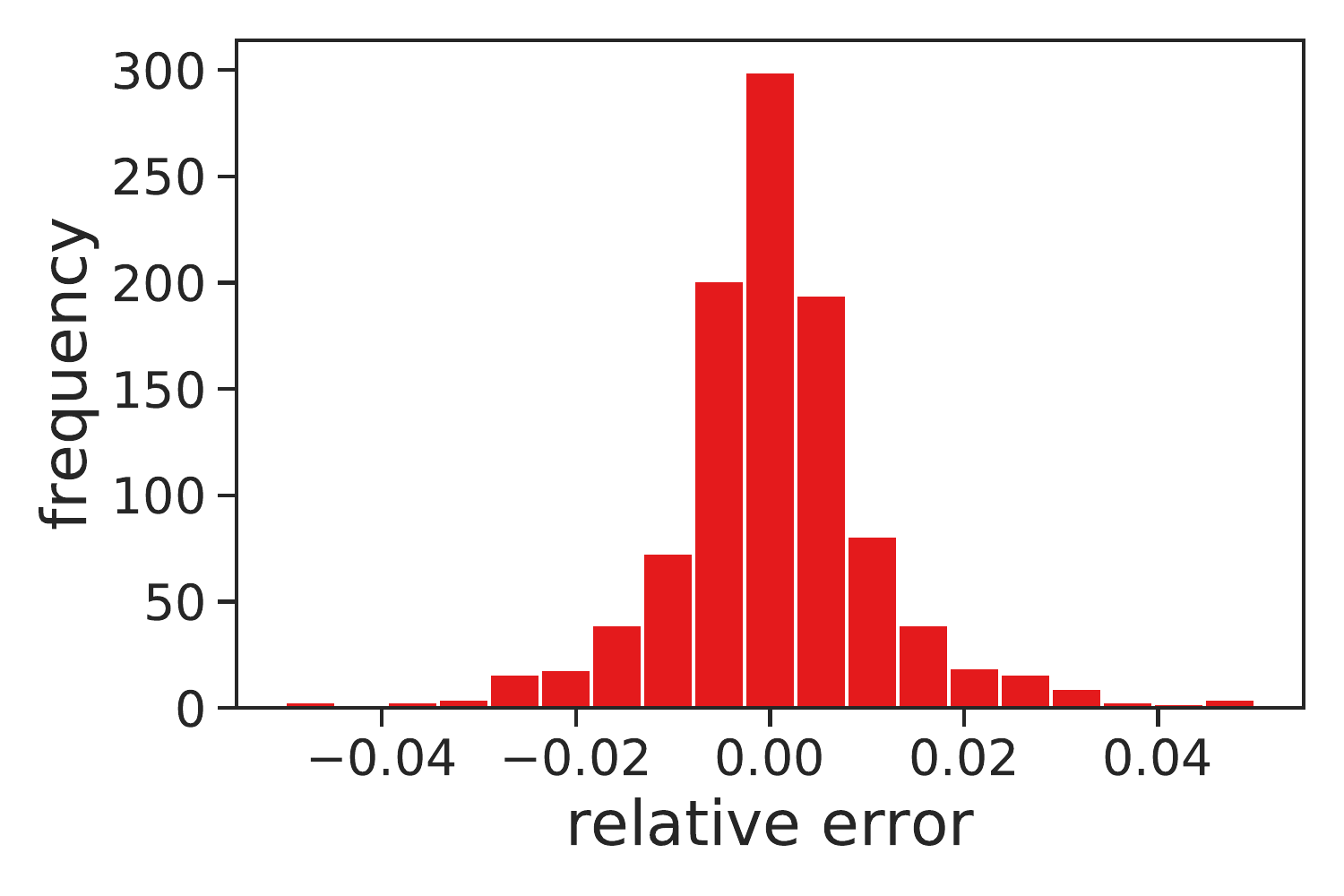} 
 \end{minipage}
\vspace{-1em}
\caption[Error analysis in exposure probability estimation]
{Left: Average relative standard deviation of the exposure probability 
estimator with $Kn$ stratified samples in Monte Carlo estimation.
Right: histogram of the relative error in the estimated exposure probabilities
of all nodes, where $128n$ stratified samples are used
in Monte Carlo estimation. Both plots are constructed with the heavy-tailed
small-world network.
}
\label{Fig:ProbEstVar}
\end{figure}

Besides the relative standard deviation, we also examine the distribution of the relative
error of a set of estimated exposure probabilities from $128n$ stratified samples,
\[
\mathbf{err}_i
=
\frac{\hat{\mathbb P}\left[E_i ^\bOne \mid  \mathcal P \right] - {\mathbb P}\left[E_i ^\bOne \mid  \mathcal P \right]}{{\mathbb P}\left[E_i ^\bOne \mid  \mathcal P \right]},
\]
where we use the average exposure probability across the 10 repetitions
as the ground truth exposure probability of each node.
The histogram of the relative errors at all nodes is given
\Cref{Fig:ProbEstVar} (right),
where we see the relative errors are bounded within $\pm 5\%$ and mostly within $\pm 2\%$.
Analogous results for other networks and clustering algorithms (not shown) 
confirm a broadly satisfying accuracy for the estimated exposure probabilities.
We use these estimated probabilities in place of the exact exposure probabilities
in all our uses of the HT and \hajek\ GATE estimators.

\begin{figure}[p] 
   \centering
   \includegraphics[width=0.8\linewidth]{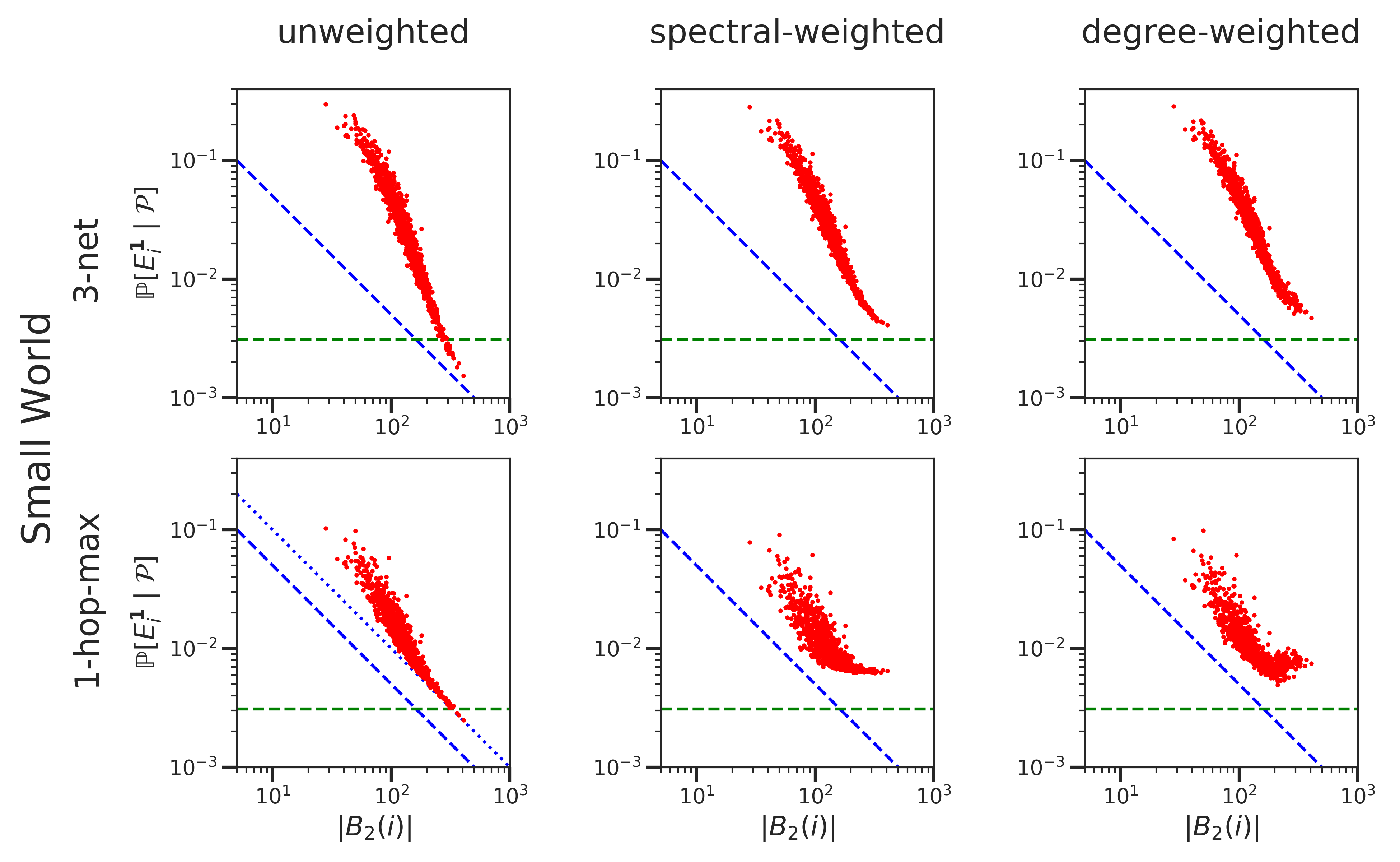} \\
   \includegraphics[width=0.8\linewidth]{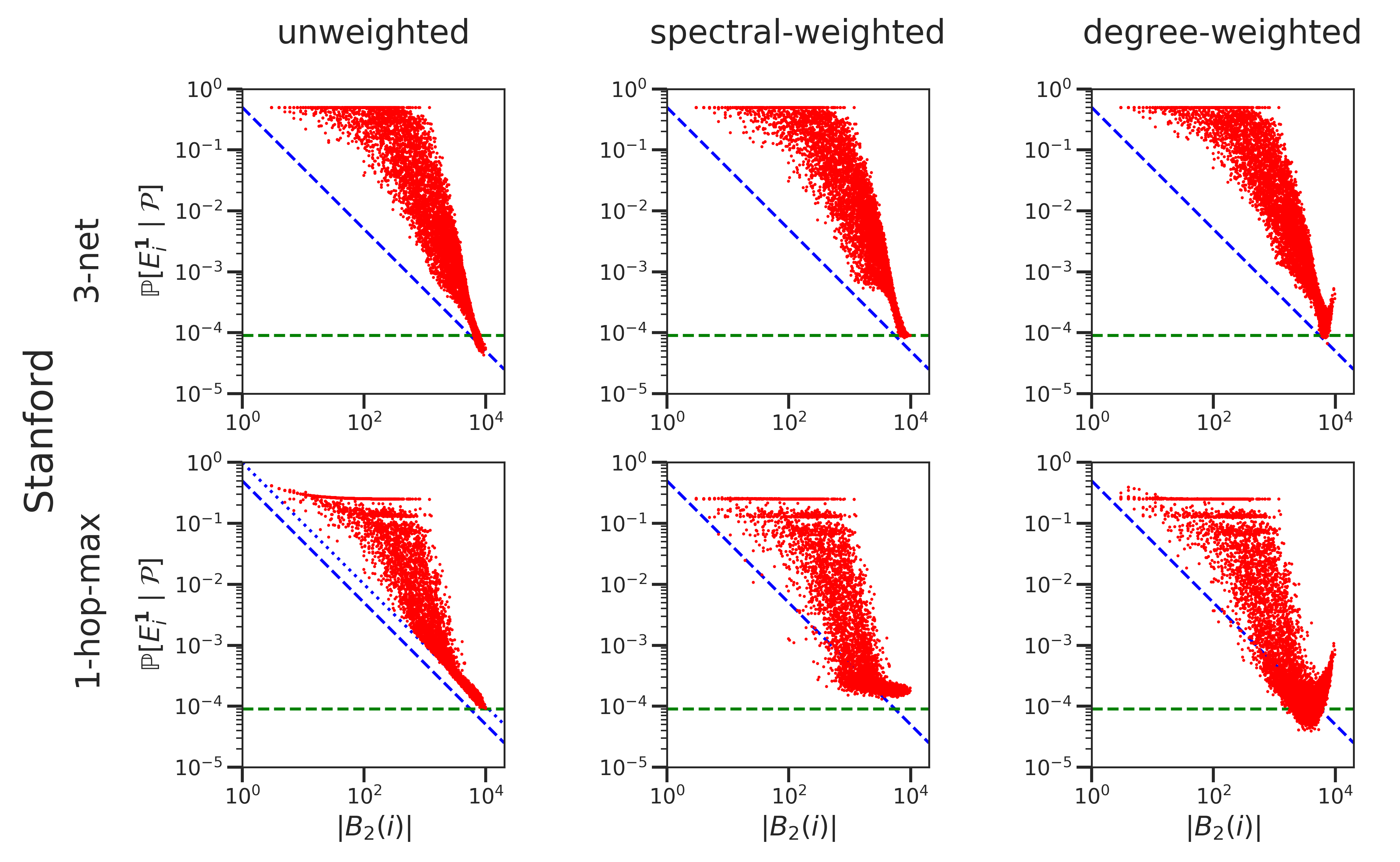}
\caption[Scatterplot of exposure probabilities under RGCR]
{Scatterplot of the exposure probability $\prob{E_i ^\bOne \mid \mathcal P}$ versus 
$\lvert B_2(i)\rvert$ at every node $i$ in the heavy-tailed small-world network
(top two rows) and the \stanford~network (bottom two rows), 
under each random clustering scheme with cluster-level independent randomization.
The blue dashed line represents the exposure probability lower bound for unweighted
3-net and 1-hop-max schemes (\Cref{Thm:prob_LB}). 
The blue doted line represents the slightly improved lower bound for 
unweighted 1-hop-max (\Cref{Thm:prob_LB_improved}).
The green dashed line represents
the uniform lower bound for the spectral-weighted schemes (\Cref{Thm:prob_LB_spectral}).
}
\label{Fig:ProbVS2Ball}
\end{figure}

\xhdr{Visualizing exposure probabilities}
\Cref{Fig:ProbVS2Ball} furnishes a scatterplot of the estimated exposure probabilities
under each randomized clustering strategy with cluster-level independent randomization 
versus the size of the 2-ball at each node. In the first column, where we see 
the unweighted randomized 3-net and 1-hop-max schemes, 
the lower bound provided in \Cref{Thm:prob_LB}
is verified (blue dashed line), 
as well as the slightly stronger lower bound for the unweighted 
1-hop-max scheme from \Cref{Thm:prob_LB_improved}.
These lower bounds fall off with $\lvert B_2(i)\rvert$, and thus nodes with larger 
2-hop neighborhood can have lower exposure probabilities. Moreover, 
we observe that the lower bounds are more tight for nodes with a larger 2-neighborhood.

The second and third columns of \Cref{Fig:ProbVS2Ball} are associated with 
the weighted 3-net and 1-hop-max clustering strategies. 
In the second column, we consider the spectral weighting developed
in \Cref{sec:weighted_opt_choice}, which obey a uniform lower bound
(green dashed line) on the exposure probability independent of $\lvert B_2(i)\rvert$.
While the bound is uniform, the slack is not, and we observe that it is again more tight for nodes 
with larger $\lvert B_2(i)\rvert$.
Besides spectral weighting, in the third column we consider another scheme
where each node is weighed by its degree.
Compared with the spectral weighting, this degree weighting scheme improves
the exposure probabilities at nodes with largest $\lvert B_2(i)\rvert$, while
it may harm other nodes: note the dip below the green dashed line 
in the 1-hop-max scatterplots for the \stanford\ network.

\begin{figure}[tp] 
   \centering
{\small \smallworld} \vspace{1em} \\ 
\begin{minipage}{0.25\linewidth}\centering
   {\small \hspace{2em} 3-net clustering  }\\
   \includegraphics[height=1.7in]{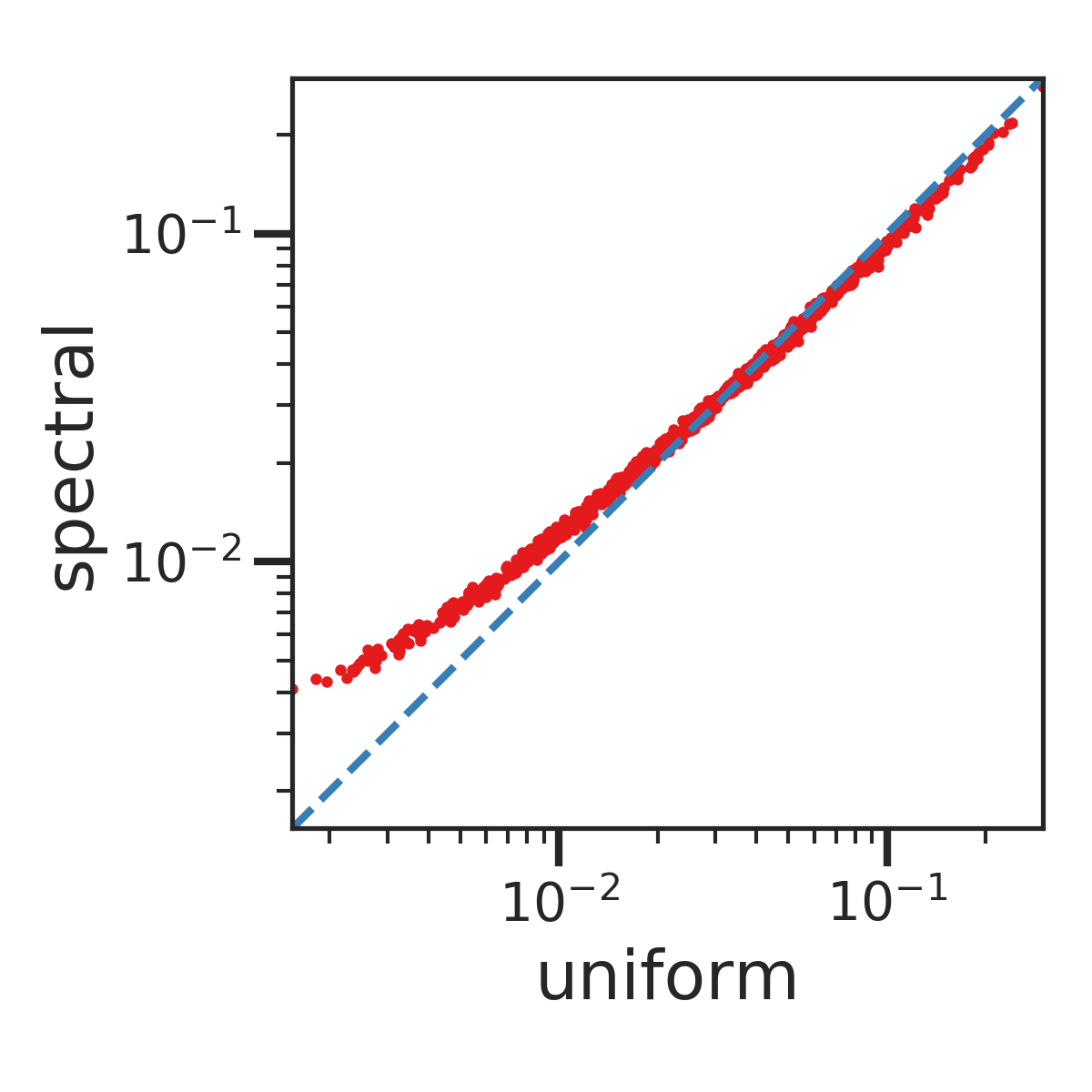} 
\end{minipage}
\begin{minipage}{0.25\linewidth}\centering
   {\small \hspace{2em} 1-hop-max clustering   }\\
   \includegraphics[height=1.7in]{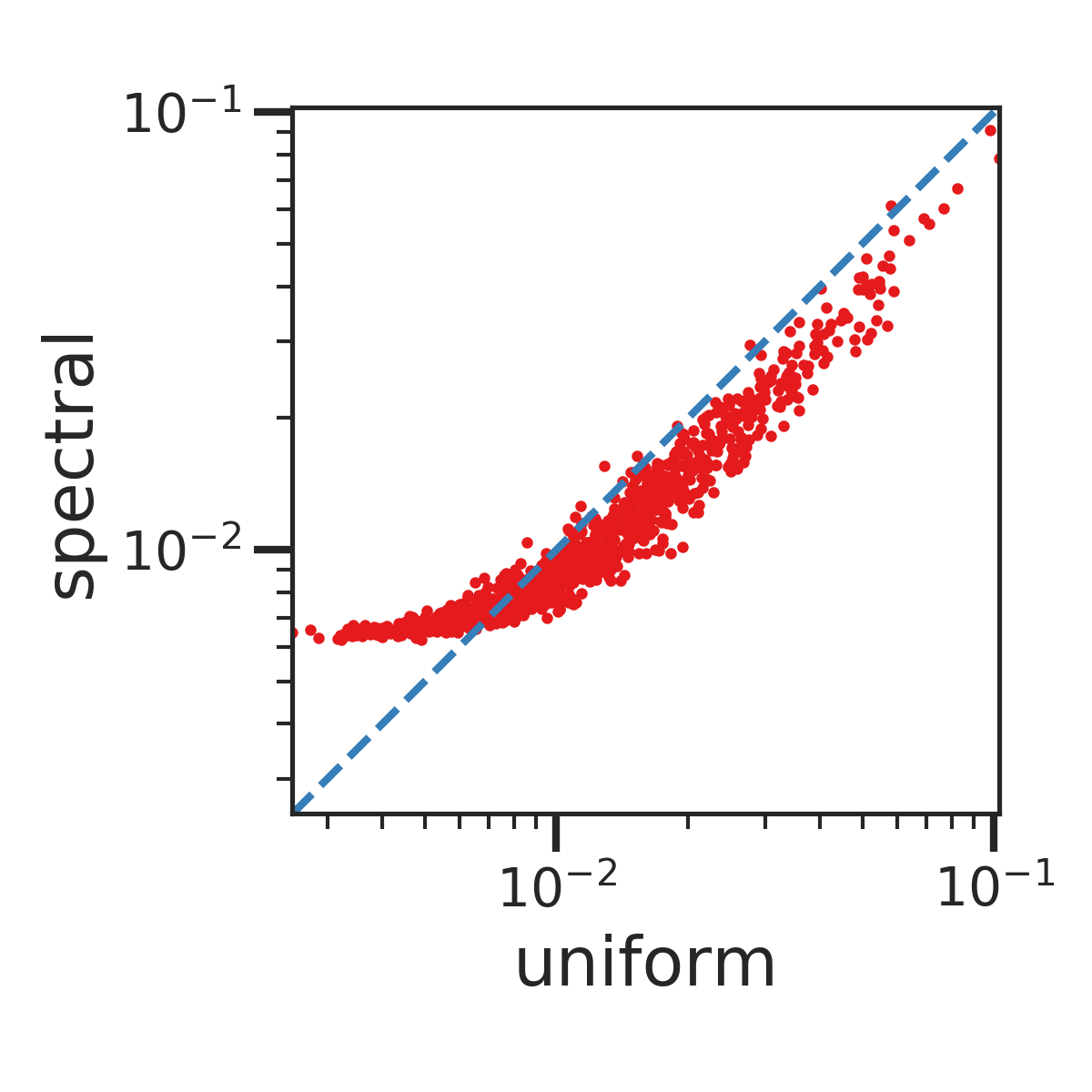} 
\end{minipage}
\begin{minipage}{0.25\linewidth}\centering
   {\small \hspace{2em} unweighted   } \\
   \includegraphics[height=1.7in]{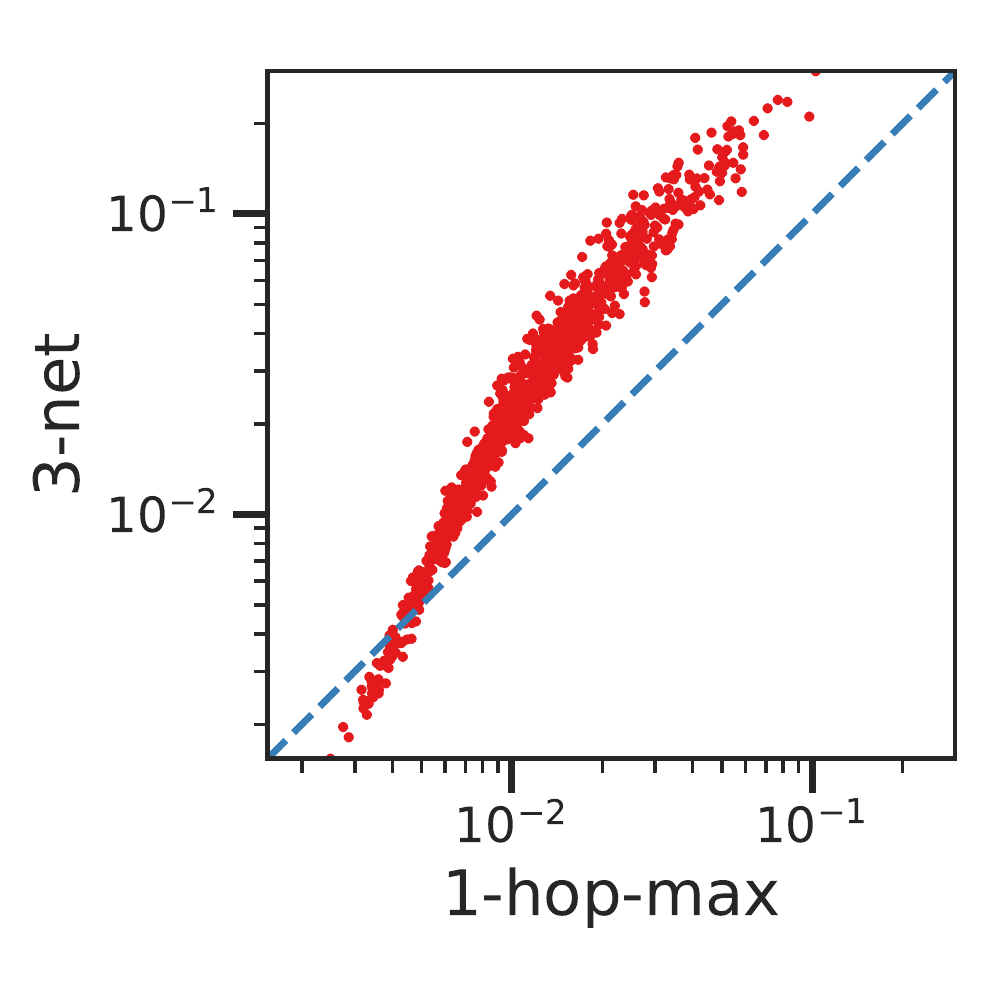} 
\end{minipage}
\\ \vspace{1em}
{\small \stanford} 
\vspace{1em}
\\
\begin{minipage}{0.25\linewidth}\centering
   {\small \hspace{2em} 3-net clustering  }\\
   \includegraphics[height=1.7in]{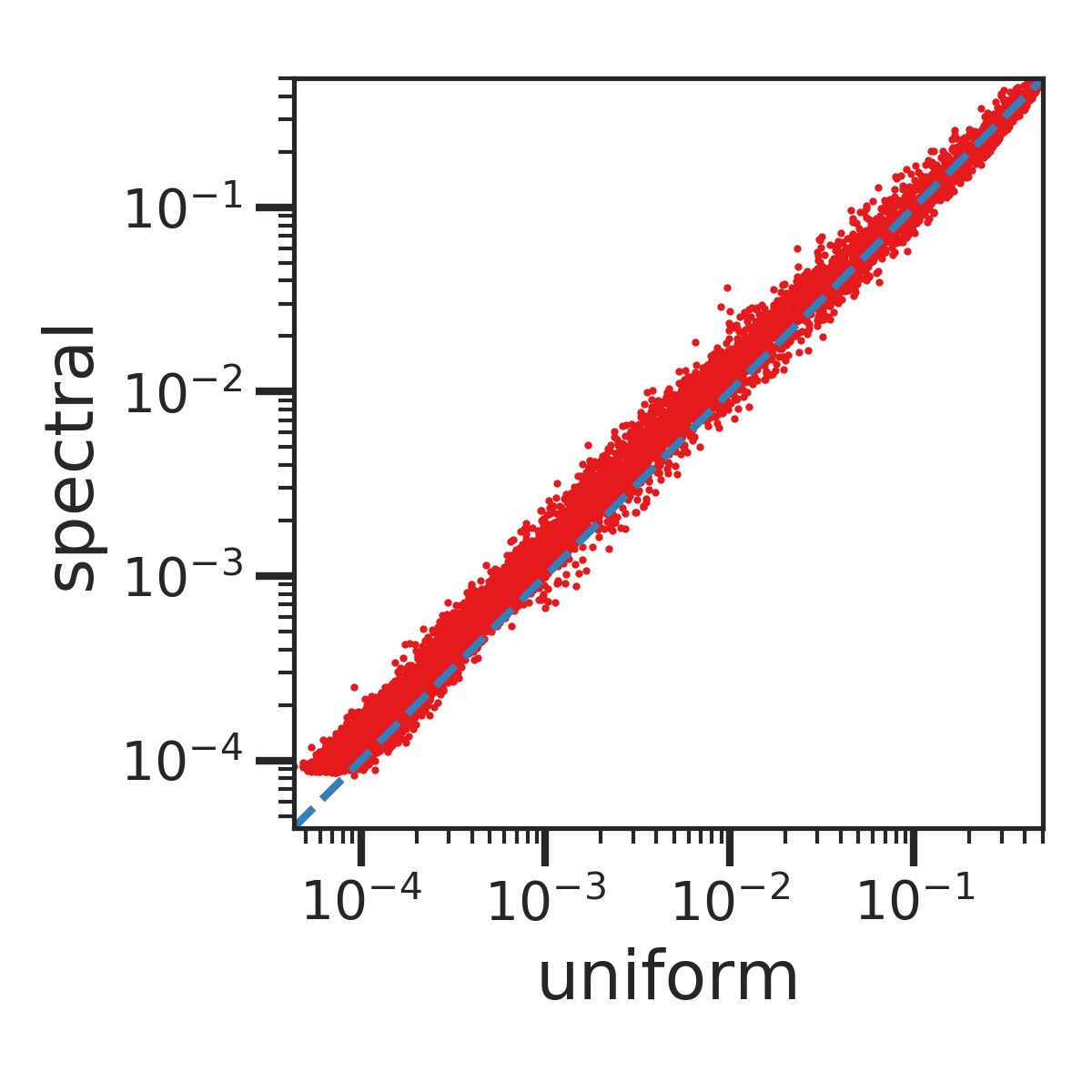} 
\end{minipage}
\begin{minipage}{0.25\linewidth}\centering
   {\small \hspace{2em} 1-hop-max clustering   }\\
   \includegraphics[height=1.7in]{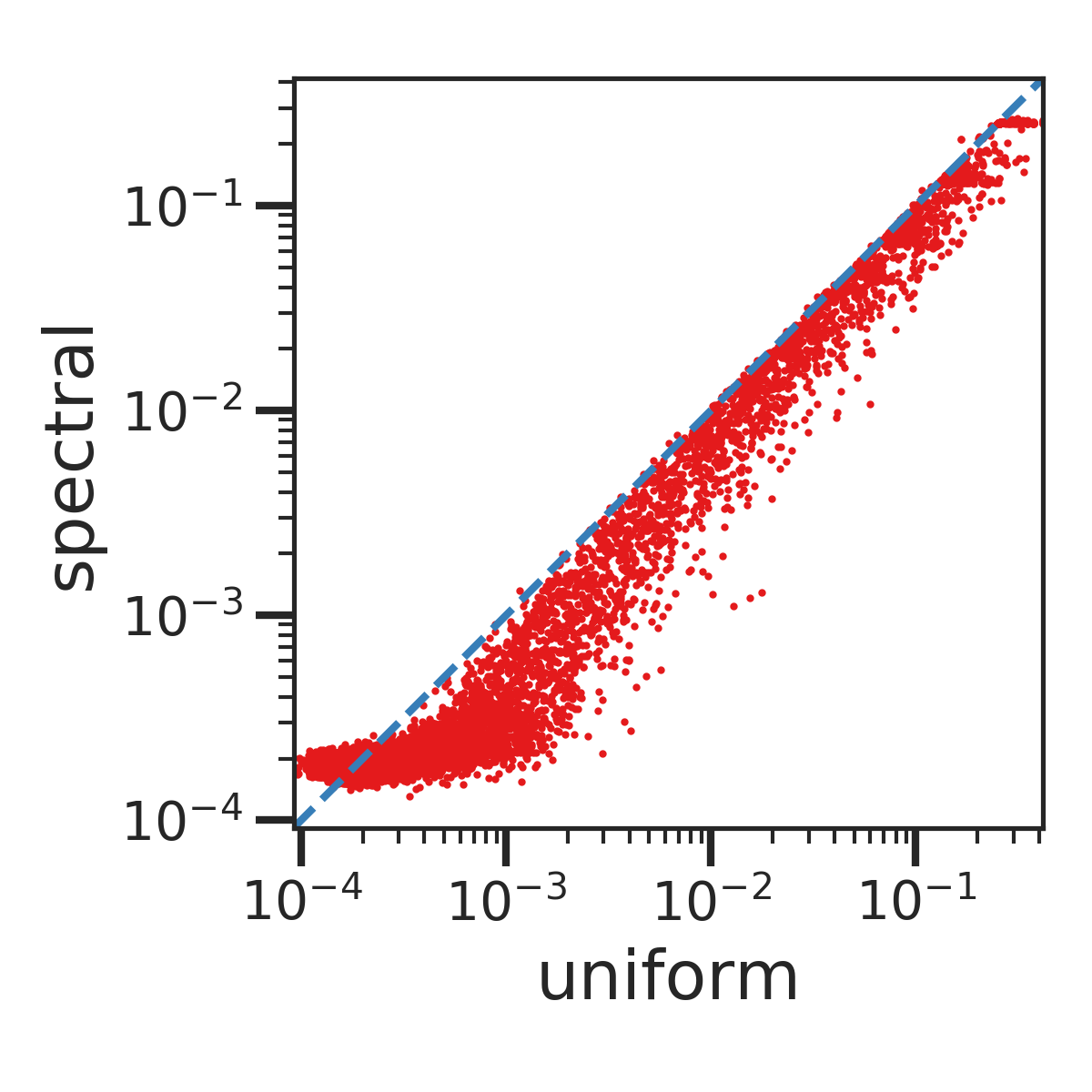} 
\end{minipage}
\begin{minipage}{0.25\linewidth}\centering
   {\small \hspace{2em} unweighted   } \\
   \includegraphics[height=1.7in]{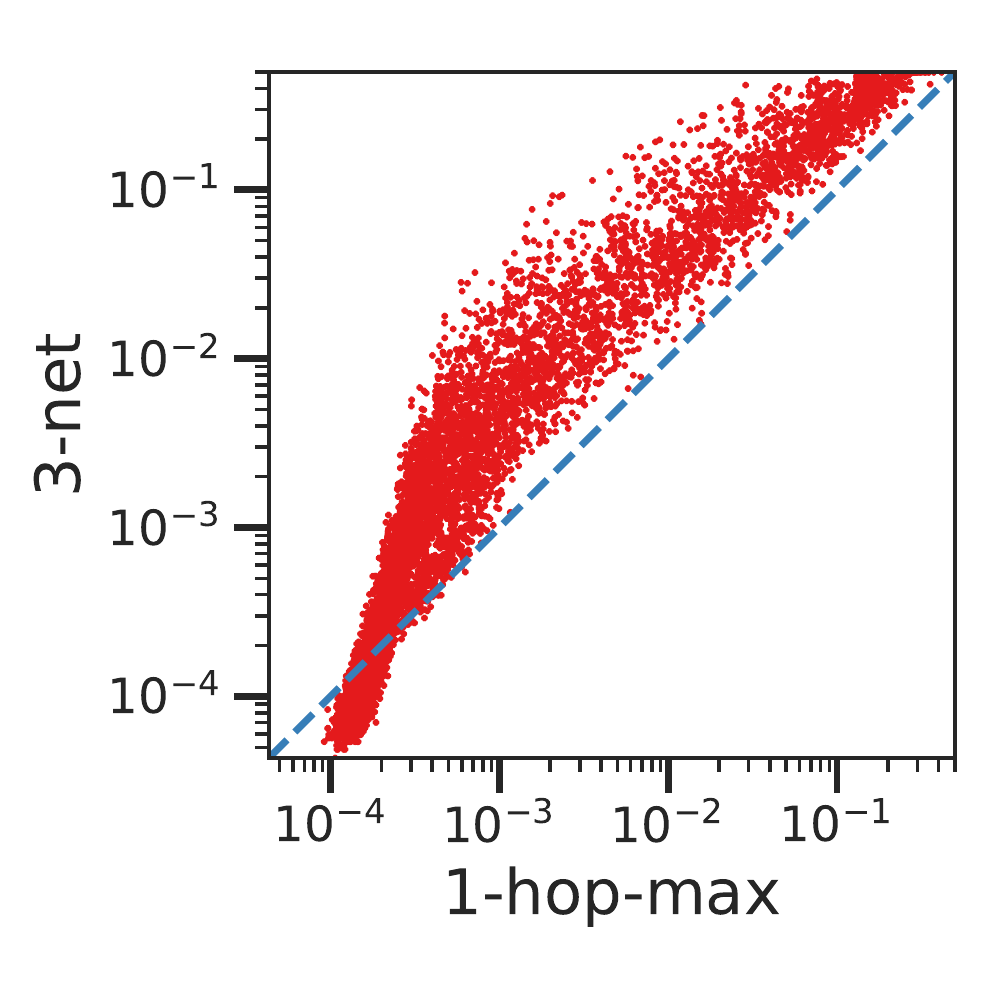} 
\end{minipage}
\caption[Comparing exposure probabilities under RGCR with different random clustering strategies]
{Scatterplot of the exposure probability $\prob{E_i ^\bOne \mid \mathcal P}$ of each node
in the \smallworld~network (first row) and \stanford~network (second row),
with different random clustering strategies. 
The blue dashed line marks the scenario when the exposure probability under two schemes are same.
In the first column, we compare the exposure probability at each node with uniform- (\ie, unweighted)
and spectral-weighted random 3-net clustering, and similar comparison for 1-hop-max clustering
is given in the second column. In the last column we compare the exposure probabilities
with unweighted 3-net and 1-hop-max clusterings.
}
\label{Fig:ProbVSProb}
\end{figure}

We further compare the exposure probability across different schemes in \Cref{Fig:ProbVSProb}.
In the first two columns, we examine how the spectral weighting affects nodes' exposure probabilities
associated the randomized 3-net and 1-hop-max clustering respectively. For 3-net, under both networks,
spectral weighting effectively increases the exposure probability of nodes whose probability 
is small under the unweighted 3-net, at a very small cost of decreasing some large exposure probabilities.
In contrast, for 1-hop-max clustering, even though spectral weighting can increase the very small
exposure probabilities seen in the unweighted scheme, it also significantly decreases the probability
of many other nodes.
Comparing 3-net clustering and 1-hop-max as in the last column, we observe that the exposure probabilities
under 3-net are mostly higher than under 1-hop-max, though the smallest exposure probability under 1-hop-max
is higher due to the improved lower bound theory in \Cref{Thm:prob_LB_improved}.

\subsection{HT estimator variance}\label{sec:sim_HT_RGCR}
We now examine the variance of $\hat \tau$ under the RGCR scheme
with each random clustering strategy. In both the \smallworld\ and
\stanford~networks, we consider 3-net and 1-hop-max clusterings, each under
uniform-, spectral-, and degree-weighting schemes. We also consider both
independent and complete randomization at the cluster level.

The variances are obtained from \Cref{Eq:var-mean_outcome,Eq:var-GATE,Eq:covar-mean_outcome},
the exact ground-truth variance (available in simulations).
In addition to the exposure probability of each node 
$\prob{E_i ^\bOne \mid \mathcal P}$, the variance formulae require 
the exposure probabilities of each pair of nodes, \ie, 
$\prob{E_i ^{\mathbf z_1} \cap E_j ^{\mathbf z_2} \mid \mathcal P}$
for any node pair $i, j \in V$ and $\mathbf z_1, \mathbf z_1 \in \{\bOne, \bZero\}$.
We also estimate these co-exposure probabilities using Monte Carlo estimation
with $Kn$ stratified samples, with $K = 128$ for the \smallworld~network
and $K = 16$ for the \stanford\ network.

\begin{figure}[t] 
   \centering
\begin{minipage}{0.45\linewidth}\centering
   \includegraphics[width=2.9in]{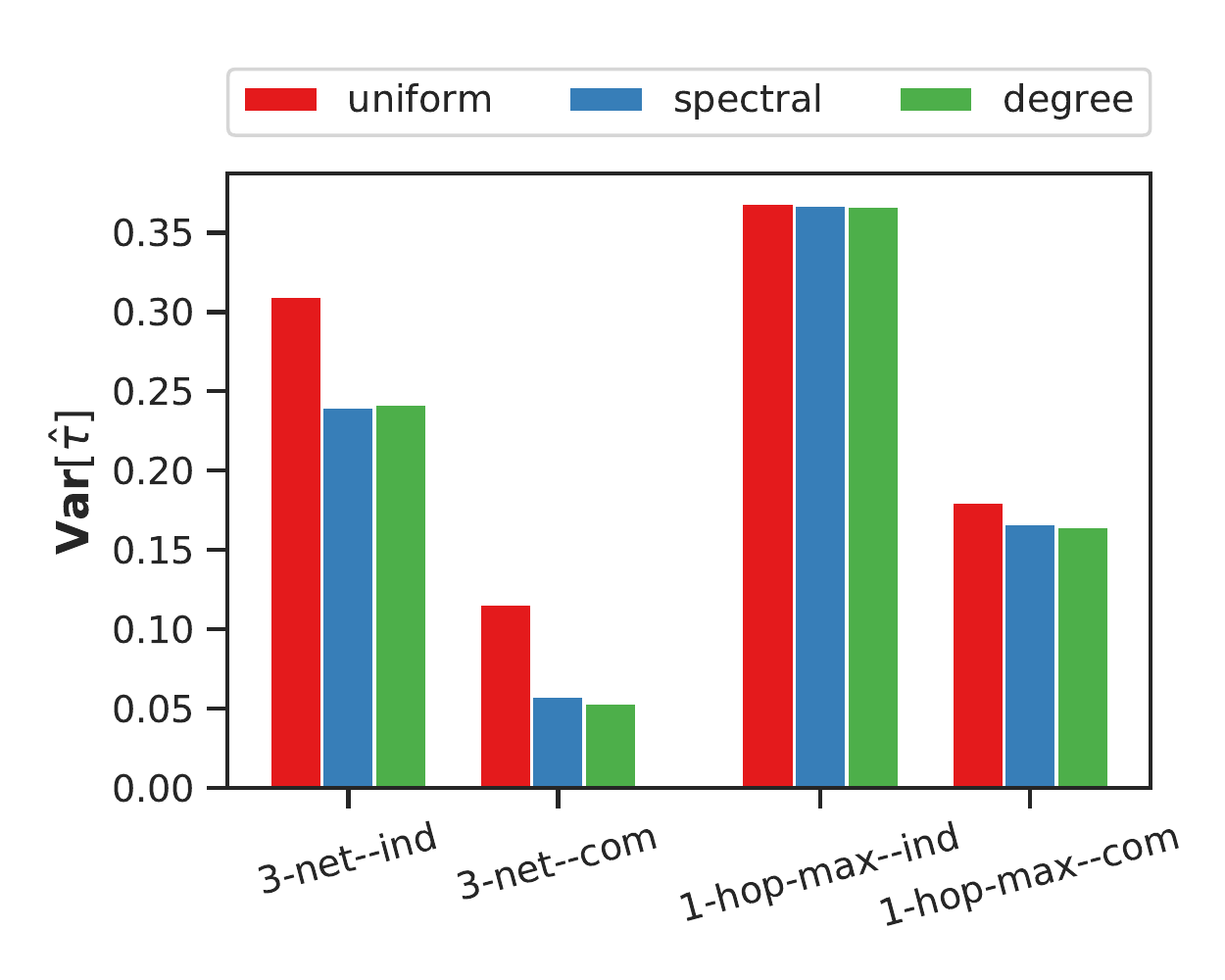} 
   \\{\footnotesize \smallworld   }
\end{minipage}
\begin{minipage}{0.45\linewidth}\centering
   \includegraphics[width=2.9in]{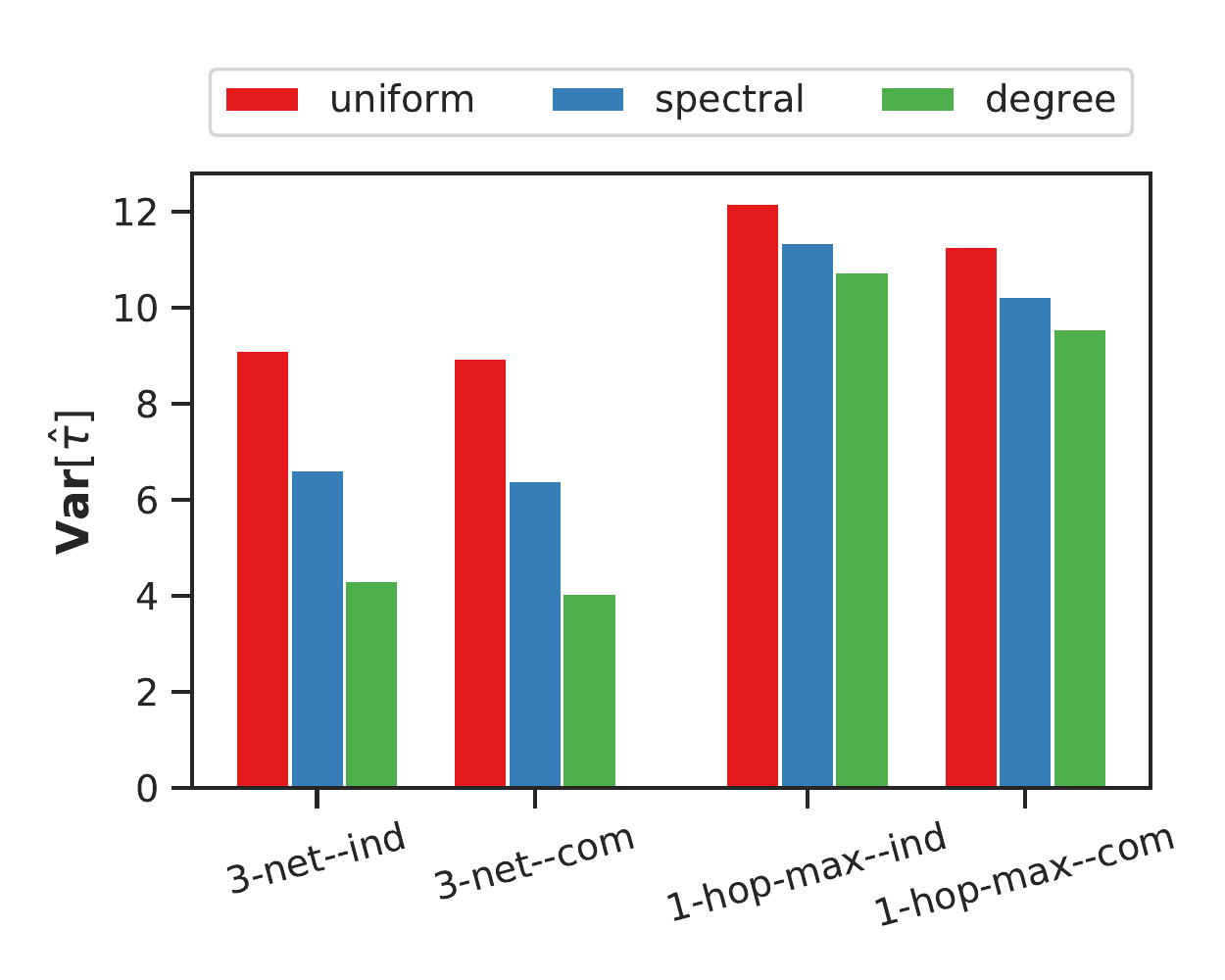} 
   \\{\footnotesize \stanford   }
\end{minipage}
\caption[Variance of $\hat\tau$ under RGCR with different random clustering strategies]
{Variance of the HT GATE estimator under the RGCR scheme with
various random clustering strategies.The suffix of each clustering method
distinguishes independent (-{}-ind) or complete (-{}-com) randomization at the
cluster level. The variance of these HT estimators under GCR, not shown,
are all dramatically higher (comparable to results in \Cref{Fig:VarDist}). }
\label{Fig:Var-rgcr}
\end{figure}

The results are shown in \Cref{Fig:Var-rgcr}, where we make four observations.
First, for both networks and every random clustering and weighting scheme, 
complete randomization yields lower variance than independent randomization.
Such variance reduction can be explained by the negative correlation introduced 
in the cluster-level assignment process, leading to larger values of 
$\prob{E_i ^\bOne \cap E_j ^\bZero \mid \mathcal P}$
and a positive $\cov{\hat \mu(\bOne), \hat \mu(\bZero)}$.
The reduction is less significant for the Facebook network,
which may be due to the growth structure (see Appendix~\ref{app:growth})
being quite different.

Second, in both networks and with both the 3-net and 1-hop-max
random clustering strategy, the variance with the spectral-weighted
scheme is lower than that of the unweighted version.
Such reduction can be explained by the increase in the lowest exposure probabilities
(see the first two columns of \Cref{Fig:ProbVSProb}).
This variance reduction is less significant for 1-hop-max clustering, which is
consistent with how the spectral-weighted scheme also decreases 
the exposure probabilities of most nodes (second column of \Cref{Fig:ProbVSProb}).

Third, we observe that degree-weighting usually gives lower variance
than the spectral-weighted scheme.
According to the discussion begun in \Cref{sec:weighted_opt_choice}, spectral weighting
achieves a uniform lower bound on the exposure probability of every node,
but this lower bound is less tight for nodes with smaller $\lvert B_2(i) \rvert$
(see \Cref{Fig:ProbVS2Ball}), and thus it is not unexpected that a weighting
favoring the nodes with large neighborhood, as degree-weighting does, 
would increases the minimum exposure probabilities of all nodes and reduce variance. 

We conclude that for HT estimators, RGCR with randomized 3-net and
1-hop-max are generally comparable.
Between the two, randomized 3-net usually yields a modestly lower variance. 
According to \Cref{Lem:1_hop_max-local_dep}, 1-hop-max 
has a local dependency property and thus the cross-node terms in the variance formulae
(\Cref{Eq:var-mean_outcome,Eq:var-GATE,Eq:covar-mean_outcome})
decay significantly in the distance between node pairs 
(and become zero when the distance is greater than 4).
In contrast, randomized 3-net has no local dependency guarantee,
and consequently we do not have a nontrivial theoretical upper bound
on the variance. 
However, in our simulations the cross-node terms are also small,
making the variance of the HT estimator under randomized 3-net
even lower than using 1-hop-max.

In summary, for the HT GATE estimator for the response model and networks we study:
\begin{itemize}
\item complete randomization yields lower variance
than independent randomization, 
\item randomized 3-net clustering yields lower variance than
1-hop-max,
\item spectral- and degree-weighting schemes yields lower variance than unweighted schemes.
\end{itemize}

\subsection{H\'ajek estimator bias and variance}
\label{sec:simulation_hajek}

Unlike the HT estimator, the \hajek~estimator does not have close-form formulae
to compute the bias, variance, or MSE, and thus we estimate
these quantities via simulation. We therefore briefly describe how we evaluate performance via simulation. 
For GCR, since the estimation performance is associated with the specific clustering is use,
we use the median bias, variance, and MSE across 1000 randomly generated clusterings.
Specifically for each clustering, we simulate the experimental procedure 
(assignment, outcome generation, and GATE estimation) and compute the sample bias, 
variance, and MSE and use as the proxy of the corresponding measure of the GCR scheme.
For RGCR, we simulate the experimental procedure (random clustering generation,
assignment, outcome generation, and GATE estimation) $50*n$ times
and analogously estimate each measures with the sample bias, sample variance, 
and sample MSE. 
As in \Cref{sec:properties_weighted}, here we use a similar idea of stratified sampling
 for the weighted 3-net clustering design:
for each node, there are 50 times when it is ranked first among all nodes
in generating the random clustering and guaranteed to be a seed node 
in the 3-net clustering.
In the analysis phase we use the exposure
probabilities estimated for RGCR in \Cref{sec:est_expo_prob}. For GCR
we use the exact exposure probabilities associated with the clustering in use.

\begin{figure}[t] 
   \centering
{\small \smallworld} \\
\begin{minipage}{0.3\linewidth}\centering
   \includegraphics[width=1.9in]{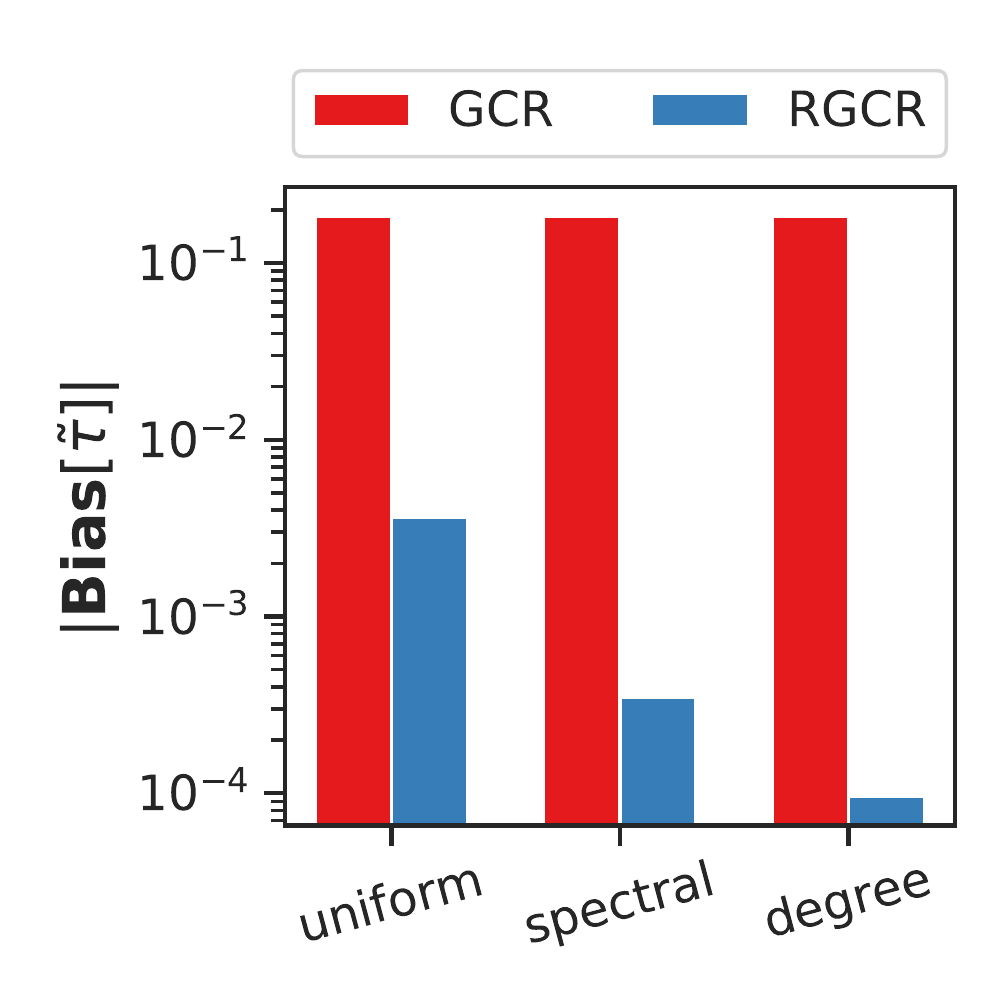} 
\end{minipage}
\begin{minipage}{0.3\linewidth}\centering
   \includegraphics[width=1.9in]{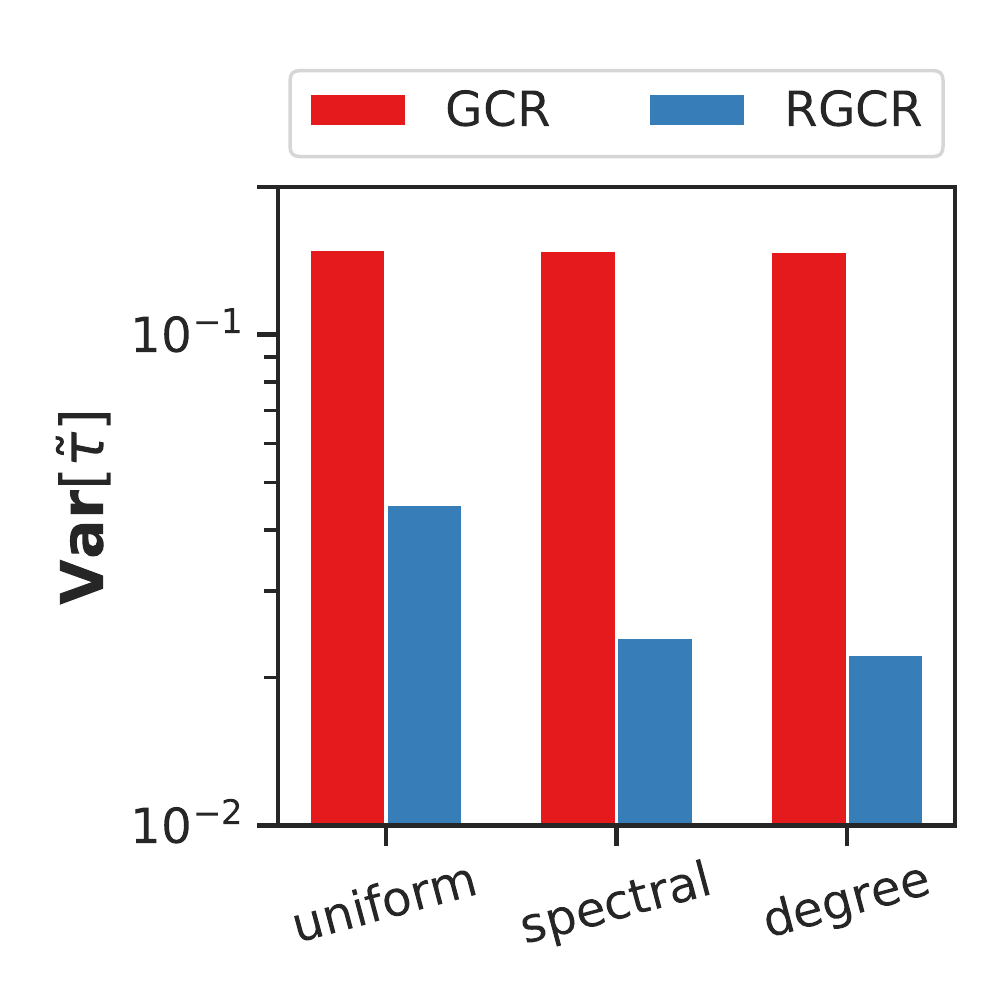} 
\end{minipage}
\begin{minipage}{0.3\linewidth}\centering
   \includegraphics[width=1.9in]{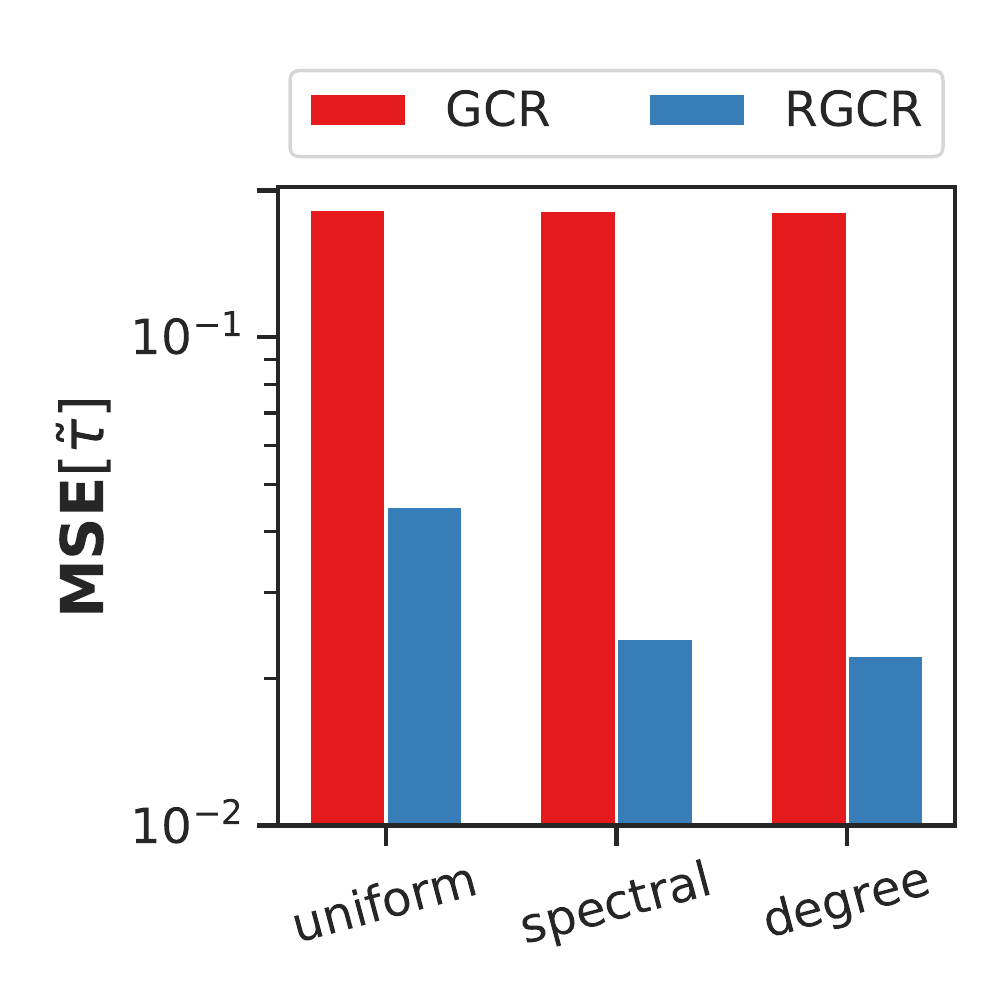} 
\end{minipage}
{\small \stanford} \\
\begin{minipage}{0.3\linewidth}\centering
   \includegraphics[width=1.9in]{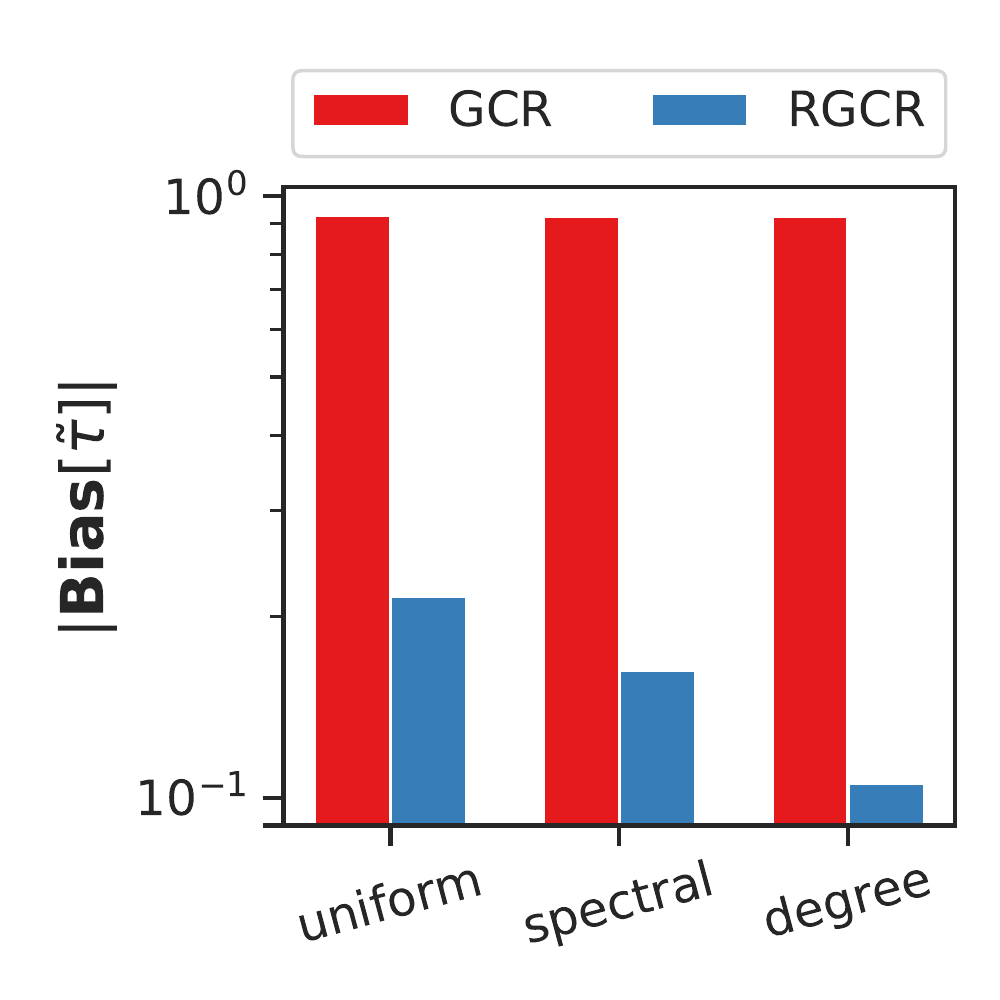} 
\end{minipage}
\begin{minipage}{0.3\linewidth}\centering
   \includegraphics[width=1.9in]{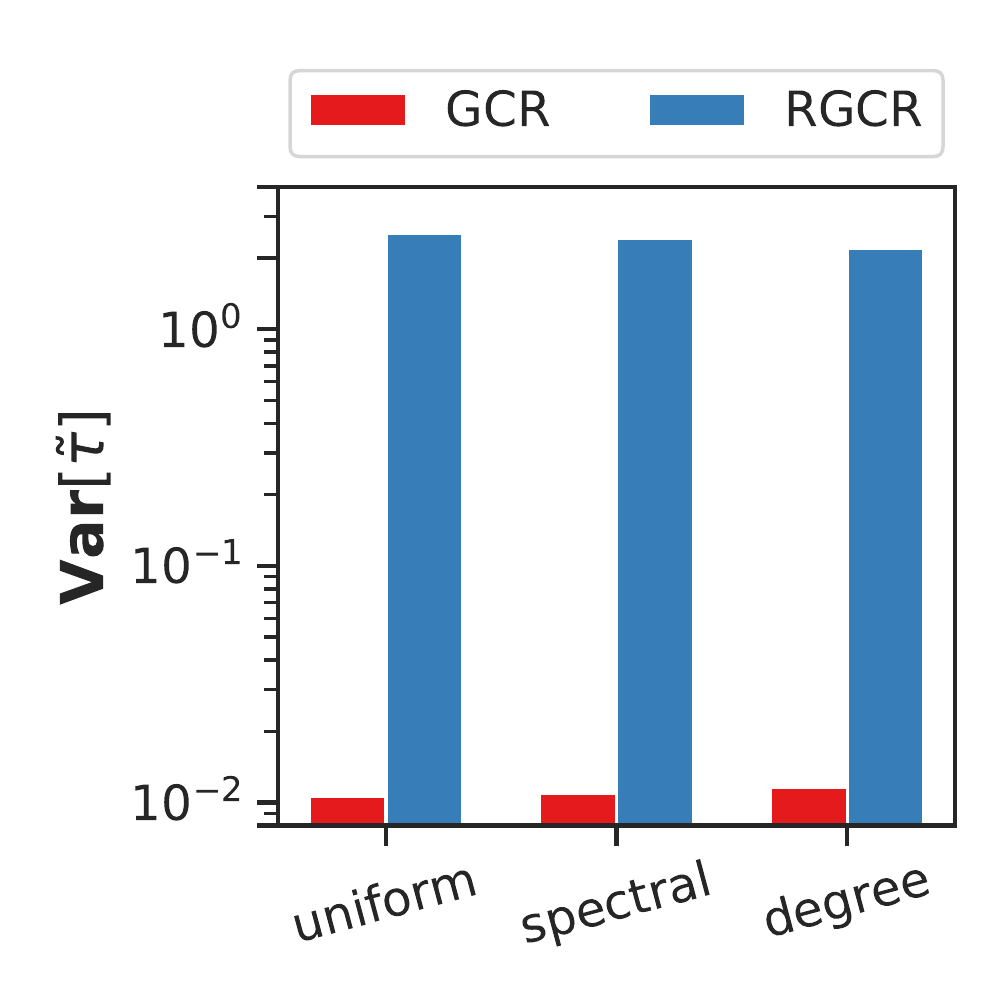} 
\end{minipage}
\begin{minipage}{0.3\linewidth}\centering
   \includegraphics[width=1.9in]{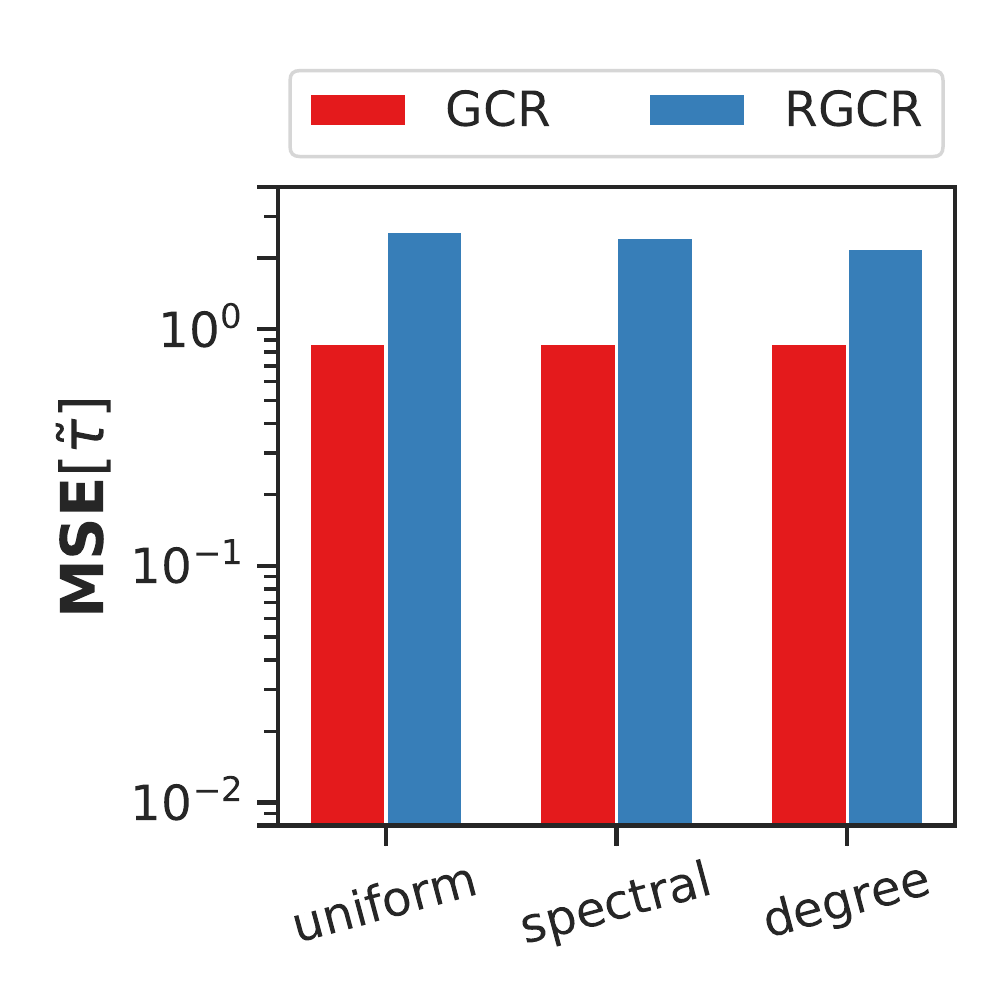} 
\end{minipage}
\caption{Bias, variance, and MSE of the \hajek~GATE estimator under GCR and RGCR
with 3-net clustering and 
independent randomization 
in the \smallworld~(first row) and \stanford~(second row) networks.
}
\label{Fig:Hajek-gcr}
\end{figure}

\Cref{Fig:Hajek-gcr} presents the bias, variance, and MSE of the \hajek~estimator under GCR and RGCR, focusing on independent randomization. 
We make two main observations. 
First, the bias of the \hajek~estimator under RGCR has been significantly reduced,
compared with GCR. Recall that in our response model (\Cref{sec:response_model}) the ground-truth GATE is $\tau = 1.0$. The bias is reduced from 15\% to less than 0.5\% in the 
\smallworld~network, and from 93\% to less than 20\% in the \stanford~network.

One can intuitively interpret this bias reduction as follows.
Under GCR, for nodes with exponentially small exposure probability (in the \stanford~network,
it can be lower than $10^{-50}$), they are almost never network exposed to treatment or control. As a result, their response $Y_i$ are almost never revealed in the weighted averaging procedure
of the \hajek~estimator. In every experiment execution, the exposure nodes are 
almost only those with large exposure probabilities, and mostly those with small degree.
A large degree node may also have a large exposure probability if it is at the center of a cluster; 
however, it is much less likely to be at the center of a cluster than a low-degree node, 
and even if it is exposed, its weight (the inverse of the exposure probability) is 
not larger than those low-degree exposed nodes.
Therefore, the analysis procedure in GCR is internally biased against the large degree nodes, 
favoring the low-degree nodes, which have smaller individual treatment effects $\tau_i$ in our response model. Consequently, we see how the \hajek~estimator can be severely biased downwards in such settings.

In contrast, for RGCR the exposed nodes
are not so strictly high exposure probabilities. For example, if a large degree
node is at the center of a cluster in a randomly generated clustering,
even though it has large \emph{conditional} exposure probability under
this clustering and thus being likely to be network exposure to treatment or control, 
its unconditional exposure probability can still be small. As a result, it is weighted
more heavily in the weighted average procedure of \hajek~estimation,
making the estimator value shift towards the response of large degree nodes
and thus less biased than that under GCR.

Alongside this understanding of \hajek\ bias, it is also expected to observe an increase
in variance from RGCR, vs.\ GCR, under \hajek~estimation.
In~\Cref{Fig:Hajek-gcr} we see variance reduction from RGCR in the Small World network
but an increased variance in the \stanford~network.
Under GCR, since the estimator value is dominated by the response of low degree nodes,
in our response model the response of low-degree nodes have a much narrower range than the whole population, 
resulting in low variance (but overwhelming bias, we repeat).

Finally, we also compare the bias and MSE of the RGCR scheme with different
random clustering strategies, which we also include complete randomization,
and the results are given in \Cref{Fig:Hajek-rgcr}.
In general, the benefits of complete randomization (over independent randomization) that we see for the HT estimator do not appear to carry over to \hajek\ estimation. 

In summary, comparing with the GCR scheme, the \hajek~estimator under the RGCR scheme
has significantly lower bias but may have larger variance. Examining the mean squared error (MSE) that trades off bias and variance, we see a lower MSE in the \smallworld~network from RGCR  (vs.\ GCR),
while we see a higher MSE in the \stanford~network under RGCR (vs.\ GCR).

\begin{figure}[t] 
   \centering
{\small \smallworld} \\
   \includegraphics[width=6in]{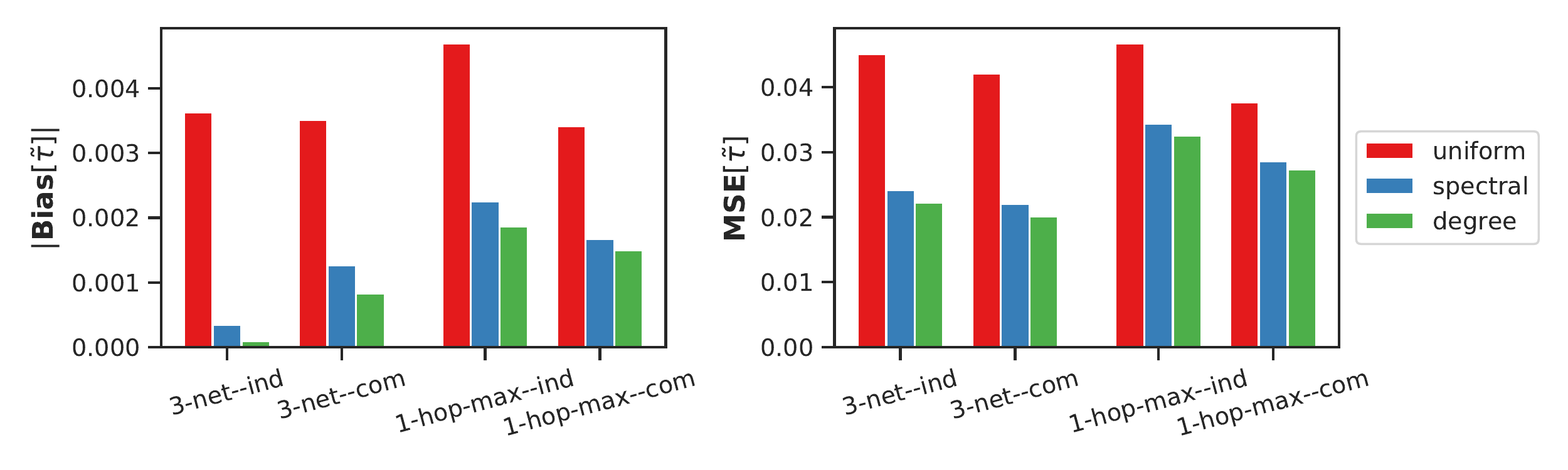} 
\vspace{-1em} \\
{\small \stanford} \\
   \includegraphics[width=6in]{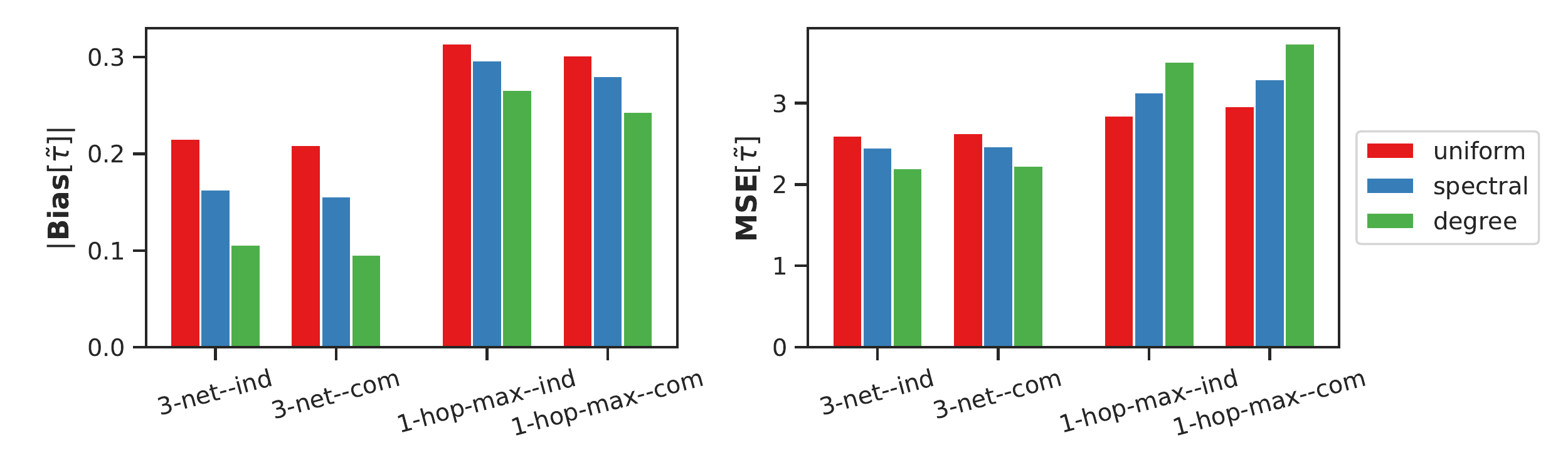} 
\caption{Bias and MSE of H\'ajek GATE estimator under RGCR with various clustering algorithms with
both independent and complete randomization.}
\label{Fig:Hajek-rgcr}
\end{figure}

\subsection{Variance, bias, and network size}\label{sec:sim_diff_n}
Here we examine  how the bias, variance, 
and mean squared error (MSE) change as a function of network size. 
Recall that the heavy-tailed small-world network we use throughout our earlier simulations
is based on a periodic two-dimensional lattice of side length 96 (thus $n = 96*96=9216$).
Here we consider a series of heavy-tailed small-world networks in this family.
Specifically, 
we generate a sequence of networks
based on lattices of size 16*16, 24*24, 32*32, 48*48, 64*64, and 96*96,
while the procedure (and parameters) for adding long-range edges
remains fixed. 
With this sequence of networks, in
\Cref{Fig:Hajek-gcr-SWx}
we repeat the above simulation procedures and compare the 
bias, variance, and MSE under the GCR and RGCR schemes. This analysis examines both the Horvitz--Thompson (HT) 
and \hajek\ estimators.

\begin{figure}[t] 
   \centering
\begin{minipage}{0.25\linewidth}\centering
   \includegraphics[height=1.7in]{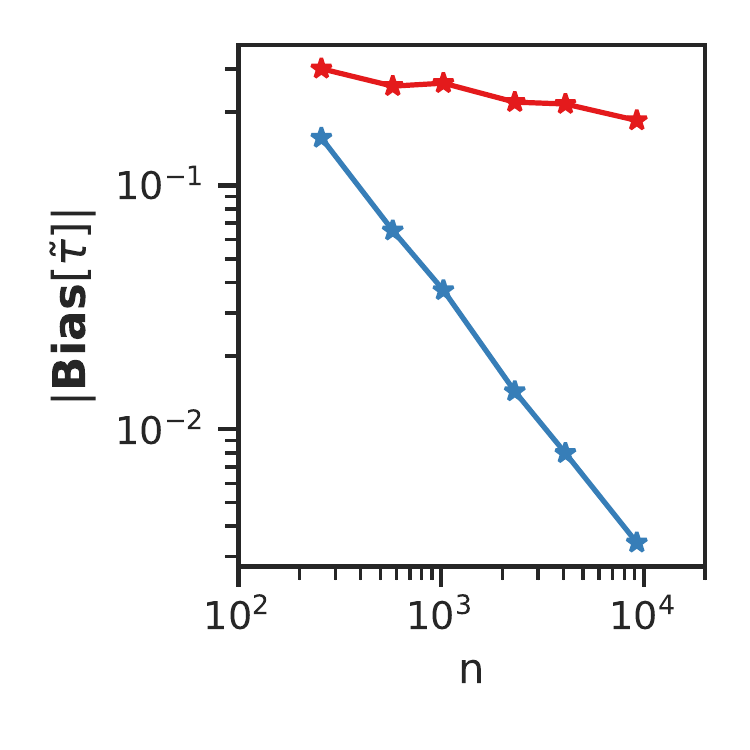} 
\end{minipage}
\begin{minipage}{0.25\linewidth}\centering
   \includegraphics[height=1.7in]{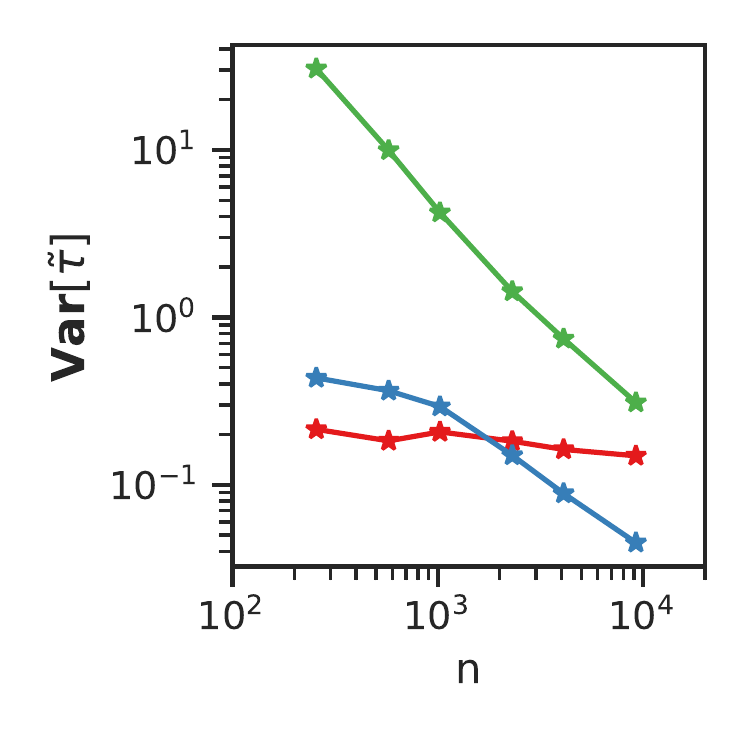} 
\end{minipage}
\begin{minipage}{0.4\linewidth}\centering
   \includegraphics[height=1.7in]{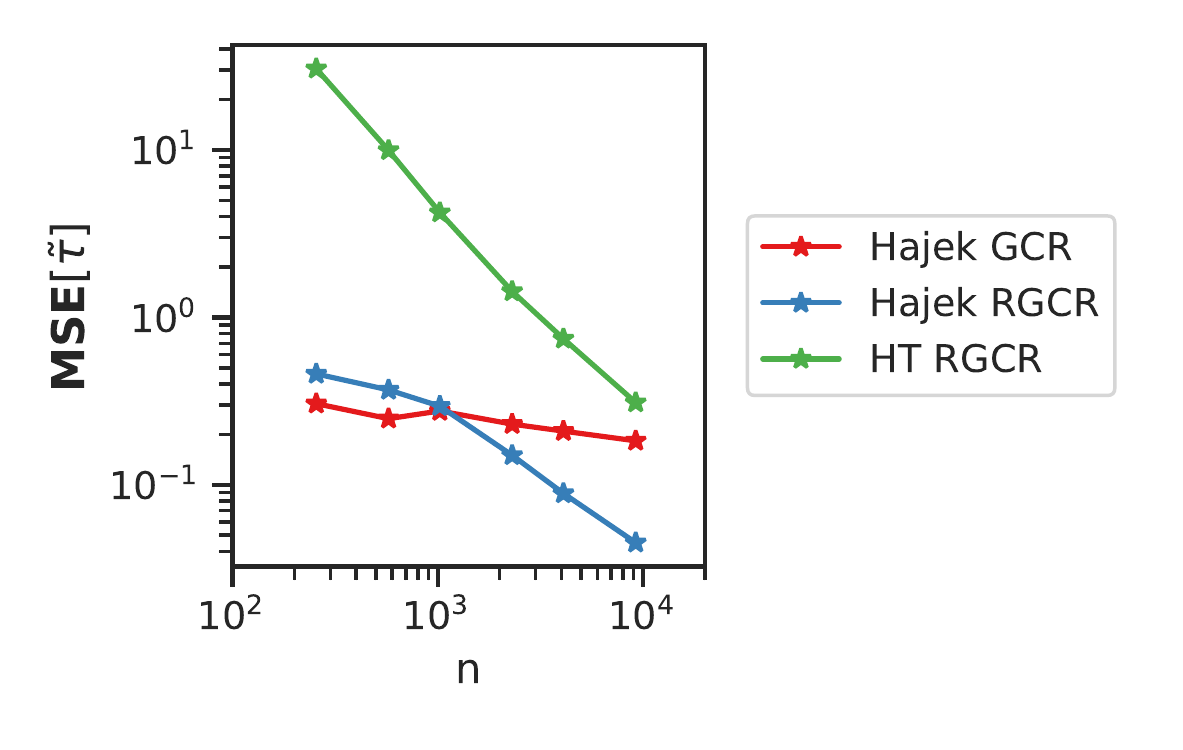} 
\end{minipage}
\vspace{-5mm}
\caption{Bias (absolute value), variance, and MSE of the \hajek~and HT GATE estimator under GCR and RGCR with unweighted 3-net clustering
in a sequence of increasingly-sized heavy-tailed small-world networks.
}
\label{Fig:Hajek-gcr-SWx}
\end{figure}

From the first plot in \Cref{Fig:Hajek-gcr-SWx}, we observe that the \hajek~estimator
under the RGCR scheme has consistently less bias than under the GCR scheme,
across the range of network sizes we study. Moreover, the bias decays with the network size at a much higher rate for RGCR than for GCR. 
Recall that the HT estimator is unbiased (under both GCR and RGCR).

From the second plot, we observe that the variance of the \hajek\ estimator under RGCR
also decays with higher rate than under GCR.
We can explain the slow decay of the variance under GCR
by observing that  as the number of nodes $n$ increases, so does the number of 
nodes with small exposure probabilities.
It then becomes more likely that more nodes with small exposure
probabilities have been network exposed.
Note that variance of response among these nodes are large,
and thus even though the expected number of exposed nodes 
increases proportional to $n$, 
due to the introduction of small exposure probability nodes, the variance of the \hajek~estimator
decays at a much smaller rate than $n^{-1}$.
In contrast, the empirical rate of variance decay of RGCR is close to $n^{-1}$.
Therefore, even though the variance under GCR might
be lower in small networks, for the reason explained in \Cref{sec:simulation_hajek}, 
the RGCR scheme can significantly reduce variance in large networks.
Meanwhile, the variance of the HT estimator under RGCR, while higher than that of the \hajek\ estimator with RGCR, also exhibits an empirical decay rate of $n^{-1}$, 
The variance of HT estimator under GCR is extremely high across all networks and is not presented in the plot.

Combining both bias and variance, we observe that the MSE of all three methods decays
with the network size, while the decay rate is notably faster for the estimators based on the RGCR scheme.
In summary, with a large interference network, the \hajek~estimator with RGCR is preferred.

\section{Conclusion}
\label{sec:conclusion}

We developed randomized graph cluster randomization (RGCR) as a scheme for 
the design and analysis of randomized experiments in the presence of interference. 
This scheme is an improvement on the graph clustering randomization (GCR) scheme
in that it is based on a distribution of random clusterings instead of a single fixed clustering,
with favorable consequences for the bias and variance of standard estimators. 
Compared to GCR, the RGCR scheme with proper random clustering generators enjoys significantly 
reduced variance for both the Horvitz--Thompson and H\'ajek estimator of the GATE, and also supports 
complete randomization.
We also discuss how the network drift pattern in nodes response, as is observed in real-world settings, 
plays an important role in the variance of GATE estimation, and propose a new response model exhibiting
homophily in the form of network drift in responses, facilitating a more careful analysis of 
realistic estimator performance.

\footnotesize
\bibliographystyle{abbrv}
\bibliography{paper_main}  

\newpage
\appendix
\section{Empirical study of social network growth rates}
\label{app:growth}

As described in the introduction, the theoretical analyses in this work are developed under either a bound on the maximum degree $\dmax$ of the interference graph or a stronger assumption of bounded geometry, assuming that the interference graph satisfy a restricted growth condition with coefficient $\kappa$.
Given the central role that bounded geometry plays in our theoretical analysis, 
in this appendix we present an empirical study on social network growth statistics. 

We use the Facebook100 datasets~\cite{traud2011comparing,traud2012social,jacobs2015assembling}, a collection of complete Facebook friendship networks at 100 American institutions collected and released in September 2005. 
The networks are quite diverse, most basically varying in size from $672$ to $>30,000$ nodes, which allows us to also understand how the growth statistics can vary with network size and other properties.
In~\Cref{tab:ball_size_both} we present growth statistics from a random subset of 25 networks from the collection, ordered by size $n$.

The average growth geometry of the full population of 100 networks in the FB100 collection is illustrated in \Cref{fig:growth-fb100}. The more fine-grained growth of the \smallworld\ and \stanford\ networks are illustrated in \Cref{fig:growth-sw-and-stanford}. 

We here give a concise summary of specific observations from \Cref{tab:ball_size_both} and these figures.
First, per \Cref{tab:ball_size_both}, diameter appears to be independent of network size. This is not surprising, as diameter is a fragile metric known to be sensitive to whiskers in the network. Second, the maximum degree, restricted growth coefficient, and network size all appear to be positively correlated.
Third, an observation that impacts how we interpret our theoretical results, the restrictive growth coefficients $\kappa$ are large and all above 100. They are typically 25\%-50\% of the max degree $\dmax$.

Looking closer at the results across netowrks in both \Cref{tab:ball_size_both} and \Cref{fig:growth-fb100}, regarding the average ball-size at each radius $r$, we see that for $1 \leq r \leq 3$, the normalized average ball size ratio decreases with $n$, while for $r \geq 4$,  the normalized average ball size saturates near $1$.
For the maximum ball-size at each radius $r$, for $r=1$, $\max|B_1|$ increases with $n$ while the ratio $\max|B_1| / n$ decreases with $n$. For $r = 2$, $\max|B_2|$ is usually more than 90\% of the nodes, and always at least 70\%. Finally, for $r \geq 3$,  $\max|B_r|$ is always more than 98\% of the nodes.

\begin{figure}[htbp] 
   \centering
\begin{minipage}{0.4\linewidth}\centering
   \includegraphics[width=2.5in]{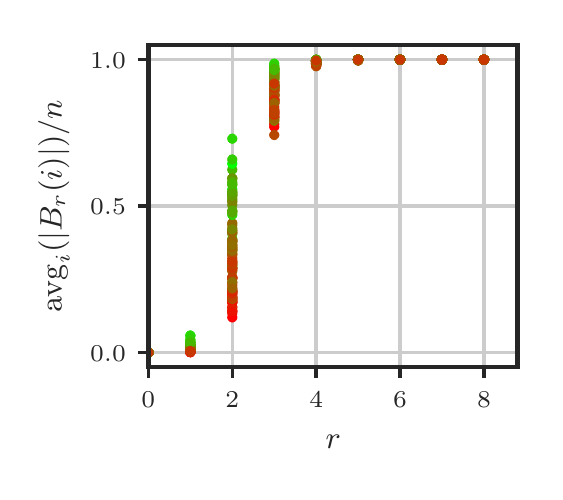} 
\end{minipage}
\begin{minipage}{0.4\linewidth}\centering
   \includegraphics[width=2.5in]{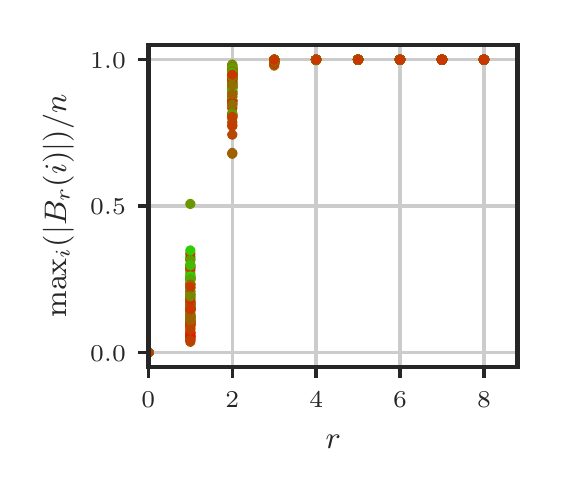} 
\end{minipage}
\caption{
Mean and max ball size at each radius $r$, for each of the 100 networks 
in the FB100 collection. The points are colored based on the size of the networks, $n$, 
with smaller networks green and larger networks red. 
}
\label{fig:growth-fb100}
\end{figure}

\begin{figure}[htpb] 
   \centering
{\small \smallworld} \\
   \includegraphics[width=4.8in]{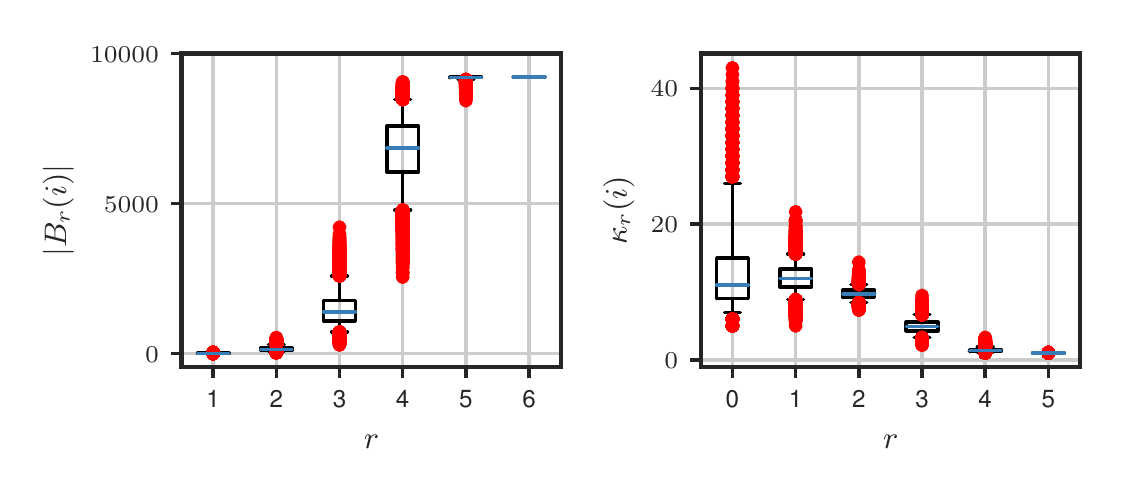} 
\\ 
{\small \stanford} \\
   \includegraphics[width=4.8in]{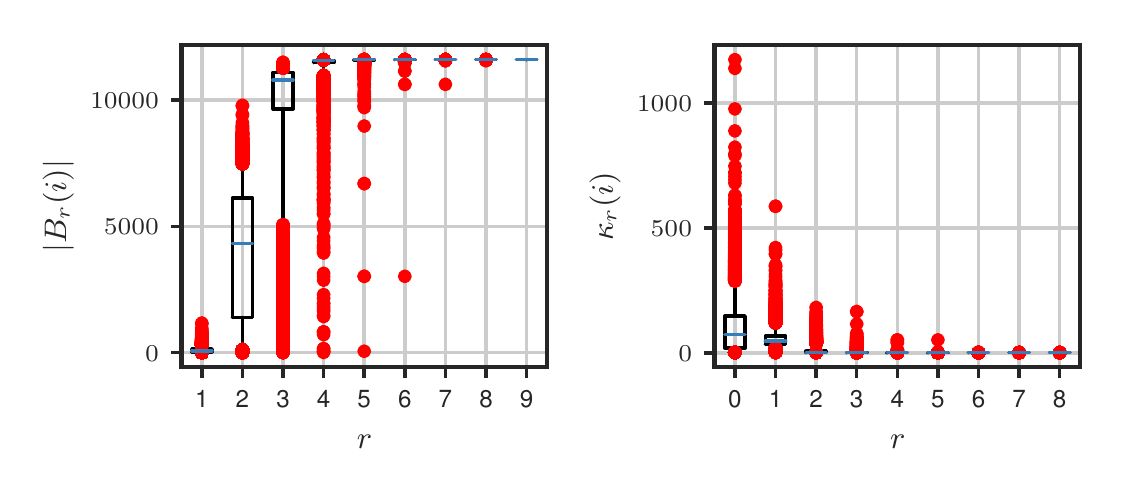} 
\caption{
The ball sizes $|B_r(i)|$ and local growth coefficient $\kappa_i = |B_{r+1}(i)| / |B_r(i)|$ 
of each node $i$  in the synthetic \smallworld\ and real-world \stanford\ network used for \Cref{sec:simulation}.
}
\label{fig:growth-sw-and-stanford}
\end{figure}

\newpage
\begin{landscape}
\begin{table}[p]
\centering
  \captionsetup{width=7in}
  \caption{Network statistics of 25 randomly selected networks from the Facebook-100 collection,
  listing the number of nodes $n$, number of edges $m$, diameter $D$, 
  average degree $\bar d = 2m/n$ maximum degree $\dmax$, and the restrictive
  growth coefficient $\kappa$. Additionally, we present the average
  ball size of each radius until $r=4$ and the maximum ball size until $r=3$, 
  both normalized by the number of nodes in the network.
 }
  \begin{tabular}{l   @{\hskip 15pt}   r r r r r r   c c c   c c c c }
    \toprule
  \multirow{3}{*}[6pt]{University} 
  & \multicolumn{1}{c}{\multirow{3}{*}[6pt]{$n$}}
  & \multicolumn{1}{c}{\multirow{3}{*}[6pt]{ $m$}} 
  & \multicolumn{1}{c}{\multirow{3}{*}[6pt]{$D$}}
  & \multicolumn{1}{c}{\multirow{3}{*}[6pt]{$\bar d$}} 
  & \multicolumn{1}{c}{\multirow{3}{*}[6pt]{$\dmax$}} 
  & \multicolumn{1}{c}{\multirow{3}{*}[6pt]{$\kappa$}}
  & \multicolumn{4}{c} {$\text{avg}_{i \in V} \lvert B_r(i) \rvert / n$} 
  & \multicolumn{3}{c} {$\max_{i \in V} \lvert B_r(i) \rvert / n$}\\
  \cmidrule(lr){8-11}   \cmidrule(lr){12-14}
  &  &  &  &  &  & & $r=1$ & $r=2$ & $r=3$ & $r=4$ & $r=1$ & $r=2$ & $r=3$ \\
    \midrule
    Caltech36 & 762 & 16651 & 6 & 43.70 & 248 & 102 & 0.0587 & 0.6448 & 0.9618 & 0.9986 & 0.3268 & 0.9357 & 0.9987 \\ 
Swarthmore42 & 1657 & 61049 & 6 & 73.69 & 577 & 149 & 0.0451 & 0.6594 & 0.9790 & 0.9991 & 0.3488 & 0.9523 & 0.9994 \\ 
Trinity100 & 2613 & 111996 & 6 & 85.72 & 404 & 148 & 0.0332 & 0.5735 & 0.9684 & 0.9980 & 0.1550 & 0.9529 & 0.9981 \\ 
Wellesley22 & 2970 & 94899 & 8 & 63.91 & 746 & 160 & 0.0219 & 0.4781 & 0.9221 & 0.9938 & 0.2515 & 0.9232 & 0.9963 \\ 
Pepperdine86 & 3440 & 152003 & 9 & 88.37 & 674 & 301 & 0.0260 & 0.5400 & 0.9420 & 0.9936 & 0.1962 & 0.9340 & 0.9968 \\ 
Mich67 & 3745 & 81901 & 7 & 43.74 & 419 & 157 & 0.0119 & 0.2945 & 0.8644 & 0.9906 & 0.1121 & 0.8179 & 0.9939 \\ 
Rice31 & 4083 & 184826 & 6 & 90.53 & 581 & 256 & 0.0224 & 0.5434 & 0.9675 & 0.9994 & 0.1425 & 0.9224 & 0.9995 \\ 
Wake73 & 5366 & 279186 & 9 & 104.06 & 1341 & 671 & 0.0196 & 0.5141 & 0.9561 & 0.9973 & 0.2501 & 0.9702 & 0.9983 \\ 
UChicago30 & 6561 & 208088 & 10 & 63.43 & 1624 & 813 & 0.0098 & 0.3375 & 0.8661 & 0.9832 & 0.2477 & 0.9218 & 0.9938 \\ 
UC64 & 6810 & 155320 & 8 & 45.62 & 660 & 282 & 0.0068 & 0.2103 & 0.7921 & 0.9777 & 0.0971 & 0.8026 & 0.9872 \\ 
WashU32 & 7730 & 367526 & 8 & 95.09 & 1794 & 898 & 0.0124 & 0.4244 & 0.9328 & 0.9957 & 0.2322 & 0.9578 & 0.9988 \\ 
Yale4 & 8561 & 405440 & 9 & 94.72 & 2517 & 1259 & 0.0112 & 0.4132 & 0.9082 & 0.9909 & 0.2941 & 0.9429 & 0.9961 \\ 
Georgetown15 & 9388 & 425619 & 11 & 90.67 & 1235 & 618 & 0.0098 & 0.3546 & 0.8946 & 0.9867 & 0.1317 & 0.8838 & 0.9923 \\ 
Northwestern25 & 10537 & 488318 & 9 & 92.69 & 2105 & 1053 & 0.0089 & 0.3624 & 0.9126 & 0.9936 & 0.1999 & 0.9503 & 0.9974 \\ 
Stanford3 & 11586 & 568309 & 9 & 98.10 & 1172 & 587 & 0.0086 & 0.3375 & 0.8529 & 0.9841 & 0.1012 & 0.8443 & 0.9906 \\ 
USF51 & 13367 & 321209 & 8 & 48.06 & 897 & 319 & 0.0037 & 0.1433 & 0.7428 & 0.9785 & 0.0672 & 0.7441 & 0.9915 \\ 
Northeastern19 & 13868 & 381919 & 9 & 55.08 & 968 & 393 & 0.0040 & 0.1721 & 0.8036 & 0.9850 & 0.0699 & 0.8017 & 0.9930 \\ 
UCSD34 & 14936 & 443215 & 9 & 59.35 & 2165 & 1083 & 0.0040 & 0.1877 & 0.8158 & 0.9868 & 0.1450 & 0.9129 & 0.9970 \\ 
UMass92 & 16502 & 519376 & 8 & 62.95 & 3684 & 1843 & 0.0039 & 0.2068 & 0.8621 & 0.9939 & 0.2233 & 0.9575 & 0.9994 \\ 
UConn91 & 17206 & 604867 & 8 & 70.31 & 1709 & 855 & 0.0041 & 0.2082 & 0.8754 & 0.9946 & 0.0994 & 0.9167 & 0.9980 \\ 
Auburn71 & 18448 & 973918 & 7 & 105.59 & 5160 & 2581 & 0.0058 & 0.3672 & 0.9528 & 0.9989 & 0.2798 & 0.9795 & 0.9998 \\ 
Maryland58 & 20829 & 744832 & 7 & 71.52 & 3784 & 1893 & 0.0035 & 0.2031 & 0.8631 & 0.9937 & 0.1817 & 0.9474 & 0.9989 \\ 
Wisconsin87 & 23831 & 835946 & 9 & 70.16 & 3484 & 1620 & 0.0030 & 0.1888 & 0.8607 & 0.9929 & 0.1462 & 0.9361 & 0.9985 \\ 
Indiana69 & 29732 & 1305757 & 8 & 87.84 & 1358 & 479 & 0.0030 & 0.1794 & 0.8624 & 0.9941 & 0.0457 & 0.8102 & 0.9960 \\ 
MSU24 & 32361 & 1118767 & 8 & 69.14 & 5267 & 2634 & 0.0022 & 0.1413 & 0.8222 & 0.9913 & 0.1628 & 0.9478 & 0.9989 \\ 

    \bottomrule
  \end{tabular}
  \label{tab:ball_size_both}
\end{table}

\end{landscape}

\newpage

\section{Proofs}
\label{app:proofs}



\subsection*{Proof of \Cref{Prp:Dep_u_max}}
\begin{proof}
In the first step of the algorithm (line 1--2), every node independently generates
a random number which can be executed in parallel. Therefore, the depth is $O(1)$
and the work is $O(n)$.
In the second step (line 3--4), each node $i$ computes the maximum of 
$\lvert B_1(i) \rvert = O(\degree{i})$ numbers, and parallel implementation of 
this max procedure requires $O(\log(\degree{i}))$ depth and $O(\degree{i})$
work~\cite{blelloch1996programming}. Moreover, note that the maximization task
at different nodes can also be executed in parallel, and thus in the second step, 
the total work is $\sum_i O(\degree{i}) = O(m)$ and the total depth is
$\max_i O(\log(\degree{i})) = O(\log (\dmax))$.
Combining both steps, the total work is $O(m)$ and total depth is $O(\log(\dmax))$.
\end{proof}

\subsection*{Proof of \Cref{Thm:prob_LB}}
\begin{proof}
We first show that, the probability that node $i$ is in the interior of a cluster
is lower bounded by $1 / \lvert B_2(i)\rvert$, \ie, 
\[
\prob{\forall j \in B_1(i), C_j = C_i \mid \mathcal G} \geq 1/\lvert B_2(i)\rvert.
\]
To this end, we consider a sufficient condition of this event, 
for the 3-net clustering and 1-hop-max clustering separately.
With 3-net clustering, in the first step when we generate a random ordering
of all nodes, if node $i$ ranks first among $B_2(i)$, 
then it must be outside the 2-hop neighborhood 
of every node ahead of itself in the ordering. Therefore, it is left unmarked  
and will be added as a seed, and thus its neighbors must belongs to 
the same cluster as $i$. The probability of this situation, \ie, node $i$
is ranked first amongst $B_2(i)$, is $1 / \lvert B_2(i)\rvert$ since we are
generating the orderings uniformly.
When the 1-hop-max clustering algorithm is used, in the first step where
every node independently generates a random number, 
if node $i$ generates the largest number amongst $B_2(i)$, then we have
$C_j = X_i$ for every $j \in B_1(i)$. This scenario happens with probability
$1 / \lvert B_2(i)\rvert$ since the random numbers generated at each node is \iid

Now we derive the results in the theorem. Conditioning on the event that
node $i$ is in the interior of a cluster, it is full-neighborhood exposed to 
the treatment condition if this cluster is assigned into the treatment group,
which happens with probability $p$. Therefore, by combining the result
in the previous paragraph, we have
\[
\prob{E_i ^\bOne \mid \mathcal G} \geq
\prob{\forall j \in B_1(i), C_j = C_i \mid \mathcal G} \cdot p 
\geq \frac{p}{\lvert B_2(i)\rvert} \geq \frac{p}{(1+\dmax)\kappa}.
\]
With the same reasoning, it can be easily verified that
$\prob{E_i ^{\bZero} \mid \mathcal P} \geq \frac{1-p}{\lvert B_2(i)\rvert} \geq \frac{1-p}{(1+\dmax)\kappa}$,
and this proof applies to both independent and complete randomization scenario.
\end{proof}

\subsection*{Proof of \Cref{Thm:prob_LB_improved}}
\begin{proof}
By symmetry, we only need to consider treatment (control is analogous).
Suppose node $i$ generates the $k$-th largest value in $B_2(i)$ for some $1 \leq k \leq \lvert B_2(i)\rvert$.
If $k \leq \degree i$, then with the generated clustering $\bc$, we have
$\prob{E_i ^\bOne \mid \bc} \geq p^{k}$ since node $i$ is adjacent to
at most $k$ clusters. If $k \geq 1 + \degree i$, then we have
$\prob{E_i ^\bOne \mid \bc} \geq p^{1 + \degree i}$, the trivial
lower bound when every node in $B_1(i)$ is assigned to a different cluster.
By combining the two scenarios, we have
\begin{eqnarray*}
\prob{E_i ^\bOne\mid \mathcal P} 
&=& \expect{ \prob{E_i ^\bOne \mid \bC} \mid \mathcal P}
\\
&\geq& \sum_{k=1} ^{\degree{i}} \frac{p^{k}}{\lvert B_2(i) \rvert}   
+\sum_{k=1+\degree i} ^{\lvert B_2(i)\rvert} \frac{p^{1+\degree i}}{\lvert B_2(i) \rvert} 
\\
&=& \frac{p}{\lvert B_2(i)\rvert} \cdot \left[\frac{1 - p^{\degree i}}{1 - p} +p^{\degree i} (\lvert B_2(i)\rvert - \degree i) \right],
\end{eqnarray*}
where we use the fact that the probability of node $i$ generating the $k$-th
largest number in $B_2(i)$ in the unweighted 1-hop-max clustering algorithm
is $1 / \lvert B_2(i)\rvert$.  As a final step, as $\lvert B_2(i) \rvert - \degree{i} \geq 1/(1-p)$
we conclude $\prob{E_i ^\bOne\mid \mathcal P}  \geq \frac{p}{\lvert B_2(i)\rvert} \cdot \frac1{1-p}$.
\end{proof}


\subsection*{Proof of \Cref{Thm:Hard-epsnet}}
\begin{proof}
Here we provide a proof for the specific case $r = 3$, corresponding to the
$3$-net clustering algorithm used in the original analysis of the graph cluster randomization scheme. 
The proof for other $r$ can be constructed analogously.

We present a polynomial-time reduction to network exposure probability computation 
from the minimum maximal distance-3 independent set (MD3IS) problem, 
which is known to be NP-complete~\cite{eto2014distance}.
This problem is as follows.

\begin{itemize}
\item Minimum Maximal Distance-3 Independent Set problem (decision version): Given a graph $G = (V, E)$ and an integer $K \leq \lvert V \rvert$, 
determine whether there is a maximal distance-3 independent set of size no greater than $K$, i.e., a subset of nodes  
$V_s \subseteq V$ such that
\begin{enumerate}   [(a)]
\item for any pair of nodes $u, v \in V_s$, their graph distance $\dist(u,v)\geq 3$ (\ie, a distance-3 independent set);
\item $V_s$ is not a subset of any other distance-3 independent set (\ie, maximal);
\item $\lvert V_s \rvert \leq K$.
\end{enumerate}
\end{itemize}

For any instance of the MD3IS problem with input $G = (V, E)$ and $K$, 
we construct the following instance of network exposure probability computation problem 
on $\tilde G = (\tilde V, \tilde E)$ around a node $i_0$, and $p = {1}/{(2\lvert V \rvert+1)!}$. We construct $\tilde G$ as follows, where $\sqcup$ is indicates a multi-set union. First, we make two copes if $V$ and add $i_0$ as an additional node. Let $E_1$ be edges connecting the corresponding nodes in $V$ and the copy $V^\prime$. Let $V$ have edges $E$ from the original graph, while let $V^\prime$ be a clique. Lastly, connect every node in $V^\prime$ to $i_0$. More formally:
\begin{itemize}
\item $\tilde V = V \sqcup V^\prime \sqcup \{i_0\}$ where there is a bijection $\phi$ between $V$ and $V^\prime$ (and consequently $\lvert V \rvert = \lvert V^\prime \rvert$).
\item $\tilde E = E \sqcup E_1 \sqcup E_2 \sqcup E_3$, where
\begin{itemize}
\item $E_1 = \{(u, \phi(u)) \mid u \in V\}$, \ie, connecting the every pair of corresponding nodes in $V$ and $V^\prime$.
\item $E_2 = \{(u^\prime, v^\prime) \mid u^\prime, v^\prime \in V^\prime, u^\prime \neq v^\prime\}$, \ie, connecting every pair of nodes in $V^\prime$.
\item $E_3 = \{(u^\prime, i_0)\}$, \ie, connecting node $i_0$ with every node in $V^\prime$.
\end{itemize}
\end{itemize}

Before connecting this exposure probability computation problem with the original MD3IS instance, we first present several
properties of the maximal distance-3 independent sets of $\tilde G$. 
The proofs are found at the end of this section.

\begin{lemma}   \label{Lem:11correspond}
For any node subset in the original graph $V_s \subset V$, it is a maximal distance-3 independent set of $\tilde G$ if and only if it is a maximal distance-3 independent set of $G$.
\end{lemma}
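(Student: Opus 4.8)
The plan is to reduce everything to a single distance identity on $\tilde G$ and then read off both the independence and the maximality conditions from it. First I would establish that for any two distinct nodes $u, v \in V$,
\[
\dist_{\tilde G}(u, v) = \min\{\dist_G(u,v),\, 3\}.
\]
The bound $\dist_{\tilde G}(u,v) \le \dist_G(u,v)$ is immediate because $E \subseteq \tilde E$, and $\dist_{\tilde G}(u,v) \le 3$ follows from the detour $u \to \phi(u) \to \phi(v) \to v$, a valid path of length $3$ since $V'$ is a clique (so $(\phi(u),\phi(v)) \in E_2$). For the matching lower bound I would argue that $V$ is joined to $V' \cup \{i_0\}$ only through the edges of $E_1$; hence any $u$--$v$ path that leaves $V$ must use at least two $E_1$ edges together with at least one edge inside $V' \cup \{i_0\}$, giving length at least $3$, while a path staying inside $V$ has length at least $\dist_G(u,v)$. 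Taking the minimum over the two cases yields the identity.

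Given this identity, the distance-$3$ independence condition transfers verbatim: for $u, v \in V_s \subseteq V$ we have $\dist_{\tilde G}(u,v) \ge 3$ if and only if $\dist_G(u,v) \ge 3$, so $V_s$ is a distance-$3$ independent set of $\tilde G$ exactly when it is one of $G$. The same identity shows that a candidate node $w \in V \setminus V_s$ can be added to $V_s$ in $\tilde G$ (i.e.\ keeps all pairwise distances $\ge 3$) if and only if it can be added in $G$, since $\dist_{\tilde G}(u,w) = \min\{\dist_G(u,w),3\}$ exceeds $2$ precisely when $\dist_G(u,w)$ does.

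Next I would handle the ``extra'' nodes $V' \cup \{i_0\}$, which is where the two notions of maximality could a priori differ, since maximality in $\tilde G$ quantifies over the larger vertex set $\tilde V$. The key computation is that every such node is within distance $2$ of every node of $V$: for $u, v \in V$ one has $\dist_{\tilde G}(u, \phi(v)) \le 2$ via $u \to \phi(u) \to \phi(v)$ (length $1$ if $u=v$), and $\dist_{\tilde G}(u, i_0) \le 2$ via $u \to \phi(u) \to i_0$. Consequently, as soon as $V_s$ is nonempty, no node of $V' \cup \{i_0\}$ can be added to $V_s$ without creating a pair at distance $\le 2$; such nodes are therefore never obstructions to maximality.

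Finally I would assemble these pieces. Since the vertex sets are nonempty, any maximal distance-$3$ independent set is nonempty (the empty set always extends by one node), so I may assume $V_s \ne \emptyset$. Then $V_s$ fails to be maximal in $\tilde G$ iff some node of $\tilde V \setminus V_s$ can be added; by the previous paragraph no node of $V' \cup \{i_0\}$ qualifies, so this happens iff some node of $V \setminus V_s$ can be added, which by the distance identity is iff $V_s$ fails to be maximal in $G$. Combined with the independence equivalence, this proves that $V_s$ is a maximal distance-$3$ independent set of $\tilde G$ iff it is one of $G$. I expect the main obstacle to be the careful justification of the lower bound in the distance identity---arguing that every path leaving $V$ costs at least $3$---which hinges on $E_1$ being the only interface between $V$ and the rest of $\tilde G$.
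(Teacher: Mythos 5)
Your proof is correct and follows essentially the same route as the paper's: the paper also reduces everything to a distance equivalence on $V$ (its auxiliary lemma that $\dist_{\tilde G}(u,v)\geq 3 \iff \dist_G(u,v)\geq 3$, which your sharper identity $\dist_{\tilde G}(u,v)=\min\{\dist_G(u,v),3\}$ refines) together with the same length-$2$ path $(w,\phi(u),u)$ showing that nodes of $V'\sqcup\{i_0\}$ never obstruct maximality once $V_s$ is nonempty. The only differences are organizational---you argue both directions symmetrically through ``addability'' while the paper proves sufficiency directly and necessity by contradiction---so the two proofs are substantively the same.
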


\begin{lemma}   \label{Lem:trivial_maximal_IS}
Any maximal distance-3 independent set of $\tilde G$, unless it is also a maximal distance-3 independent set of $G$, contains only a single-node $u^\prime \in V^\prime \sqcup \{i_0\}$.
\end{lemma}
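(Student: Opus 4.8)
The plan is to reduce the whole statement to a handful of distance computations in $\tilde G$ and then read off the structure of its maximal distance-3 independent sets. The key structural fact I would establish first is that \emph{every} node of $V^\prime \sqcup \{i_0\}$ lies within graph distance $2$ of \emph{every} other node of $\tilde G$. Indeed, since $V^\prime$ is a clique, any two distinct nodes of $V^\prime$ are at distance $1$, and $i_0$ is adjacent to all of $V^\prime$, so $i_0$ is at distance $1$ from each node of $V^\prime$. For a node $u^\prime \in V^\prime$ and an arbitrary $v \in V$, the path $v \to \phi(v) \to u^\prime$ (an $E_1$ edge followed by an $E_2$ clique edge) has length at most $2$; likewise $i_0 \to \phi(v) \to v$ shows $\dist(i_0, v) \le 2$. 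Hence each $w \in V^\prime \sqcup \{i_0\}$ is at distance at most $2$ from all of $\tilde V \setminus \{w\}$.

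This immediately yields a dichotomy for any distance-3 independent set $S$ of $\tilde G$. If $S$ contains some $w \in V^\prime \sqcup \{i_0\}$, then every other node of $\tilde G$ lies within distance $2 < 3$ of $w$, so $S$ can contain no other node, i.e., $S = \{w\}$ is a singleton. Otherwise $S \subseteq V$. I would then record the companion observation that such a singleton $\{w\}$, with $w \in V^\prime \sqcup \{i_0\}$, is automatically \emph{maximal} as a distance-3 independent set, precisely because no further node can be appended (every candidate sits within distance $2$ of $w$).

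To finish, I would combine this dichotomy with \Cref{Lem:11correspond}. Let $S$ be any maximal distance-3 independent set of $\tilde G$ that is \emph{not} a maximal distance-3 independent set of $G$. By the dichotomy, either $S \subseteq V$ or $S$ is a singleton contained in $V^\prime \sqcup \{i_0\}$. In the first case, \Cref{Lem:11correspond} forces $S$ to be a maximal distance-3 independent set of $G$, contradicting the hypothesis; hence the second case must hold, giving exactly the claimed form $S = \{u^\prime\}$ with $u^\prime \in V^\prime \sqcup \{i_0\}$.

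I expect the only delicate point to be the distance bookkeeping — in particular verifying that the clique on $V^\prime$ together with the matching $E_1$ forces every $V^\prime$-node to reach all of $V$ within two hops. This is the whole purpose of the gadget: it makes $V^\prime \sqcup \{i_0\}$ a low-diameter core that cannot coexist with any other seed in a distance-3 independent set. The maximality argument and the appeal to \Cref{Lem:11correspond} are then routine.
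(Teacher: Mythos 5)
Your proof is correct and takes essentially the same route as the paper's: both arguments rest on the clique structure of $V^\prime \sqcup \{i_0\}$ plus the length-$2$ paths through $\phi(\cdot)$ to show no second node can coexist with a node of $V^\prime \sqcup \{i_0\}$, and both invoke \Cref{Lem:11correspond} to rule out the case $S \subseteq V$. The only (cosmetic) difference is that you first establish a dichotomy for all distance-3 independent sets and then apply the hypothesis, whereas the paper argues directly from the hypothesis; the underlying facts are identical.
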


\Cref{Lem:trivial_maximal_IS} illustrates the two types of maximal distance-3 independent set of $\tilde G$. With each of the types as the seed set in 3-net clustering, the following lemma states the conditional exposure probability of node $i_0$.
\begin{lemma}   \label{Lem:conditional_expo_prob}
For a random sample of 3-net clustering $c$ on $\tilde G$, let $V_s$ be the seed set, then
\begin{itemize}
\item with probability $\frac{\lvert V \rvert + 1}{2\lvert V \rvert+1}$, $V_s = \{u^\prime\}$ for some $u^\prime \in V^\prime \sqcup \{i_0\}$, and $\prob{E_{i_0} ^\bOne \mid c} = p$;
\item with probability $\frac{\lvert V \rvert}{2\lvert V \rvert+1}$, $V_s$ is a maximal distance-3 independent set of $G$, and $\prob{E_{i_0} ^\bOne \mid \bc} = p^{\lvert V_s \rvert}$.
\end{itemize}
\end{lemma}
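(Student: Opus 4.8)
The plan is to condition on the identity of the first node $\pi_1$ in the uniformly random ordering used by the $3$-net algorithm, since in this particular graph $\tilde G$ the entire seed set is essentially decided by that first draw. The two facts I would establish first are purely about $2$-balls in $\tilde G$. If $\pi_1 \in V^\prime \sqcup \{i_0\}$, then $B_2(\pi_1) = \tilde V$: any node of $V^\prime \sqcup \{i_0\}$ sees all of $V^\prime$ and $i_0$ within distance $1$ (the clique $E_2$ together with $E_3$), and each $\phi(u) \in V^\prime$ reaches its partner $u \in V$ in one further step via $E_1$, so every node of $V$ is within distance $2$. Hence the greedy loop marks everything after inserting $\pi_1$ and terminates with $V_s = \{\pi_1\}$. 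If instead $\pi_1 = u \in V$, then $B_2(u) \supseteq V^\prime \sqcup \{i_0\}$, because $u$ reaches $\phi(u) \in V^\prime$ in one step and $\phi(u)$ is adjacent to all of $V^\prime$ and to $i_0$; thus inserting $u$ marks all of $V^\prime \sqcup \{i_0\}$, and no node outside $V$ is ever eligible to become a seed thereafter.

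In the second case the remainder of the greedy process runs entirely inside $V$, and I would observe that it exactly simulates the $3$-net algorithm on $G$: for $u,v \in V$ one has $\dist_{\tilde G}(u,v) = \min(\dist_G(u,v),3)$, since the length-$3$ detour $u \to \phi(u) \to \phi(v) \to v$ is always available, so both distance-$3$ independence and the $B_2$-marking sets coincide with those of $G$. By \Cref{Lem:11correspond} the resulting $V_s$ is then a maximal distance-$3$ independent set of $G$. Counting now yields the two probabilities: the ordering is uniform, there are $\lvert V^\prime\rvert + 1 = \lvert V\rvert + 1$ nodes in $V^\prime \sqcup \{i_0\}$ and $\lvert V\rvert$ nodes in $V$ out of $2\lvert V\rvert + 1$ total, so $V_s$ is a single node of $V^\prime \sqcup \{i_0\}$ with probability $\frac{\lvert V\rvert+1}{2\lvert V\rvert+1}$ and a maximal distance-$3$ independent set of $G$ with probability $\frac{\lvert V\rvert}{2\lvert V\rvert+1}$.

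It then remains to read off the conditional exposure probability of $i_0$ in each case. When $V_s = \{u^\prime\}$ is a single seed the clustering has one cluster containing all of $\tilde V$, so $i_0$ together with $B_1(i_0) = \{i_0\} \sqcup V^\prime$ lies in that single cluster and is network-exposed to treatment exactly when that cluster is assigned treatment, giving $\prob{E_{i_0}^\bOne \mid c} = p$. When $V_s \subseteq V$ is a maximal distance-$3$ independent set, the crucial claim is that $B_1(i_0)$ meets every one of the $\lvert V_s\rvert$ clusters. For each seed $s$ I would check that its partner $\phi(s) \in V^\prime \subseteq B_1(i_0)$ is assigned to $s$: indeed $\dist(\phi(s),s) = 1$ via $E_1$, while $\phi(s)$'s only neighbor in $V$ is $s$, so $\dist(\phi(s),s') \geq 2$ for every other seed $s'$, making $s$ the strictly nearest seed. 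Consequently $B_1(i_0)$ intersects all $\lvert V_s\rvert$ clusters, and $i_0$ is network-exposed to treatment only if all of them are assigned treatment; under independent cluster-level randomization this occurs with probability $p^{\lvert V_s\rvert}$.

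The main obstacle I anticipate is exactly this last claim—that the neighborhood of $i_0$ touches each of the $\lvert V_s\rvert$ clusters, not fewer—since this is what turns the exponent into $\lvert V_s\rvert$ and thereby encodes the size of the independent set into the exposure probability, the whole engine of the reduction. The argument hinges on the tie-breaking-independent fact that $\phi(s)$ is \emph{strictly} closer to $s$ than to any competing seed, which is why I would take care to verify $\dist(\phi(s),s)=1<2\le \dist(\phi(s),s')$ rather than relying on how ties are resolved; the cluster of $i_0$ itself and of the partners $\phi(u)$ for non-seed $u$ can be left to arbitrary tie-breaking, as they only ever fall into clusters already counted.
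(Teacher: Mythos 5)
Your proof is correct and takes essentially the same route as the paper's: condition on the identity of the first node in the uniform ordering to obtain the two scenarios with probabilities $\frac{\lvert V\rvert+1}{2\lvert V\rvert+1}$ and $\frac{\lvert V\rvert}{2\lvert V\rvert+1}$, then use the partners $\phi(s)\in V^\prime\subseteq B_1(i_0)$ of the seeds—each strictly closest to its own seed—to show $B_1(i_0)$ meets all $\lvert V_s\rvert$ clusters, yielding $p^{\lvert V_s\rvert}$ under independent randomization. The only difference is presentational: you re-derive the structure of the seed set directly from the greedy process (via $B_2(\pi_1)=\tilde V$ and the simulation of the $3$-net greedy on $G$), whereas the paper simply invokes \Cref{Lem:trivial_maximal_IS} and \Cref{Lem:11correspond}.
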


Now we have the following key result connecting the exposure probability value to the MD3IS problem.
\begin{lemma}   \label{Lem:exposure_prob}
The exposure probability value of node $i_0$ under randomized 3-net clustering corresponds to the MD3IS problem instance as the following:
\begin{enumerate} [(1)]
\item If there exists a maximal distance-3 independent set of size $\leq K$, then 
$\prob{E_{i_0} ^\bOne} \geq \frac{n+1}{2n+1}\cdot p + p^{K+1}$.
\item If every maximal distance-3 independent set is of size $\geq K+1$, then 
$\prob{E_{i_0} ^\bOne} < \frac{n+1}{2n+1}\cdot p + \frac{1}{2} \cdot p^{K+1}$.
\end{enumerate}
\end{lemma}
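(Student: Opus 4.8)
The plan is to combine the three preceding lemmas through the law of total expectation and then exploit the precise calibration $p = 1/(2\lvert V\rvert+1)!$. Writing $n = \lvert V\rvert$ and letting $\mathcal M$ denote the collection of maximal distance-3 independent sets of $G$, I would first apply \Cref{Lem:conditional_expo_prob} to split the unconditional exposure probability according to the two possible types of seed set. Since $\prob{E_{i_0}^\bOne} = \expect[\bc \sim \mathcal P]{\prob{E_{i_0}^\bOne \mid \bc}}$, the single-node seed sets contribute $\frac{n+1}{2n+1}\cdot p$ while each $M \in \mathcal M$ contributes $\prob{V_s = M}\cdot p^{\lvert M\rvert}$, yielding the master identity
\[
\prob{E_{i_0}^\bOne} = \frac{n+1}{2n+1}\cdot p + \sum_{M \in \mathcal M}\prob{V_s = M}\cdot p^{\lvert M\rvert},
\]
where $\sum_{M \in \mathcal M}\prob{V_s = M} = \frac{n}{2n+1}$ is the total probability of a type-2 seed set. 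Both halves of the lemma then reduce to controlling the residual sum over $\mathcal M$.

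For part (2) the argument is immediate: every $M \in \mathcal M$ satisfies $\lvert M\rvert \geq K+1$, so $p^{\lvert M\rvert}\leq p^{K+1}$, and summing gives $\sum_{M}\prob{V_s=M}\,p^{\lvert M\rvert}\leq p^{K+1}\cdot\frac{n}{2n+1}<\tfrac12\, p^{K+1}$, using $\frac{n}{2n+1}<\tfrac12$. This delivers the strict upper bound directly.

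For part (1) I would retain only the term of a maximal distance-3 independent set $M^*$ with $\lvert M^*\rvert\leq K$, so that $p^{\lvert M^*\rvert}\geq p^K$, and then lower bound $\prob{V_s = M^*}$. The key step is to show $\prob{V_s = M^*}\geq p$, which I would argue by exhibiting a favorable family of orderings: whenever the first $\lvert M^*\rvert$ positions of the uniform random ordering are occupied exactly by the nodes of $M^*$ (in any internal order), the greedy 3-net process selects precisely $M^*$ as its seed set. This uses that $M^*$ is distance-3 independent (so no node of $M^*$ marks another before it is reached) together with maximality of $M^*$ in $\tilde G$ via \Cref{Lem:11correspond} (so that once all of $M^*$ has been selected, every remaining node --- including all of $V^\prime\sqcup\{i_0\}$, each of which lies within distance two of any single node of $M^*$ --- is already marked and is never selected). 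The probability of such an ordering is $\frac{\lvert M^*\rvert!\,(2n+1-\lvert M^*\rvert)!}{(2n+1)!}\geq \frac{1}{(2n+1)!}=p$, the inequality being the trivial bound that both factorials are at least one. Combining, $\prob{E_{i_0}^\bOne}\geq \frac{n+1}{2n+1}\,p + p\cdot p^K = \frac{n+1}{2n+1}\,p + p^{K+1}$.

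The main obstacle is exactly this lower bound $\prob{V_s = M^*}\geq p$: it is the only step that requires reasoning about the greedy dynamics rather than bookkeeping, and it is what forces the otherwise mysterious choice $p = 1/(2n+1)!$. Everything downstream is confirming that the $p^{K+1}$ versus $\tfrac12\, p^{K+1}$ gap between parts (1) and (2) is strict, which is what ultimately allows an exact value of $\prob{E_{i_0}^\bOne}$ to decide the underlying MD3IS instance and thereby establish the NP-hardness claimed in \Cref{Thm:Hard-epsnet}.
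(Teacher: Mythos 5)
Your proof is correct and follows essentially the same route as the paper: decompose $\prob{E_{i_0}^\bOne}$ by seed-set type using \Cref{Lem:conditional_expo_prob}, lower-bound part (1) by the probability (at least $1/(2n+1)! = p$) that a small maximal set $M^*$ is chosen as the seed set, and upper-bound part (2) via $\frac{n}{2n+1} < \frac{1}{2}$. In fact your argument is more complete than the paper's, which merely asserts the bound $\prob{V_s = M^*} \geq 1/(2n+1)!$ without the ordering argument you supply (the favorable orderings placing $M^*$ first, with correctness resting on distance-3 independence and maximality in $\tilde G$).
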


Combining the results above, we show that exact computation of the exposure probability
solves the MD3IS instance. Suppose there is a polynomial algorithm such that,
for any graph $\tilde G$, node $i_0$, treatment probability $p$, and any precision $\epsilon > 0$, 
it outputs the treatment exposure probability of precision $\epsilon$ in time 
$poly(\sizeof(\tilde G), \sizeof(p), \log_2(1/\epsilon))$. 
Choosing $\epsilon = {p^{K+1}}/{4}$, we compare the output probability $P^{out}$ with $P^* = \frac{n+1}{2n+1}\cdot p +\frac{3}{4} p^{K+1}$.
If $P^{out} \geq P^*$, due to the precision $\epsilon$, we have 
\[
\prob{E_{i_0}^\bOne} \geq P^* - \epsilon = \frac{n+1}{2n+1}\cdot p + \frac{1}{2} \cdot p^{K+1}.
\]
Now according to scenario (2) in \Cref{Lem:exposure_prob},
the MD3IS instance must have a maximal distance-3 independent set of size $\leq K$.
If $P^{out} < P^*$, again due to the precision $\epsilon$, we have 
\[
\prob{E_{i_0}^\bOne} < P^* + \epsilon = \frac{n+1}{2n+1}\cdot p + p^{K+1},
\]
and thus the MD3IS instance cannot have a maximal distance-3 independent set of size $\leq K$ according to scenario (1) in \Cref{Lem:exposure_prob}.

Given this reduction, we must show that the reduction from the MD3IS is polynomial. 
The size of the MD3IS problem is $\sizeof(G) = \lvert V \rvert  + \lvert E \rvert$. 
For the constructed graph $\tilde G$, we have $\lvert \tilde V \rvert = 2 \lvert V \rvert + 1$,
and $\lvert \tilde E \rvert = \lvert E \rvert + \lvert V \rvert ^2 + 2 \lvert V \rvert$,
and thus constructing $\tilde G$ takes polynomial time and space.
In addition, we have
\[
\sizeof(p) = log_2(1/p) = \log_2((2\lvert V \rvert+1)!) = O(\lvert V \rvert \log(\lvert V \rvert)),
\]
and the log value of required precision
$\log(1 / \epsilon) = (K+1) \log_2(1/p) = O(K\lvert V \rvert \log(\lvert V \rvert))$
is also polynomial in the size of the MD3IS problem.
In summary, the reduction is a polynomial reduction.

\end{proof}

Before presenting the proof of \Cref{Lem:11correspond}, we first  
give an auxiliary result. 
\begin{lemma}   \label{Lem:Distance3Equivalent}
For any distinct nodes $u, v \in V$, we have
\begin{equation}   \label{Eq:Distance3Equivalent}
\dist[\tilde G](u, v) \geq 3 \qquad \Longleftrightarrow \qquad \dist[G](u, v)  \geq 3.
\end{equation}
\end{lemma}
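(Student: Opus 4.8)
The plan is to pin down the distance in $\tilde G$ between two original nodes by showing
$\dist[\tilde G](u,v) = \min\{\dist[G](u,v),\,3\}$, from which the stated equivalence
$\dist[\tilde G](u,v)\ge 3 \Leftrightarrow \dist[G](u,v)\ge 3$ is immediate. One inclusion comes for free. Since $E\subseteq\tilde E$ and both $u,v$ lie in $V$, the restriction of $\tilde G$ to $V$ is exactly $G$ (indeed $\tilde E\cap(V\times V)=E$, because $E_1,E_2,E_3$ never join two $V$-nodes). Hence every $G$-path is a $\tilde G$-path and $\dist[\tilde G](u,v)\le\dist[G](u,v)$. This already delivers one direction in contrapositive form: if $\dist[G](u,v)\le 2$ then $\dist[\tilde G](u,v)\le 2$, so $\dist[\tilde G](u,v)\ge 3$ forces $\dist[G](u,v)\ge 3$.

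The substance is the converse: assuming $\dist[G](u,v)\ge 3$, I would rule out any $\tilde G$-path of length $\le 2$. First I would record the neighbor structure in $\tilde G$: the neighbors of a node $u\in V$ are precisely its $G$-neighbors, all lying in $V$, together with the single copy $\phi(u)\in V^\prime$ contributed by $E_1$; no node of $V$ is incident to an edge of $E_2$ or $E_3$. A length-$1$ path would mean $(u,v)\in\tilde E$, but the only $V$–$V$ edges belong to $E$, contradicting $\dist[G](u,v)\ge 3$. A length-$2$ path would require a common neighbor $w$: if $w\in V$ it is a shared $G$-neighbor, forcing $\dist[G](u,v)\le 2$, a contradiction; the only other possibility is $w\in V^\prime$, but $u$'s unique $V^\prime$-neighbor is $\phi(u)$ and $v$'s is $\phi(v)$, and since $\phi$ is a bijection and $u\ne v$ these are distinct. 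So no common neighbor exists, giving $\dist[\tilde G](u,v)\ge 3$.

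For completeness (and to make the ``excursion'' intuition precise, though it is not strictly needed for the equivalence) I would note that any path leaving $V$ must both exit and re-enter through $E_1$ edges, the unique bridges between $V$ and $\tilde V\setminus V$; concretely $u\to\phi(u)\to\phi(v)\to v$ is a valid path of length $3$ because $V^\prime$ is a clique, so $\dist[\tilde G](u,v)\le 3$ always, which together with the two directions above yields the $\min$ formula.

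I expect no genuine obstacle here; the argument is elementary neighbor bookkeeping. The only point requiring care is guaranteeing that no shortcut through $V^\prime\cup\{i_0\}$ can beat length $3$, and this reduces entirely to the two structural facts that $E_1$ is the sole bridge between $V$ and the rest of $\tilde G$ and that each $V$-node owns exactly one such bridge. Consequently a common neighbor in $V^\prime$ could arise only from $\phi(u)=\phi(v)$, which the injectivity of $\phi$ forbids.
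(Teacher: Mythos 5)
Your proof is correct and follows essentially the same route as the paper's: one direction follows from $E\subseteq\tilde E$ (so $\dist[\tilde G](u,v)\le\dist[G](u,v)$), and the other from the fact that each node $u\in V$ has $\phi(u)$ as its unique neighbor in $V^\prime\sqcup\{i_0\}$, so no length-$2$ shortcut through the copy side can exist since $\phi$ is injective. Your additional observation that $\dist[\tilde G](u,v)=\min\{\dist[G](u,v),3\}$ is a harmless strengthening beyond what the paper proves, but the core argument is the same.
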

\begin{proof}
``$\Longrightarrow$". Since all the edges in $G$ are preserved in $\tilde G$, we have $\dist[\tilde G](u, v) \leq \dist[G](u, v)$. Consequently, if $\dist[\tilde G](u, v) \geq 3$, we must have $\dist[G](u, v)  \geq 3$.

``$\Longleftarrow$". Note that node $u \in V$ is not directly connected to any node in
$V^\prime \sqcup \{i_0\}$ other than $\phi(u)$, and thus for
any path connecting nodes $u$ and $v$ through nodes in $V^\prime \sqcup \{i_0\}$, the path length is at least 3. Therefore, if $\dist[\tilde G](u, v) < 3$, the shortest path can only consists of nodes and edges in the original graph $G$, and thus $\dist[G](u, v) < 3$.
\end{proof}

\begin{proof}[Proof (\Cref{Lem:11correspond})]
Note that a corollary of \Cref{Lem:Distance3Equivalent} is the following:
\begin{itemize}
\item $V_s$ is a distance-3 independent set of $\tilde G$ if and only if it is also a distance-3 independent set of $G$.
\end{itemize}
which is the lemma result regarding only the independent set condition.

We first show the sufficiency in \Cref{Lem:11correspond}. If $V_s$ is a maximal distance-3 independent set of $G$, then according to the argument above, $V_s$ is also a distance-3 independent set of $\tilde G$, and we now show it is maximal, \ie, introducing any other node $w$ into $V_s$ would break the distance-3 independent set condition.
There are two scenarios: $w \in V$ and $w \in V^\prime \sqcup \{i_0\}$.
First, for any node $w \in V, w \notin V_s$, since $V_s$ is a maximal distance-3 independent set in $G$, there exists node $u \in V_s$ such that $\dist[G](u, w) \leq 2$, and thus $\dist[\tilde G](u, w) \leq 2$ according to \Cref{Lem:Distance3Equivalent}. 
Therefore, introducing node $v$ into $V_s$ makes $V_s$ no longer a distance-3 independent set of $\tilde G$.
Second, note that $V_s$ must be nonempty and denote $u$ as an arbitrary node therein. For any node $w \in V^\prime \sqcup \{i_0\}$, due to the length-2 path $(w, \phi(u), u)$, we have $\dist[\tilde G](u, w) \leq 2$ and thus one can not include node $v$ to $V_s$ while maintaining that it is a distance-3 independent set.

Next we show the necessity in \Cref{Lem:11correspond}. If $V_s \subset V$ is a maximal distance-3 independent set of $\tilde G$, again $V_s$ is a distance-3 independent set of $G$ and we only need to show its maximality.
Suppose it is not maximal, and there is another node $w \in V$, $w \notin V_s$ such that $V_s \sqcup\{w\}$ is a distance-3 independent set of $\tilde G$, then according to the argument above, $V_s \sqcup\{w\}$ is also a distance-3 independent set of $\tilde G$. This means that $V_s$ is not maximal in $\tilde G$, which creates a contradiction.
\end{proof}

\begin{proof}[Proof (\Cref{Lem:trivial_maximal_IS})]

Suppose $V_s$ is a maximal distance-3 independent set of $\tilde G$ but not of $G$. 
Note that $V_s$ must contain a node $u^\prime \in V^\prime \sqcup \{i_0\}$ because otherwise
with $V_s \subset V$, according to \Cref{Lem:11correspond}, $V_s$ must also be a maximal distance-3 independent set of $G$.
Now due to the fact that $V^\prime \sqcup \{i_0\}$ induces a complete graph, and thus any other node in $V^\prime \sqcup \{i_0\}$ cannot be included in $V_s$. Moreover, for any node $u \in V$, we have $\dist[\tilde G](u, u^\prime) \leq 2$ due to the path $(w, \phi(u), u)$, and thus $V_s$ cannot contain any node in $V$. In summary, $u^\prime$ is the only node in $V_s$.
\end{proof}

\begin{proof}[Proof (\Cref{Lem:conditional_expo_prob})]
In 3-net clustering, we use the randomized greedy algorithm to construct a maximal distance-3 independent set as the seed set.
With probability $\frac{\lvert V \rvert+1}{2\lvert V \rvert+1}$, the first randomly selected node is from $V^\prime \sqcup \{i_0\}$, and the seed set is a single-node set according to \Cref{Lem:trivial_maximal_IS}, and the conditional network exposure probability is $\prob{E_{i_0} ^\bOne \mid c} = p^1$.

Similarly, with probability $\frac{\lvert V \rvert}{2\lvert V \rvert+1}$, the first randomly selected node is from $V$,
and the seed set is a maximal distance-3 independent set of $G$. Now we show the conditional network exposure probability.
Note that in 3-net clustering, for any seed node $u \in V_s$, node $\phi(u) \in V^\prime$ 
belongs to the same cluster as $u$'s due to it being a directed neighbor of node $u$. 
Since $B_1(i_0) = V^\prime \sqcup \{i_0\}$, $B_1(i_0)$ has nonempty intersection
with all the $\lvert V_s \rvert$ clusters in this 3-net clustering, and thus 
$\prob{E_{i_0} ^\bOne \mid c} = p^{\lvert V_s \rvert}$. 
\end{proof}

\begin{proof}[Proof (\Cref{Lem:exposure_prob})]
For the first result, if the MD3IS problem has a maximal distance-3 independent set of size $\leq K$, then this set will be selected as the seed set
in 3-net clustering with probability no less than $\frac{1}{(2n+1)!}$. Recall that $p = {1}/{(2\lvert V \rvert+1)!}$ by construction. Thus
\[
\prob{E_{i_0} ^\bOne} \geq 
\frac{n+1}{2n+1}\cdot p + \frac{1}{(2n+1)!} \cdot p^{K}
= \frac{n+1}{2n+1}\cdot p + p^{K+1}
\]
where the first term comes from the case when a single-node maximal distance-3 independent set is used as seeds.

For the second result, if every maximal distance-3 independent set of size $\geq K+1$ in the MD3IS problem, then any class-2 maximal independent set of $\tilde G$ is of size $\geq K+1$, and thus
\[
\prob{E_{i_0} ^\bOne} \leq 
\frac{n+1}{2n+1}\cdot p + \frac{n}{2n+1} \cdot p^{K+1}
< \frac{n+1}{2n+1}\cdot p + \frac{1}{2} \cdot p^{K+1}.
\]
\end{proof}

\subsection*{Proof of \Cref{Thm:prob_MC_var}}
\begin{proof}
Unless otherwise stated, all the expectation and variance in this proof are taken 
conditioned on the random clustering distribution $\mathcal P$ in use;
thus to simplify the notation, we sometimes discard the conditional notation
in the expectation and variance symbol.

Let $\bC$ be a random clustering generated from 3-net or 1-hop-max algorithm,
we first show that
\begin{equation*}
\textstyle
\var[\bC \sim \mathcal P]{\frac{\prob{E_i ^{\bOne} \mid \bC} - \prob{E_i ^\bOne \mid \mathcal P}}{\prob{E_i ^\bOne \mid \mathcal P}}}
\leq \frac{\lvert B_2(i)\rvert}{p}.
\end{equation*}
Consider the Bernoulli random variable $\indic{E_i ^\bOne}$, and we have
\[
\var{\indic{E_i ^\bOne}} 
= \prob{E_i ^\bOne \mid \mathcal P} \cdot (1 - \prob{E_i ^\bOne \mid \mathcal P}) 
\leq \prob{E_i ^\bOne \mid \mathcal P}.
\]
Moreover, note that $\expect{\indic{E_i ^\bOne} \mid \bC} = \prob{E_i ^\bOne \mid \bC}$,
and consequently due to the law of total variance, we have
\[
\var[\bC \sim \mathcal P]{\prob{E_i ^\bOne \mid \bC}} = \var[\bC \sim \mathcal P]{\expect{\indic{E_i ^\bOne} \mid \bC}}
\leq \var{{\indic{E_i ^\bOne}} } \leq \prob{E_i ^\bOne \mid \mathcal P},
\]
and thus
\begin{equation}   \label{Eq:var_expo_prob_est}
\textstyle
\var[\bC \sim \mathcal P]{\frac{\prob{E_i ^{1} \mid \bC} - \prob{E_i ^\bOne\mid \mathcal P}}{\prob{E_i ^\bOne\mid \mathcal P}}}
\leq \frac{\prob{E_i ^\bOne\mid \mathcal P}}{\prob{E_i ^\bOne\mid \mathcal P}^2} 
= \frac{1}{\prob{E_i ^\bOne\mid \mathcal P}} 
\leq \frac{\lvert B_2(i)\rvert}{p},
\end{equation}
where the second inequality is due to \Cref{Thm:prob_LB}.

Now to prove the inequality in \Cref{Thm:prob_MC_var}, we note that
\[ \textstyle
\frac{\hat{\mathbb{P}}[E_i ^\bOne \mid \mathcal P] - \prob{E_i ^\bOne \mid \mathcal P}}{\prob{E_i ^\bOne \mid \mathcal P}}
=
\frac{ \left( \frac{1}{K} \sum_{k=1}^K \prob{E_i ^{\bOne} \mid \bc^{(k)}} \right) - \prob{E_i ^\bOne \mid \mathcal P}}{\prob{E_i ^\bOne \mid \mathcal P}}
= \frac{1}{K} \sum_{k=1}^K \left[ \frac{\prob{E_i ^{\bOne} \mid \bc^{(k)}} - \prob{E_i ^\bOne \mid \mathcal P}}{\prob{E_i ^\bOne \mid \mathcal P}} \right],
\]
and thus
\[ \textstyle
\var{\frac{\hat{\mathbb{P}}[E_i ^\bOne \mid \mathcal P] - \prob{E_i ^\bOne \mid \mathcal P}}{\prob{E_i ^\bOne \mid \mathcal P}} ~\middle|~ \mathcal P}
= \frac{1}{K} \var[\bC \sim \mathcal P]{  \frac{\prob{E_i ^{\bOne} \mid \bC} - \prob{E_i ^\bOne \mid \mathcal P}}{\prob{E_i ^\bOne \mid \mathcal P}}  }  \leq \frac{\lvert B_2(i)\rvert}{Kp}.
\]
Now note that the expected value of 
$\frac{\hat{\mathbb{P}}[E_i ^\bOne \mid \mathcal P] - \prob{E_i ^\bOne \mid \mathcal P}}{\prob{E_i ^\bOne \mid \mathcal P}}$
is zero due to the fact that $\hat{\mathbb{P}}[E_i ^\bOne \mid \mathcal P]$ is 
an unbiased estimator of $\prob{E_i ^\bOne \mid \mathcal P}$. Thus the squared error is the same as the variance.
\end{proof}

\subsection*{Proof of \Cref{Thm:var_restricted_growth_general}}
We first present the following result on the joint exposure probability
of a pair of nodes.
\begin{lemma}   \label{Lem:1_hop_max-joint_prob}
For the 1-hop-max random clustering algorithm, if $\dist(i, j) > 4$ for a pair
of nodes $i$ and $j$, then for $\bz = \bOne$ or $\bz = \bZero$ and
\begin{itemize}
\item independent randomization, we have 
$\prob{E_i ^\bz \cap E_j ^\bz \mid \mathcal P} = \prob{E_i ^\bz \mid \mathcal P} \cdot \prob{E_j ^\bz \mid \mathcal P}$;
\item complete randomization, we have
$\prob{E_i ^\bz \cap E_j ^\bz \mid \mathcal P} \leq \prob{E_i ^\bz \mid \mathcal P} \cdot \prob{E_j ^\bz \mid \mathcal P}$.
\end{itemize}
\end{lemma}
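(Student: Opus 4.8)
The plan is to reduce the claim to a statement about the disjointness of the randomness driving the two exposure events, and then invoke either independence (for independent randomization) or negative association (for complete randomization). First I would record a geometric fact specific to $1$-hop-max: if a node $u$ lies in the cluster centered at $w$, then $u$'s label is $\max\{X_{u'} : u' \in B_1(u)\} = X_w$, forcing $w \in B_1(u)$, so every cluster is contained in $B_1$ of its center. Hence any cluster meeting $B_1(i)$ has center $w$ with $\dist(i,w) \le 2$, i.e.\ $w \in B_2(i)$. When $\dist(i,j) > 4$ we have $B_2(i) \cap B_2(j) = \emptyset$ (a common node would force $\dist(i,j) \le 4$), so the set $\mathcal C_i$ of clusters meeting $B_1(i)$ and the set $\mathcal C_j$ meeting $B_1(j)$ are centered in disjoint regions and are therefore themselves disjoint. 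Moreover, by \Cref{Lem:1_hop_max-local_dep} the partition of $B_1(i)$ into clusters is a function of $\{X_k : k \in B_2(i)\}$ alone, and $E_i^\bz$ is precisely the event that every cluster in $\mathcal C_i$ receives the common assignment $\bz$; so $E_i^\bz$ is determined by $\{X_k : k \in B_2(i)\}$ together with the assignments of the clusters in $\mathcal C_i$ (and symmetrically for $j$).

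For independent randomization I would use the representation in which each potential center $w$ carries an independent $\mathrm{Bernoulli}(p)$ label $T_w$ and a cluster inherits the label of its center. Then $E_i^\bz$ is a function only of $\{X_k, T_k : k \in B_2(i)\}$ and $E_j^\bz$ a function only of $\{X_k, T_k : k \in B_2(j)\}$. Since $B_2(i) \cap B_2(j) = \emptyset$ and all the $X$'s and $T$'s are mutually independent, the two events are functions of disjoint independent families and hence independent, giving the stated equality $\prob{E_i^\bz \cap E_j^\bz \mid \mathcal P} = \prob{E_i^\bz \mid \mathcal P}\prob{E_j^\bz \mid \mathcal P}$.

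For complete randomization I would condition on the realized clustering $\bc$. The cluster-level treatment indicators produced by size-stratified pairing (one coin per pair, coins independent across pairs) form a negatively associated family; since $E_i^\bz$ and $E_j^\bz$ are monotone in the same direction (both increasing in the treatment indicators when $\bz=\bOne$, both decreasing when $\bz=\bZero$) and depend on the disjoint cluster sets $\mathcal C_i, \mathcal C_j$, negative association yields $\prob{E_i^\bz \cap E_j^\bz \mid \bc} \le \prob{E_i^\bz \mid \bc}\,\prob{E_j^\bz \mid \bc}$. Writing $f_i(\bc) = \prob{E_i^\bz \mid \bc}$ and taking expectations over $\bc \sim \mathcal P$, the claimed inequality then follows provided I also establish $\cov{f_i(\bC), f_j(\bC)} \le 0$.

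The main obstacle is exactly this last covariance step. Under independent randomization $f_i$ depends only on the number of clusters meeting $B_1(i)$, which is a function of $X_{B_2(i)}$, so $f_i$ and $f_j$ are independent and the covariance vanishes (recovering equality). Under complete randomization, however, $f_i(\bc)$ depends on the \emph{global} size-stratified pairing (for instance $f_i = 0$ whenever two clusters of $\mathcal C_i$ are paired together), and since cluster sizes have a footprint reaching out to $B_4(i)$, the functions $f_i$ and $f_j$ are neither local nor obviously uncorrelated. I would handle this by showing that any pairing coin shared between the $i$-side and the $j$-side couples the two events only negatively, i.e.\ by establishing negative association for the combined clustering-plus-assignment system on a fixed index set, or by a direct coupling argument bounding the residual covariance by zero. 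The conditional negative-association step and the independent case are comparatively routine; this integration over the random clustering is the technical heart of the lemma.
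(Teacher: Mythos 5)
Your first two steps are correct and are, in substance, the paper's own argument. The geometric fact (clusters meeting $B_1(i)$ have centers in $B_2(i)$, hence $\mathcal C_i \cap \mathcal C_j = \emptyset$ when $\dist(i,j) > 4$) appears in the paper in the equivalent form of a length-$4$ path contradiction. For independent randomization, the paper conditions on $\bC$, factorizes $\prob{E_i ^\bz \cap E_j ^\bz \mid \bC}$ using the disjointness of the adjacent cluster sets, and then factorizes the expectation using \Cref{Lem:1_hop_max-local_dep} together with $B_2(i) \cap B_2(j) = \emptyset$; your device of pre-assigning an independent Bernoulli label $T_w$ to every potential center and letting clusters inherit their center's label is a slightly different packaging of the same localization and is valid.

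The complete-randomization case is where your proposal stops short, and you should know that the step you call the ``technical heart'' is exactly the step the paper dispatches in one line. The paper proves, as you do, the conditional inequality $\prob{E_i ^\bz \cap E_j ^\bz \mid \bc} \leq \prob{E_i ^\bz \mid \bc}\,\prob{E_j ^\bz \mid \bc}$ (the joint probability drops to zero when a cluster adjacent to $i$ is paired with one adjacent to $j$), takes expectations over $\bC \sim \mathcal P$, and then asserts $\expect{\prob{E_i ^\bz \mid \bC}\,\prob{E_j ^\bz \mid \bC}} = \prob{E_i ^\bz \mid \mathcal P}\,\prob{E_j ^\bz \mid \mathcal P}$ ``as verified in the proof for independent randomization.'' But that verification rested on the identity $\prob{E_i ^\bz \mid \bC} = \prob{E_i ^\bz \mid \bC_{B_1(i)}}$, which is precisely what fails under complete randomization: as you observe, whether two clusters adjacent to $i$ end up paired depends on the size-ranked pairing of \emph{all} clusters, so the conditional marginal is a global functional of $\bC$ and zero covariance does not follow from local dependence. (Nor can one detour through the independent-randomization marginals: that route yields a bound against $\prob{E_i ^\bz \mid \mathcal P}$ computed under independent assignment, which is \emph{larger} than the complete-randomization marginal, so the inequality points the wrong way.) In short, your proposal is incomplete, but the gap you identify --- establishing $\cov{\prob{E_i ^\bz \mid \bC}, \prob{E_j ^\bz \mid \bC}} \leq 0$ under the size-stratified pairing --- is genuine, and the paper's proof does not actually close it either; it implicitly treats the complete-randomization conditional exposure probabilities as local functions of the clustering, which they are not.
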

\begin{proof}
We first show that nodes $i$ and $j$ satisfying the above requirements 
can not be adjacent to the same cluster in
any clustering generated from 1-hop-max. If otherwise, then there exists nodes
$i^\prime \in B_1(i)$ and $j^\prime \in B_1(j)$ such that $C_{i^\prime} = C_{j^\prime}$,
and a node $k$ such that $i^\prime, j^\prime \in B_1(k)$ with $C_{i^\prime} = C_{j^\prime} = X_k$ (recall that in the 1-hop-max algorithm, $X_1,\cdots,X_n$ are the $\mathcal U(0, 1)$ samples and also the signifiers of the clusters),
we have $\dist(i, j) \leq 4$ due to the path $[i, i^\prime, k, j^\prime, j]$,
contradictory to our assumption that $\dist(i, j) > 4$.

Now we prove the results in the lemma.
Due to symmetry, it suffices to just prove for the case of $\bz = \bOne$, 
and we first analyze the independent randomization scenario. 
For any clustering $\bC$ generated from 1-hop-max,
since $i$ and $j$ are not adjacent to a same cluster in $\bC$, we have
$\prob{E_i ^\bOne \cap E_j ^\bOne  \mid \bC}
= \prob{E_i ^\bOne \mid \bC} \cdot \prob{E_i ^\bOne \mid \bC}$.
Moreover, since $\dist(i,j) > 4$, we have $B_2(i) \cap B_2(j) = \emptyset$,
and according to \Cref{Lem:1_hop_max-local_dep}, we have
$\bC_{B_1(i)}$ and $\bC_{B_1(j)}$ independent.
Combining both results, and note
the fact that $\prob{E_i ^\bOne \mid \bC} = \prob{E_i ^\bOne \mid \bC_{B_1(i)}}$
and $\prob{E_j ^\bOne \mid \bC} = \prob{E_j ^\bOne \mid \bC_{B_1(j)}}$, we have
\begin{eqnarray*}
\prob{E_i ^\bOne \cap E_j ^\bOne \mid \mathcal P} 
  &=& \expect{\prob{E_i ^\bOne \cap E_j ^\bOne \mid \bC}}
\\&=& \expect{\prob{E_i ^\bOne \mid \bC} \cdot \prob{E_i ^\bOne \mid \bC}}
\\&=& \expect{\prob{E_i ^\bOne \mid \bC_{B_1(i)}} \cdot \prob{E_i ^\bOne \mid \bC_{B_1(j)}}}
\\&=& \expect{\prob{E_i ^\bOne \mid \bC_{B_1(i)}}} \cdot \expect{\prob{E_i ^\bOne \mid \bC_{B_1(j)}}}
\\&=& \prob{E_i ^\bOne \mid \mathcal P} \cdot \prob{E_j ^\bOne \mid \mathcal P},
\end{eqnarray*}
where the last but one equality is due to $\bC_{B_1(i)}$ and $\bC_{B_1(j)}$
being independent.

We now prove the result for the complete randomization scenario. The proof
is almost identical to that of independent randomization, except for the fact that
$\prob{E_i ^\bOne \cap E_j ^\bOne  \mid \bC}
\leq \prob{E_i ^\bOne \mid \bC} \cdot \prob{E_i ^\bOne \mid \bC}$
which is an equality in the independent randomization. Here the joint probability
might be smaller due to the scenario if one of the clusters adjacent to node $i$
is paired with one adjacent to node $j$, and thus $\prob{E_i ^\bOne \cap E_j ^\bOne  \mid \bC} = 0$.
Formally, we have
\[
\prob{E_i ^\bOne \cap E_j ^\bOne \mid \mathcal P} 
= \expect{\prob{E_i ^\bOne \cap E_j ^\bOne \mid \bC}}
\leq \expect{\prob{E_i ^\bOne \mid \bC} \cdot \prob{E_i ^\bOne \mid \bC}}
= \prob{E_i ^\bOne \mid \mathcal P} \cdot \prob{E_j ^\bOne \mid \mathcal P}
\]
where the last equality is verified in the proof for independent randomization.
\end{proof}

With this lemma in hand, we now prove \Cref{Thm:var_restricted_growth_general}.

\begin{proof}[Proof (\Cref{Thm:var_restricted_growth_general})]
According to \Cref{Lem:1_hop_max-joint_prob}, for any pair of nodes
$i$ and $j$ such that $j \notin B_4(i)$, we have
$\frac{\prob{E_i ^\bOne \cap E_j ^\bOne}}{\prob{E_i ^\bOne} \prob{E_j ^\bOne}}- 1 \leq 0$
for both independent and complete randomization.
Now the variance of the mean outcome estimator, as formulated in
\Cref{Eq:var-mean_outcome}, satisfies
\begin{eqnarray*}
\var{\hat \mu _{\mathcal P}(\bOne)}
 &=& \textstyle \frac{1}{n^2}  \sum_{i=1}^n \left[ \left( \frac{1}{\prob{E_i ^\bOne \mid \mathcal P}} - 1 \right)Y_i(\bOne)^2
+  \sum_{ j \neq i} \left( \frac{\prob{E_i ^\bOne \cap E_j ^\bOne \mid \mathcal P}}{\prob{E_i ^\bOne \mid \mathcal P} \prob{E_j ^\bOne \mid \mathcal P}}- 1 \right) Y_i(\bOne)Y_j(\bOne) \right]
\\&\leq& \textstyle \frac{1}{n^2}  \sum_{i=1}^n \left[ \left( \frac{1}{\prob{E_i ^\bOne \mid \mathcal P}} - 1 \right)Y_i(\bOne)^2
+  \sum_{j \in B_4(i), j \neq i} \left( \frac{\prob{E_i ^\bOne \cap E_j ^\bOne \mid \mathcal P}}{\prob{E_i ^\bOne \mid \mathcal P} \prob{E_j ^\bOne \mid \mathcal P}}- 1 \right) Y_i(\bOne)Y_j(\bOne) \right]
\\&\leq& \textstyle \frac{1}{n^2}  \sum_{i=1}^n \left[ \left( \frac{1}{\prob{E_i ^\bOne \mid \mathcal P}} - 1 \right)Y_i(\bOne)^2
+  \sum_{j \in B_4(i), j \neq i} \left( \frac{1}{\prob{E_i ^\bOne \mid \mathcal P}}- 1 \right) Y_i(\bOne)Y_j(\bOne) \right]
\\&\leq&  \textstyle \frac{\bar Y^2}{n^2}  \sum_{i=1}^n \left[ \left( \frac{1}{\prob{E_i ^\bOne \mid \mathcal P}} - 1 \right) \cdot \lvert B_4(i) \rvert  \right]
\\&\leq& \textstyle \frac{\bar Y^2}{n^2} \sum_{i=1}^n \frac{\lvert B_4(i) \rvert}{\prob{E_i ^\bOne \mid \mathcal P}},
\end{eqnarray*}
where the second inequality is due to $\prob{E_i ^\bOne \cap E_j ^\bOne \mid \mathcal P} \leq \prob{E_j ^\bOne \mid \mathcal P}$. 
\end{proof}

\subsection*{Proof of \Cref{Thm:var_GATE_restricted_growth}}
\begin{proof}
Analogous to \Cref{Thm:var_restricted_growth_unweighted}, it can be verified that
$\var{\hat \mu _{\mathcal P}(\bZero) } \leq \frac{1}{n} \cdot  \bar Y^2 (d_{\max}+1) ^2 \kappa^4 (1-p)^{-1}$. 
Now according to the variance formula in \Cref{Eq:var-GATE}, we have
\begin{eqnarray*}
\var{\hat \tau _{\mathcal P}} 
  &=& \var{\hat \mu _{\mathcal P}(\bOne )} + \var{\hat \mu _{\mathcal P}(\bZero)} - 2 \cdot \cov{\hat \mu _{\mathcal P}(\bOne), \hat \mu _{\mathcal P}(\bZero)}
\\&\leq& \var{\hat \mu _{\mathcal P}(\bOne )} + \var{\hat \mu _{\mathcal P}(\bZero)} + 2 \cdot \sqrt{\var{\hat \mu _{\mathcal P}(\bOne )} \cdot \var{\hat \mu _{\mathcal P}(\bZero)}}
\\&\leq& \var{\hat \mu _{\mathcal P}(\bOne )} + \var{\hat \mu _{\mathcal P}(\bZero)} + \var{\hat \mu _{\mathcal P}(\bOne )} + \var{\hat \mu _{\mathcal P}(\bZero)}
\\&\leq& \frac{2}{n} \cdot  \bar Y^2 (d_{\max}+1) ^2 \kappa^4 (p^{- 1} + (1-p)^{-1}).
\end{eqnarray*}
where the second inequality is due to the mean inequality, and the last inequality
is due to the variance upper bound of $\hat \mu _{\mathcal P}(\bOne)$ and $\hat \mu _{\mathcal P}(\bZero)$ respectively.
\end{proof}


\subsection*{Proof of \Cref{Prp:beta_dist_prop}}
\begin{proof}
For result {\it(a)}, Since the PDF of $\beta(w, 1)$ distribution is $f(x) = wx^{w-1}$, we have
\[
\prob{X_j < X_i} = \int_{0} ^1 f(x_i) \prob{X_j \leq x_i}  dx_i = \int_{0} ^1 w_i x^{w_i - 1} \cdot x_i ^{w_j} dx_i = \frac{w_i}{w_i + w_j}.
\]
For result {\it (b)}, we have
\[
\prob{\max\{X_i, X_j\} \leq x} = \prob{X_i \leq x}\prob{X_j \leq x} = x^{w_i + w_j},
\]
which is the CDF of the $\beta(w_i + w_j, 1)$ distribution.
\end{proof}

\subsection*{Proof of \Cref{Thm:prob_LB_weighted}}

We first prove the following useful lemma.

\begin{lemma}   \label{Lem:ProbLargestNumber}
If every node generates $X_i \sim \beta(w_i, 1)$ independent, then we have 
$\prob{X_i \geq X_j, \forall j \in B_2(i)} = w_i / \left( \sum_{j \in B_2(i)} w_j\right)$
\end{lemma}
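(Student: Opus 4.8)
The plan is to recognize that the event in question is precisely the event that $X_i$ attains the maximum among the independent draws $\{X_j : j \in B_2(i)\}$, recalling that $i \in B_2(i)$ itself. Since each $X_j$ is a continuous random variable, ties occur with probability zero, so the weak and strict inequalities coincide and I may work with either interchangeably.

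There are two natural routes, and I would lead with the one that leans directly on the already-established \Cref{Prp:beta_dist_prop}. First I would reduce the multi-node comparison to a two-variable comparison by collapsing the competitors into a single random variable. Writing $S = B_2(i)$ and $M = \max_{j \in S, j \neq i} X_j$, part (b) of \Cref{Prp:beta_dist_prop} states that the maximum of two independent $\beta(\cdot, 1)$ variables is again $\beta(\cdot, 1)$ with the shape parameters summed; applying this repeatedly (a short induction on $\lvert S \rvert - 1$, valid because at each step the running maximum is independent of the next fresh draw) yields $M \sim \beta\bigl(\sum_{j \in S, j \neq i} w_j, 1\bigr)$. Since all the $X_j$ are mutually independent, $X_i$ is independent of $M$, so part (a) of \Cref{Prp:beta_dist_prop} applies to the pair $(X_i, M)$ and gives
$$
\prob{X_i \geq M} = \frac{w_i}{w_i + \sum_{j \in S, j \neq i} w_j} = \frac{w_i}{\sum_{j \in B_2(i)} w_j},
$$
which is the claim.

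As a self-contained alternative (and a useful sanity check) I would instead integrate directly against the density. Using that the $\beta(w, 1)$ law has density $w x^{w-1}$ and CDF $x^{w}$ on $[0,1]$, independence gives
$$
\prob{X_i \geq X_j \ \forall j \in B_2(i)} = \int_0^1 w_i\, x^{w_i - 1} \prod_{j \in B_2(i),\, j \neq i} x^{w_j}\, dx = \int_0^1 w_i\, x^{\bigl(\sum_{j \in B_2(i)} w_j\bigr) - 1}\, dx = \frac{w_i}{\sum_{j \in B_2(i)} w_j},
$$
where the product of CDFs collapses into a single power of $x$ because the exponents add. This mirrors exactly the computation already carried out in the proof of \Cref{Prp:beta_dist_prop}(a), now with an arbitrary number of competitors rather than one.

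There is no substantive obstacle here; the only point requiring a word of care is the inductive use of part (b), where one must invoke independence of the partial maximum from each newly incorporated variable — but this is immediate since the $X_j$ are drawn independently. The continuity of the distributions (and the resulting absence of ties) is what lets the strict-versus-weak distinction be ignored throughout.
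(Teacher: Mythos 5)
Your primary argument is correct and is essentially identical to the paper's own proof: collapse the competitors via \Cref{Prp:beta_dist_prop}(b) into a single $\beta\bigl(\sum_{j \in B_2(i), j \neq i} w_j, 1\bigr)$ variable independent of $X_i$, then apply part (a) to the resulting pair. Your explicit induction and the alternative direct-integration computation are fine additional care, but they do not change the route, which matches the paper's.
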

\begin{proof}
According to the second result in \Cref{Prp:beta_dist_prop}, the distribution
of $\bar X \triangleq \max([X_j]_{j \in B_2(i), j \neq i})$ is $\beta(S_i, 1)$ where
$S_i = \sum_{j \in B_2(i), j \neq i}w_j$. Moreover, we note that $\bar X$ and $X_i$
are independent, and thus $\prob{X_i \geq \bar X} = w_i / (S_i + w_i)$ according to
the first result in \Cref{Prp:beta_dist_prop}. This result is equivalent to the lemma statement.
\end{proof}

\begin{proof}[Proof (\Cref{Thm:prob_LB_weighted})]
Analogous to the proof of \Cref{Thm:prob_LB}, one only need to show that
the probability of node $i$ being in the interior of a cluster is lower bounded by
${w_i}/\left({\sum_{j \in B_2(i)} w_j} \right)$.

For 1-hop-max clustering, similarly a sufficient condition for node $i$
being in the interior of a clustering is if node $i$
generates the largest number in $B_2(i)$, \ie, $X_i \geq X_j$ for any
$j \in B_2(i)$. By \Cref{Lem:ProbLargestNumber} we have its probability being
${w_i}/\left({\sum_{j \in B_2(i)} w_j} \right)$, and which
is a lower bound on the probability of node $i$ being in the interior of a cluster.

For 3-net clustering, again we consider the scenario when node $i$ is ranked
first among nodes in $B_2(i)$, a sufficient condition of node $i$ being a seed node
and thus in the interior of a cluster. According to the procedure in \Cref{Alg:3_net-w},
this is equivalent to node $i$ generating the largest number in $B_2(i)$. 
According to \Cref{Lem:ProbLargestNumber} , that probability
is ${w_i}/\left({\sum_{j \in B_2(i)} w_j} \right)$.
\end{proof}

\subsection*{Proof of \Cref{Thm:prob_MC_var_weighted}}
The proof is almost the same as that of \Cref{Thm:prob_MC_var},
except that in \Cref{Eq:var_expo_prob_est} we apply the lower bound
result of \Cref{Thm:prob_LB_weighted}.

\subsection*{Proof of \Cref{Thm:UniformOptimal}}
\begin{proof}
Note that $j \in B_2(i)$ is equivalent as $\dist(i, j) \leq 2$, and thus
\begin{eqnarray*}
\bar H(\bw) 
&=& \frac{1}{p} \sum_{i=1} ^n \frac{\sum_{j \in B_2(i) }w_j}{w_i}
= \frac1 p \sum_{(i, j): \dist(i,j) \leq 2} \left( \frac{w_j}{w_i} + \frac{w_i}{w_j}\right) 
\\
&\geq& \frac 1 p \sum_{(i, j): \dist(i,j) \leq 2} 2\cdot \sqrt{\frac{w_j}{w_i} \cdot \frac{w_i}{w_j}}
= \frac 1 p \sum_i \lvert B_2(i) \rvert,
\end{eqnarray*}
where the inequality applies the AM--GM Inequality, which holds as equality
if all nodes have the same weight.
\end{proof}

\subsection*{Proof of \Cref{Thm:UnbiasedHajek}}
\begin{proof}
With a constant treatment effect $\tau$, we have
\[
\tilde \mu(\bOne) 
= \frac{ \sum_{i=1}^n {\indic{E_i ^\bOne}}/{\prob{E_i ^\bOne}}\cdot (Y_i(\bZero) + \tau)}{ \sum_{i=1}^n {\indic{E_i ^\bOne}}/{\prob{E_i ^\bOne}}}
= \tau + \frac{ \sum_{i=1}^n {\indic{E_i ^\bOne}}/{\prob{E_i ^\bOne}}\cdot Y_i(\bZero)}{ \sum_{i=1}^n {\indic{E_i ^\bOne}}/{\prob{E_i ^\bOne}}},
\]
and consequently
\[
\expect{\tilde \mu(\bOne) } -\tau
=  \expect{\frac{ \sum_{i=1}^n {\indic{E_i ^\bOne}}/{\prob{E_i ^\bOne}}\cdot Y_i(\bZero)}{ \sum_{i=1}^n {\indic{E_i ^\bOne}}/{\prob{E_i ^\bOne}}}}
= \expect{\frac{ \sum_{i=1}^n {\indic{E_i ^\bZero}}/{\prob{E_i ^\bZero}}\cdot Y_i(\bZero)}{ \sum_{i=1}^n {\indic{E_i ^\bZero}}/{\prob{E_i ^\bZero}}}}
= \expect{\tilde \mu(\bZero) }
\]
where the second equality is due to the symmetry of network exposure to treatment and control,
more specifically, the joint distribution of $\{ \indic{E_i ^\bOne} \}_{i=1}^n$ is the same as
that of $\{ \indic{E_i ^\bZero} \}_{i=1}^n$. 

Since $\expect{\tilde \mu(\bOne) } = \expect{\tilde \mu(\bZero) } + \tau$, then $\expect{\tilde \tau} =\expect{\tilde \mu(\bOne) } - \expect{\tilde \mu(\bZero) }= \tau$.
\end{proof}


\subsection*{Proof of \Cref{Thm:RingNetworkVariance}}

\begin{proof}

Note that in any oracle $k$-partition of a cycle, each cluster contains two nodes on the boundary.
Therefore, for any node $i$, the probability of being on the boundary of a random such partition
is $2k / n = o(1)$, and thus as $n \rightarrow \infty$,
it is almost surely between nodes in the same cluster and 
\begin{equation}   \label{Eq:ExpoProbRing}
\prob{E_i ^\bOne \mid \mathcal P} = \prob{E_i ^\bZero \mid \mathcal P} \rightarrow \frac{1}{2}.
\end{equation}

For any node pair $i$ and $j$, define their \emph{angle distance} as 
\[
\delta(\alpha_i, \alpha_j) \triangleq \min\{\lvert \alpha_i - \alpha_j \rvert, 2\pi - \lvert \alpha_i - \alpha_j \rvert\},
\]
a quantity in $[0, \pi]$ which is zero if and only if $i = j$.
Note that if $C_i = C_j$, \ie, if $i$ and $j$ belongs to a same cluster in a oracle $k$-partition, 
we must have $\delta(\alpha_i, \alpha_j) < \frac{2\pi}{k}$. On the contrary, if $\delta(\alpha_i, \alpha_j) < \frac{2\pi}{k}$,
the probability of them belonging to a same cluster in a random oracle $k$-partition is
$1 - \delta(\alpha_i, \alpha_j) / \frac{2\pi}{k}$. Therefore, for any node pair $i$ and $j$, we have
\begin{equation}   \label{Eq:ProbRingSameCluster}
\prob{C_i = C_j \mid \mathcal P} = \indic{ \delta(\alpha_i, \alpha_j) < 2\pi/k } \cdot \left[1 - \frac{k \delta(\alpha_i, \alpha_j)}{2\pi} \right].
\end{equation}

\xhdr{Variance under independent randomization}
We start with computing the joint exposure probabilities $\prob{E_i ^\bOne \cap E_j ^\bOne \mid \mathcal P}$ and 
$\prob{E_i ^\bOne \cap E_j ^\bZero \mid \mathcal P}$. Note that in the limit of $n \rightarrow \infty$, 
the probability of either node being on a boundary vanishes, and thus we have
\[
\prob{E_i ^\bOne \cap E_j ^\bOne \mid \bC} \rightarrow \left\{
\begin{array}{ll}
1/2  &  \text{if $C_i = C_j$}   \\
1/4  &  \text{if $C_i \neq C_j$}   
\end{array}
\right.,
\qquad
\prob{E_i ^\bOne \cap E_j ^\bZero \mid \bC} \rightarrow \left\{
\begin{array}{ll}
0  &  \text{if $C_i = C_j$}   \\
1/4  &  \text{if $C_i \neq C_j$}   
\end{array}
\right..
\]
By combining with \Cref{Eq:ProbRingSameCluster}, we have
\begin{equation}   \label{Eq:Joint11ExpoProbRing}
\prob{E_i ^\bOne \cap E_j ^\bOne \mid \mathcal P} \rightarrow \frac{1}{2} \cdot \prob{C_i = C_j \mid \mathcal P} + \frac{1}{4} \cdot \prob{C_i \neq C_j \mid \mathcal P}
= \left\{
\begin{array}{ll}
\frac{1}{2} - \frac{k \delta(\alpha_i, \alpha_j)}{8\pi}  &  \text{if $\delta(\alpha_i, \alpha_j) < \frac{2\pi}{k}$}   \\
\\
\frac{1}{4}  &  \text{if $\delta(\alpha_i, \alpha_j) \geq \frac{2\pi}{k}$}   
\end{array}
\right.
\end{equation}
and
\begin{equation} 
\prob{E_i ^\bOne \cap E_j ^\bZero \mid \mathcal P} \rightarrow \frac{1}{4} \cdot \prob{C_i \neq C_j \mid \mathcal P}
= \left\{
\begin{array}{ll}
\frac{k \delta(\alpha_i, \alpha_j)}{8\pi}  &  \text{if $\delta(\alpha_i, \alpha_j) < \frac{2\pi}{k}$}   \\
\\
\frac{1}{4}  &  \text{if $\delta(\alpha_i, \alpha_j) \geq \frac{2\pi}{k}$}   
\end{array}
\right.
.
\end{equation}

Now we compute the variance of the mean outcome HT estimators given these probabilities. Note that the variance, as given in \Cref{Eq:var-mean_outcome}, is equivalent to
\[
\var{\hat \mu(\bz)} = \frac{1}{n^2} \sum_{i=1}^n  \sum_{j = 1} ^n \left(\frac{\prob{E_i ^\bz \cap E_j ^\bz}}{\prob{E_i ^\bz} \prob{E_j ^\bz}}- 1 \right)Y_i(\bz)Y_j(\bz) 
\]
due to the fact that $\prob{E_i ^\bz \cap E_i ^\bz} = \prob{E_i ^\bz}$. Therefore,
\begin{eqnarray*}
 \var{\hat \mu(\bOne)} 
&=&\textstyle \frac{1}{n^2} \sum_{i=1}^n \sum_{j=1}^n \left(\frac{\prob{E_i ^\bOne \cap E_j ^\bOne}}{\prob{E_i ^\bOne}\prob{E_j ^\bOne}} - 1\right) Y_i(E_i ^\bOne)Y_j(E_j ^\bOne)
\\&=&\textstyle \frac{1}{n^2} \sum_{i=1}^n \sum_{j: \delta(\alpha_i, \alpha_j) < \frac{2\pi}{k}}\left(1 - \frac{k \delta(\alpha_i, \alpha_j)}{2\pi} + o(1)\right) (a + b \sin \alpha_i + \tau)(a + b \sin \alpha_j + \tau)
\\& & \textstyle + \frac{1}{n^2} \sum_{i=1}^n \sum_{j: \delta(\alpha_i, \alpha_j) \geq \frac{2\pi}{k}}\left( 4 \cdot \frac{1}{4} - 1 + o(1)\right) (a + b \sin \alpha_i + \tau)(a + b \sin \alpha_j + \tau)
\\&=&\textstyle \frac{1}{n^2} \sum_{i=1}^n \sum_{j: \delta(\alpha_i, \alpha_j) < \frac{2\pi}{k}}\left(1 - \frac{k \delta(\alpha_i, \alpha_j)}{2\pi}\right) (a + b \sin \alpha_i + \tau)(a + b \sin \alpha_j + \tau)
\\& & \textstyle + \frac{1}{n^2} \sum_{i=1}^n \sum_{j: \delta(\alpha_i, \alpha_j) \geq \frac{2\pi}{k}}\left( 4 \cdot \frac{1}{4} - 1\right) (a + b \sin \alpha_i + \tau)(a + b \sin \alpha_j + \tau) + o(1),
\end{eqnarray*}
where the third equality is due to the average of $n^2$ $o(1)$ terms being $o(1)$.
We then take a limit corresponding to Riemann integration and obtain:
\begin{eqnarray*}
& & \var{\hat \mu(\bOne)} 
\\&\to& \frac{1}{4\pi^2} \int_{0} ^{2\pi}(a + b \sin \alpha_i + \tau) \cdot \left(\int_{\alpha_i - \frac{2\pi}{k}} ^{\alpha_i + \frac{2\pi}{k}} \left(1 - \frac{k \lvert \alpha_i - \alpha_j \rvert}{2\pi}\right) (a + b \sin \alpha_j + \tau) d\alpha_j \right) d\alpha_i
\\&=& \frac{1}{4\pi^2} \int_{ 0} ^{2\pi}(a + b \sin \alpha_i + \tau) \cdot \left(\int_{0} ^{\frac{2\pi}{k}} \left(1 - \frac{k \delta}{2\pi}\right) (2a + 2\tau + b \sin(\alpha_i + \delta) + b \sin(\alpha_i - \delta)) d\delta \right) d\alpha_i
\\&=& \frac{1}{4\pi^2} \int_{ 0} ^{2\pi}(a + b \sin \alpha_i + \tau) \cdot \left(\int_{0} ^{\frac{2\pi}{k}} \left(1 - \frac{k \delta}{2\pi}\right) (2a + 2\tau + 2b \sin\alpha_i \cos\delta) d\delta \right) d\alpha_i
\\&=& \frac{1}{4\pi^2} \int_{ 0} ^{2\pi}(a + \tau + b \sin \alpha_i) \cdot \left((a + \tau) \cdot \frac{2\pi}{k} + \frac{bk(1-\cos(2\pi / k))}{\pi}\sin \alpha_i \right) d\alpha_i
\\&=& \frac{(a + \tau)^2}{k} + \frac{b^2 k (1 - \cos(2\pi / k))}{4 \pi^2}.
\end{eqnarray*}
Analogously, it can be shown that
\begin{eqnarray*}
\var{\hat \mu(\bZero)} &\rightarrow&  \frac{a^2}{k} + \frac{b^2 k (1 - \cos(2\pi / k))}{4 \pi^2}, \\
\cov{\hat \mu(\bOne), \hat \mu(\bZero)} &\rightarrow& \frac{a(a+\tau)}{k} +  \frac{b^2 k (1 - \cos(2\pi / k))}{4 \pi^2}.
\end{eqnarray*}
Consequently
\[
\var{\hat \tau} = \var{\hat \mu(\bOne)} + \var{\hat \mu(\bZero)} - 2 \cov{\hat \mu(\bOne), \hat \mu(\bZero)} \rightarrow  \frac{(2a+\tau)^2}{k} + \frac{b^2 k}{\pi^2} (1 - \cos({2\pi}/{k})).
\]

\xhdr{Variance under complete randomization}
We also first compute the joint exposure probabilities $\prob{E_i ^\bOne \cap E_j ^\bOne}$. 
If $C_i = C_j$ we have $\prob{E_i ^\bOne \cap E_j ^\bOne \mid C} = 1/2$ and $\prob{E_i ^\bOne \cap E_j ^\bZero \mid C} = 0$.
When $C_i \neq C_j$, \ie, nodes $i$ and $j$ belongs to different clusters, there are two scenarios:
the two clusters are assigned together and oppositely into treatment and control, 
or the two clusters are assigned independently.
Under the first scenario, which happens with probability $\frac{1}{k-1}$ conditional on $C_i \neq C_j$,
$E_i ^\bOne \cap E_j ^\bOne$ is not possible;
under the second scenario which happens with probability $\frac{k-2}{k-1}$ conditional on $C_i \neq C_j$,
$E_i ^\bOne \cap E_j ^\bOne$ happens when both clusters are assigned into treatment 
and thus the conditional probability is $1/4$ as $n \rightarrow \infty$.
Therefore, we have
\[
\prob{E_i ^\bOne \cap E_j ^\bOne \mid C_i \neq C_j} \rightarrow \frac{1}{k-1} \cdot 0 + \frac{k-2}{k-1} \cdot \frac{1}{4},
\]
and thus
\[
\prob{E_i ^\bOne \cap E_j ^\bOne} \rightarrow \frac{1}{2} \cdot \prob{C_i = C_j} + \frac{k-2}{4(k-1)} \cdot \prob{C_i \neq C_j}
= \left\{
\begin{array}{ll}
\frac{1}{2} - \frac{k \delta(\alpha_i, \alpha_j)}{8\pi}\cdot \frac{k}{k-1}  &  \text{if $\delta(\alpha_i, \alpha_j) < \frac{2\pi}{k}$}   \\
\\
\frac{1}{4}\cdot \frac{k-2}{k-1}  &  \text{if $\delta(\alpha_i, \alpha_j) \geq \frac{2\pi}{k}$} .  
\end{array}
\right.
\]
Similarly, one can show that
\[
\prob{E_i ^\bOne \cap E_j ^\bZero}
\rightarrow \left\{
\begin{array}{ll}
\frac{k \delta(\alpha_i, \alpha_j)}{8\pi}\cdot \frac{k}{k-1}  &  \text{if $\delta(\alpha_i, \alpha_j) < \frac{2\pi}{k}$}   \\
\\
\frac{1}{4}\cdot \frac{k}{k-1}  &  \text{if $\delta(\alpha_i, \alpha_j) \geq \frac{2\pi}{k}$} .  
\end{array}
\right.
\]

With these exposure probabilities, we have
\begin{eqnarray*}
& & \var{\hat \mu(\bOne)} 
\\ &=&\textstyle \frac{1}{n^2} \sum_{i=1}^n \sum_{j=1}^n \left(\frac{\prob{E_i ^\bOne \cap E_j ^\bOne}}{\prob{E_i ^\bOne}\prob{E_j ^\bOne}} - 1\right) Y_i(E_i ^\bOne)Y_j(E_j ^\bOne)
\\&=&\textstyle \frac{1}{n^2} \sum_{i=1}^n \left( \sum_{j: \delta(\alpha_i, \alpha_j) < \frac{2\pi}{k}}\left(1 - \frac{k \delta(\alpha_i, \alpha_j)}{2\pi} \cdot \frac{k}{k-1}\right) (a + b \sin \alpha_i + \tau)(a + b \sin \alpha_j + \tau) \right.
\\& &\qquad\qquad\quad \textstyle + \left. \sum_{j: \delta(\alpha_i, \alpha_j) \geq \frac{2\pi}{k}}\left(\frac{k-2}{k-1} - 1\right) (a + b \sin \alpha_i + \tau)(a + b \sin \alpha_j + \tau) \right) + o(1)
\\& \to & \frac{1}{4\pi^2} \int_{ 0} ^{2\pi}(a + b \sin \alpha_i + \tau) \cdot \left[\int_{0} ^{\frac{2\pi}{k}} \left(1 - \frac{k \delta}{2\pi}\cdot \frac{k}{k-1}\right) (2a + 2\tau + b\sin(\alpha_i + \delta) + b\sin(\alpha_i - \delta)) d\delta 
\right.
\\& &\qquad\qquad\qquad\qquad\qquad\qquad \left. +
\int_{\frac{2\pi}{k}} ^{\pi} \left(-\frac{1}{k-1}\right) (2a + 2\tau + b\sin(\alpha_i + \delta) + b\sin(\alpha_i - \delta)) d\delta
\right] d\alpha_i
\\& = & \frac{1}{4\pi^2} \int_{ 0} ^{2\pi}(a + b \sin \alpha_i + \tau) \cdot \left[\int_{0} ^{\frac{2\pi}{k}} \left(1 - \frac{k \delta}{2\pi}\cdot \frac{k}{k-1}\right) (2a + 2\tau + 2b\sin\alpha_i \cos \delta) d\delta 
\right.
\\& &\qquad\qquad\qquad\qquad\qquad\qquad \left. +
\int_{\frac{2\pi}{k}} ^{\pi} \left(-\frac{1}{k-1}\right) (2a + 2\tau + 2b\sin\alpha_i \cos\delta) d\delta
\right] d\alpha_i
\\&=& \frac{1}{4\pi^2} \int_{ 0} ^{2\pi}(a + \tau + b \sin \alpha_i) \cdot \left[
(a +\tau)\frac{2 \pi (k-2)}{k(k-1)} - \frac{2b}{k-1} \sin(2\pi/k) \sin\alpha_i + \frac{bk(1-\cos(2\pi / k))}{\pi}\cdot \frac{k}{k-1} \sin \alpha_i \right.
\\& &\qquad\qquad\qquad\qquad\qquad\qquad \left.
-(a +\tau)\frac{2 \pi (k-2)}{k(k-1)} + \frac{2b}{k-1} \sin(2\pi/k) \sin\alpha_i\right] d\alpha_i
\\&=& \frac{1}{4\pi^2} \int_{ 0} ^{2\pi}(a + \tau + b \sin \alpha_i) \cdot \frac{bk(1-\cos(2\pi / k))}{\pi}\cdot \frac{k}{k-1} \sin \alpha_i d\alpha_i
\\&=& \frac{b^2 k^2 (1 - \cos(2\pi / k))}{4 \pi^2(k-1)}.
\end{eqnarray*}
Analogously, it can be shown that
\begin{eqnarray*}
\var{\hat \mu(\bZero)} &\rightarrow&  \frac{b^2 k^2 (1 - \cos(2\pi / k))}{4 \pi^2(k-1)}, \\
\cov{\hat \mu(\bOne), \hat \mu(\bZero)} &\rightarrow& \frac{b^2 k^2 (1 - \cos(2\pi / k))}{2 \pi^2(k-1)}.
\end{eqnarray*}
Consequently
\[
\var{\hat \tau} = \var{\hat \mu(\bOne)} + \var{\hat \mu(\bZero)} - 2 \cov{\hat \mu(\bOne), \hat \mu(\bZero)} \rightarrow  \frac{b^2 k^2 (1 - \cos(2\pi / k))}{\pi^2(k-1)}.
\]

\xhdr{Variance with increasing number of clusters}
In the end we show the variance when the number of clusters $k$ increases.
Due the fact that $1 - \cos(x) \sim x^2 / 2$ as $x \rightarrow 0$, under the independent randomization scheme, we have
\[
\var{\hat \tau} \rightarrow  \frac{(2a+\tau)^2}{k} + \frac{b^2 k}{\pi^2} (1 - \cos({2\pi}/{k})) 
\sim \frac{(2a+\tau)^2}{k} + \frac{b^2 k}{\pi^2} \cdot \frac{2 \pi^2}{k^2} 
= [(2a+\tau)^2 + 2b^2]\cdot \Theta(1/k).
\]
Under the complete randomization scheme, we have
\[
\var{\hat \tau} \rightarrow \frac{b^2k^2}{\pi^2(k-1)} (1 - \cos({2\pi}/{k})) 
\sim  \frac{b^2k^2}{\pi^2(k-1)} \cdot \frac{2\pi^2}{k^2}
=2b^2 \cdot \Theta(1/k).
\]

\end{proof}

\end{document}